\providecommand{\tabularnewline}{\\}
\providecommand{\algorithmname}{Algorithm}
\DeclareRobustCommand{\lyxdeleted}[3]{{\color{lyxdeleted}\lyxsout{#3}}}
\DeclareRobustCommand{\lyxsout}[1]{\ifx\\#1\else\sout{#1}\fi}
\theoremstyle{plain}
\newtheorem{prop}{\protect\propositionname}
\theoremstyle{plain}
\newtheorem{thm}{\protect\theoremname}
\theoremstyle{plain}
\newtheorem{lem}{\protect\lemmaname}
\author{
Hao~Sun,~\IEEEmembership{Member,~IEEE},
Junting~Chen,~\IEEEmembership{Member,~IEEE}

\thanks{Manuscript submitted June 17, 2022, revised October 9,2022, November 21, 2022, and accepted December 9, 2022.}

\thanks{This work was supported in part by the National Key R\&D Program of China with grant No. 2018YFB1800800, 
by National Science Foundation of China No. 92067202, No. 62102343, and No. 62171398, 
by Guangdong Research Project No. 2019QN01X895, 
and by the Shenzhen Science and Technology Program No. KQTD20200909114730003 and JCYJ20210324134612033, 
and by Shenzhen Institute of Artificial Intelligence.}
\thanks{H.~Sun is with the Future Network of Intelligence Institute (FNii) and 
School of Science and Engineering, The Chinese University of Hong Kong, Shenzhen,
Guangdong 518172, China (email: haosun1@link.cuhk.edu.cn).}
\thanks{J.~Chen is with the Future Network of Intelligence Institute (FNii) and 
School of Science and Engineering, The Chinese University of Hong Kong, Shenzhen,
Guangdong 518172, China (email: juntingc@cuhk.edu.cn).}}
\newcommand{\newac}{\newacronym}
\newcommand{\ac}{\gls}
\newcommand{\Ac}{\Gls}
\renewcommand{\lyxdeleted}[3]{{\color{lyxdeleted}{}}}
\providecommand{\lemmaname}{Lemma}
\providecommand{\propositionname}{Proposition}
\providecommand{\theoremname}{Theorem}
\begin{document}
\title{Propagation Map Reconstruction via Interpolation Assisted Matrix Completion}
\maketitle
\begin{abstract}
Constructing a propagation map from a set of scattered measurements
finds important applications in many areas, such as localization,
spectrum monitoring and management. Classical interpolation-type methods
have poor performance in regions with very sparse measurements. Recent
advance in matrix completion has the potential to reconstruct a propagation
map from sparse measurements, but the spatial resolution is limited.
This paper proposes to integrate interpolation with matrix completion
to exploit both the spatial correlation and the potential low rank
structure of the propagation map. The proposed method first enriches
matrix observations using interpolation, and develops the statistics
of the interpolation error based on a local polynomial regression
model. Then, two uncertainty-aware matrix completion algorithms are
developed to exploit the interpolation error statistics. It is numerically
demonstrated that the proposed method outperforms Kriging and other
state-of-the-art schemes, and reduces the \ac{mse} of propagation
map reconstruction by $10$\%\textendash $50$\% for a medium to large
number of measurements.
\end{abstract}

\begin{IEEEkeywords}
Propagation map, interpolation, matrix completion, local polynomial
regression, asymptotic analysis.
\end{IEEEkeywords}

\section{Introduction}

\mysubsubnote{1. What is the big picture? What is the problem? Why is it important? Why is it hard? (1-2 paragraphs)
}Sensing and reconstructing a propagation map find important applications
in many areas. In source localization, constructing the energy propagation
map may help identify the number of sources and localize the source
in harsh environment \cite{ZhuPengLi:J21,CheMit:J17}. Constructing
power spectrum maps, radio maps or coverage maps, provides assistance
in wireless network planning, spectrum allocation, and interference
management \cite{RomKimSeu:J17,MoxHuaXuj:J19,ChoValDra:C16}. In antenna
design, the interpolation and reconstruction of near field electromagnetic
provides useful insights to understand the antenna radiation pattern
which are used in testing and examining the performance and reliability
of wireless devices \cite{FucLeCMig:J17,AusAndNev:J18}. Applications
of propagation map reconstruction are also found in bio-medical studies
\cite{SatWanSun:J22,KeHuaCui:J21} and environment science \cite{CheLiWan:J20,LiuWuDua:J17}.

It is challenging to reconstruct a propagation map from a set of sparse
measurements due to the lack of accurate models to describe the propagation
map. The problem has been studied for more than two decades, but it
is still under active research. Recently developed data-driven approaches
for propagation map reconstruction include Kriging \cite{SatKoyFuj:J17,YakAleTom:J07},
kernel based method \cite{TegRomRam:J19,BazGia:J13,XuZhaZha:C21,HamBer:C17},
dictionary learning \cite{SatWanSun:J22,KimGia:C13}, matrix completion
\cite{BalCohAfa:J22,MigMarDon:J11,SunChe:C21,ZhaFuWan:J20} and deep
learning \cite{WanWanMao:C18,TegYveRom:J21,LevRonYap:J21,ShrFuHong:J22}.
Kriging method predicts the signal strength at a given location by
computing a weighted average of the measured signal strengths in the
neighborhood of the location to be predicted. The weights are obtained
through minimizing the Kriging covariance. However, the performance
of the Kriging method is sensitive with the choice of the parametric
form of the {\em semi-variogram} model in Kriging. Kernel methods
\cite{TegRomRam:J19,BazGia:J13,XuZhaZha:C21,HamBer:C17} aim at transforming
linearly inseparable data to linearly separable one using kernel functions
as a proxy of the propagation map. For example, the \ac{rss} at each
location can be estimated using a weighted average of different radial
basis function \cite{HamBer:C17} where the parameters and weights
are jointly optimized through an alternating minimization method.
Kriging and kernel based methods rely on the assumption that the propagation
map is locally smooth, and therefore, the propagation map can be interpolated
and reconstructed using nearby measurements. It is not surprising
that the reconstruction performance will deteriorate in regions where
measurements are too sparse.

In contrast to interpolation-type methods, matrix completion approaches
try to exploit the global structure of the propagation map. Specifically,
the propagation map is discretized into grid cells and the signal
strength measurements are arranged into a sparse matrix according
to the measurements locations. The matrix is then completed using
compressed sensing algorithms \cite{BalCohAfa:J22,MigMarDon:J11,SunChe:C21,ZhaFuWan:J20}.
These methods are based on the assumption that the matrix representation
of a 2D propagation map has a low rank property. Exploiting such a
global structure of the propagation map, matrix completion methods
improve the reconstruction performance in regions where measurements
are very sparse. However, these methods have poor performance in reconstructing
high resolution propagation map, because the finer the resolution,
the lower percentage of the number of observed entries, and eventually,
the matrix completion algorithm may fail if a complete row or column
of the matrix is not observed.

Recent advance also attempts to develop deep learning techniques for
propagation map reconstruction. In \cite{WanWanMao:C18}, a deep Gaussian
Process is used to model the relationship between \Ac{rss} measurement
values and their corresponding locations. In \cite{TegYveRom:J21},
a fully convolutional deep completion autoencoder architecture is
developed to learn the spatial structure of relevant propagation phenomena,
and a UNet structure is employed in \cite{LevRonYap:J21} for radio
map reconstruction. However, these methods require a large amount
of data to train the neural network model, and the generalization
capability and analytical insights of deep learning approaches are
still not very clear.

\mysubsubnote{3. What is the specific problem to be studied in the paper? (1/2  1 paragraph) -Such specific problem should be relevant to addressing the deficiencies of the state-of-the-art discussed above. Consider to sketch the story by answering these questions: -  What are the fundamental questions relevant to the problem?  - How answering to these questions can impact the design, architecture, and operation? - What is the goal of this paper? (** The goal is usually a high level descriptive objective in one sentence.)}

This paper proposes to {\em integrate} spatial interpolation with
matrix completion for propagation map reconstruction. An interpolation
assisted matrix completion framework is thus developed. Such an integration
is highly non-trivial due to the following two challenges: (1) how
to integrate the two methods such that the integrated approach works
better than both of the methods applied alone; and (2) how to optimize
the parameters for the integrated approach. Specifically, we discretize
the area of interest into grid cells and form a sparse matrix using
a windowing method, where only the grid cells that have sufficient
measurements within a radius of $b$ will be locally constructed via
interpolation. Our preliminary results in \cite{SunChe:C22} revealed
good performance of this strategy under a handpicked parameter $b$.
However, it was not clear how to optimize $b$. The key challenge
is to optimize the window size $b$, which serves as a bridge between
the interpolation and the matrix completion. To achieve such a goal,
we first adopt a local polynomial regression model to estimate the
selected entries of the matrix, and then, we derive analytical results
to characterize the error of the interpolation. The window size $b$
can therefore be optimized to minimize the \ac{mse} of the interpolation.
Finally, we develop two new matrix completion approaches to construct
the propagation map by leveraging the knowledge of uncertainty from
the local interpolation step at the previous stage.

Our experiments found that by combining interpolation with matrix
completion, the accuracy of propagation map reconstruction is substantially
improved. Specifically, the reconstruction \ac{mse} can be reduced
by $10$\%\textendash $50$\% from Kriging and other state-of-the-art
schemes for a medium to large number of measurements, which translates
to a saving of nearly half of the sensor measurements to achieve the
same MSE. We demonstrate the application of the reconstructed propagation
map to \Ac{rss}-based source localization, where the root mean squared
error (RMSE) of localization is reduced by more than $50$\% from
a \ac{wcl} \cite{WanUrrHanCab:J11} baseline.

\mysubsubnote{5. What is the contribution of the paper? (1/2  1 paragraph) 	- What is the high-level contribution? (1 sentence] 	- What are we advocating?  	- What are the interesting findings?  	- What are the key contributions? (a few bullet points) For example: We develop a framework for ...; We establish foundation for ...; We determine the fundamental limits of ...; We introduce the concepts of ...; We put forth the notion of ...; We propose a new model ...;  We characterize ...; We design ...; We analyze...; We derive ...; We prove that ...; We show that ...; We quantify the ...}

To summarize, the following contributions are made:
\begin{itemize}
\item We develop a windowing-based integrated interpolation and matrix completion
framework for propagation map reconstruction, where the window size
balances the contribution between the interpolation and the matrix
completion.
\item Based on two local polynomial regression models, we analyze the moments
and the asymptotic distribution of the interpolation error. In addition,
we develop a minimum \ac{mse} approach for adjusting the window size
to optimize the integrated interpolation and matrix completion framework.
\item We develop two uncertainty-aware matrix completion methods to integrate
the construction from the local interpolation. Our numerical results
reveal that the proposed integrated approaches beat the conventional
ones with substantial improvement in the accuracy of propagation map
reconstruction.
\end{itemize}

The rest of the paper is organized as follows. Section \ref{sec:System-model}
establishes the propagation reconstruction model. Section \ref{sec:Local-Polynomial-Regression}
develops a local polynomial regression method to construct a sparse
matrix, analyzes the interpolation error and proposes the window size
selection approach. Section \ref{sec:global reconstruction} numerically
verifies the low rank property and proposes the uncertain-aware matrix
completion scheme. Numerical results are presented in Section \ref{sec:Numerical-Results}
and conclusion is given in Section \ref{sec:Conclusion}.

\emph{Notation:} Vectors are written as bold italic letters $\bm{x}$
and matrices as bold capital italic letters $\bm{X}$. For a matrix
$\bm{X}$, $X_{ij}$ denotes the entry in the $i$th row and $j$th
column of $\bm{X}$. The notation $\bm{c}_{ij}$ denotes the center
location of the $(i,j)$th grid, $o(x)$ means $\text{lim}_{x\to0}o(x)/x\rightarrow0$,
$O(x)$ means $|O(x)|/x\leq C$, for all $x>x_{0}$ with $C$ and
$x_{0}$ are positive real numbers, $\text{diag}(\bm{X})$ represents
a column vector whose entries are the diagonal elements of matrix
$\bm{X}$, and $\text{diag}(\bm{x})$ represents a diagonal matrix
whose diagonal elements are the entries of vector $\bm{x}$. Symbol
$\mathbb{E}\left\{ \cdot\right\} $ and $\mathbb{V}\left\{ \cdot\right\} $
denote expectation and variance separately.

\section{System model\label{sec:System-model}}

\subsection{Reconstruction Model}

\mysubsubnote{- Propagation field model due to one (or more) source(s). (PS: perhaps we need to handle the case where the propagation field consists of 2-3 sources.)}

Consider a propagation field that is excited by $S$ sources located
at $\bm{s}_{k}\in\mathcal{D}$, $k=1,2,\dots,S$, in an area $\mathcal{D}\subset\mathbb{R}^{2}$.
The signal emitted from the sources is detected by $M$ sensors with
known locations $\bm{z}_{m}\in\mathbb{R}^{2}$, $m=1,2,\dots,M$,
randomly deployed in $\mathcal{D}$. The propagation map to be reconstructed
is modeled as 
\begin{equation}
\rho(\bm{z})\triangleq\sum_{k=1}^{S}g_{k}(d(\bm{s}_{k},\bm{z}))+\zeta(\bm{z})\qquad\bm{z}\in\mathcal{D}\label{eq:model-propagation-field}
\end{equation}
where $d(\bm{s},\bm{z})=\|\bm{s}-\bm{z}\|_{2}$ describes the distance
between a source at $\bm{s}$ and a sensor at $\bm{z}$, $g_{k}(d)$
describes the propagation function from the $k$th source in terms
of the propagation distance $d$, and the term $\zeta(\bm{z})$ is
a random component that captures the shadowing which is assumed to
have spatial correlations.

The strength of the signal received by the $m$th sensor is given
by 
\begin{equation}
\gamma_{m}=\rho(\bm{z}_{m})+\epsilon_{m}\label{eq:model-measurement}
\end{equation}
where $\epsilon_{m}$ is a random variable with zero mean and variance
$\sigma^{2}$ to model the measurement noise. The goal of this paper
is to reconstruct $\rho(\bm{z})$ based on $M$ RSS measurements $\{(\bm{z}_{m},\gamma_{m})\}$.

\mysubsubnote{- Field reconstruction model and elevation metric, i.e., state that the goal is to reconstruct a propagation field with N by N resolution (N is given, and thus, we don't need to discuss the choice of N). Thus, we need to bring out the discretizaiton, matrix formation, etc.}

For a given spatial resolution $N\times N$ for the propagation map
reconstruction, we consider to discretize the target area $\mathcal{D}$
into $N$ rows and $N$ columns that results in $N^{2}$ grid cells.
Let $\bm{c}_{ij}\in\mathcal{D}$ be the center location of the $(i,j)$th
grid cell, and $\bm{H}$ be a matrix representation of the propagation
map $\rho(\bm{z})$, where the $(i,j)$th entry is defined as $H_{ij}=\rho(\bm{c}_{ij})$.
As a result, the objective is to estimate a matrix $\bm{\bar{H}}$
based on the $M$ RSS measurements, such that the squared error $||\bm{\bar{H}}-\bm{H}||_{F}^{2}=\sum_{i,j}(\bar{H}_{ij}-\rho(\bm{c}_{ij}))^{2}$
is as low as possible.\mysubsubnote{make this consistent with your numerical results: whether you present the performance in terms of the total squared error or mean squared error.}

We shall highlight that the parametric forms of the propagation models
$g_{k}(d)$ are \emph{unknown}, except that $g_{k}(d)$ are believed
to be smooth and decrease in distance $d$. Thus, model (\ref{eq:model-propagation-field})
tends to have a low rank structure. The intuition is that the path
gain is usually dominated by the propagation distance $d(\bm{s},\bm{z})$
between the source location $\bm{s}$ and the measurement location
$\bm{z}$, and hence, there is a hidden homogeneity in all directions
from $\bm{s}$. In addition, the statistics of the shadowing $\zeta(\bm{z})$
is also unknown. Therefore, classical parametric methods fail to apply
here. On the other hand, conventional interpolation-based approaches
fail to exploit the global structure of $g_{k}(d)$. For example,
$g_{k}(d(\bm{s}_{k},\bm{z}_{1}))$ may equal to $g_{k}(d(\bm{s}_{k},\bm{z}_{2}))$
despite $\bm{z}_{1}$ and $\bm{z}_{2}$ being possibly far apart if
the distances are equal, \emph{i.e.}, $d(\bm{s}_{k},\bm{z}_{1})=d(\bm{s}_{k},\bm{z}_{2})$.
Such a property implies that local interpolation approaches are strictly
sub-optimal, and one should connect the entire set of measurements
to improve propagation map reconstruction at a global scale.

\subsection{Interpolation Assisted Matrix Completion\label{subsec:Field-Reconstruction-via}}

Denote $\bm{H}^{(k)}\in\mathbb{R}^{N\times N}$ as the propagation
matrix associated with the $k$th source, with the $(i,j)$th entry
given by $H_{i,j}^{(k)}=g_{k}(d(\bm{s}_{k},\bm{c}_{ij}))$. Then,
$\bm{H}=\sum_{k=1}^{S}\bm{H}^{(k)}+\bm{\zeta}.$ As will be numerically
evaluated in Section \ref{subsec:lowrank simulation}, $\sum_{k=1}^{S}\bm{H}^{(k)}$
is likely to be low rank, and so is $\sum_{k=1}^{S}\bm{H}^{(k)}+\bm{\zeta}$
due to the spatial correlation. As a result, propagation map reconstruction
naturally leads to a sparse matrix completion problem.

\subsubsection{Constructing a Sparse Observation Matrix $\hat{\bm{H}}$}

A straight-forward approach to form a sparse matrix is to assign the
measurement $\gamma_{m}$ to $\hat{H}_{ij}$ if the $m$th sensor
$\bm{z}_{m}$ locates in the $(i,j)$th grid centered at $\bm{c}_{ij}$
and is the closest one to $\bm{c}_{ij}$. However, the number of measurements
$M$ could be substantially smaller than the number of grid cells
$N^{2}$ under high resolution reconstruction, resulting in an overly
sparse matrix that is difficult to complete. Moreover, there could
be significant discretization error since the sensor location $\bm{z}_{m}$
may be away from the grid center $\bm{c}_{ij}$.

We propose to interpolate a subset of grid cells $\Omega$ to enrich
the observations for matrix completion. There are two approaches.
(1) Uniform sampling: we form the observation set $\Omega$ by sampling
$CN\text{log}^{2}(N)$ grid cells uniformly at random, where the parameter
$C$ can be empirically chosen to guarantee a sufficient number of
observations for ensuring the identifiability of the matrix completion
\cite{CanPla:J10}. (2) Sensor-aware sampling: we form an observation
set of grid cells where there are sufficient measurements nearby.
Specifically, given a window size parameter $b$ and a measurement
number $M_{0}$, we define an observation set $\Omega$ as a subset
of grids $(i,j)$ such that there are at least $M_{0}$ sensors locating
within a radius of $b$ from the grid center $\bm{c}_{ij}$, \emph{i.e.},
\begin{equation}
\Omega=\Big\{(i,j):\sum_{m=1}^{M}\mathbb{I}\{\|\bm{z}_{m}-\bm{c}_{ij}\|_{2}<b\}\geq M_{0}\Big\}\label{eq:model-sparse-observation}
\end{equation}
where $\mathbb{I}\{A\}=1$ if condition $A$ is satisfied, and $\mathbb{I}\{A\}=0$
otherwise.

As a result, one only estimates $\hat{H}_{ij}$ for $(i,j)\in\Omega$
using the nearby measurements as illustrated in Fig.~\ref{fig:(a)-Propagation-map}.
The identifiability issue for matrix completion under the sensor-aware
sampling will be discussed in Section \ref{subsec:Identifiability-of-the}.

\subsubsection{Formulating the Matrix Completion Problem\label{subsec:Formulating-the-Matrix}}

One representative approach is the \ac{als} algorithm \cite{TanNeh:J11,DavMar:J12},
which minimizes $||\bm{y}-\mathcal{A}(\bm{X})||_{2}$, subject to
a product model $\bm{X}=\bm{LR}$ for $\bm{L}\in\mathbb{R}^{N\times p}$
and $\bm{R}\in\mathbb{R}^{p\times N}$, where $\bm{y}\in\mathbb{R}^{M}$
is the measurement vector and $\mathcal{A}:\mathbb{R}^{N\times N}\to\mathbb{R}^{M}$
is a sensing operator that maps the $(i,j)$th element of the matrix
$\bm{X}$ to the $m$th element of a vector to be compared with $\bm{y}$
\cite{FouRau:13}. Another possibility is to minimize the nuclear
norm $\|\bm{X}\|_{*}$ of a matrix $\bm{X}$ subject to observation
noise $|X_{ij}-\hat{H}_{ij}|\leq\epsilon$. It will be shown later
that these existing approaches do not perform well, since they fail
to exploit the fact that the observed entries may have different uncertainty
according to the methods that form the matrix observations $\hat{H}_{ij}$.
\begin{figure}
\subfigure{\includegraphics[width=0.5\columnwidth]{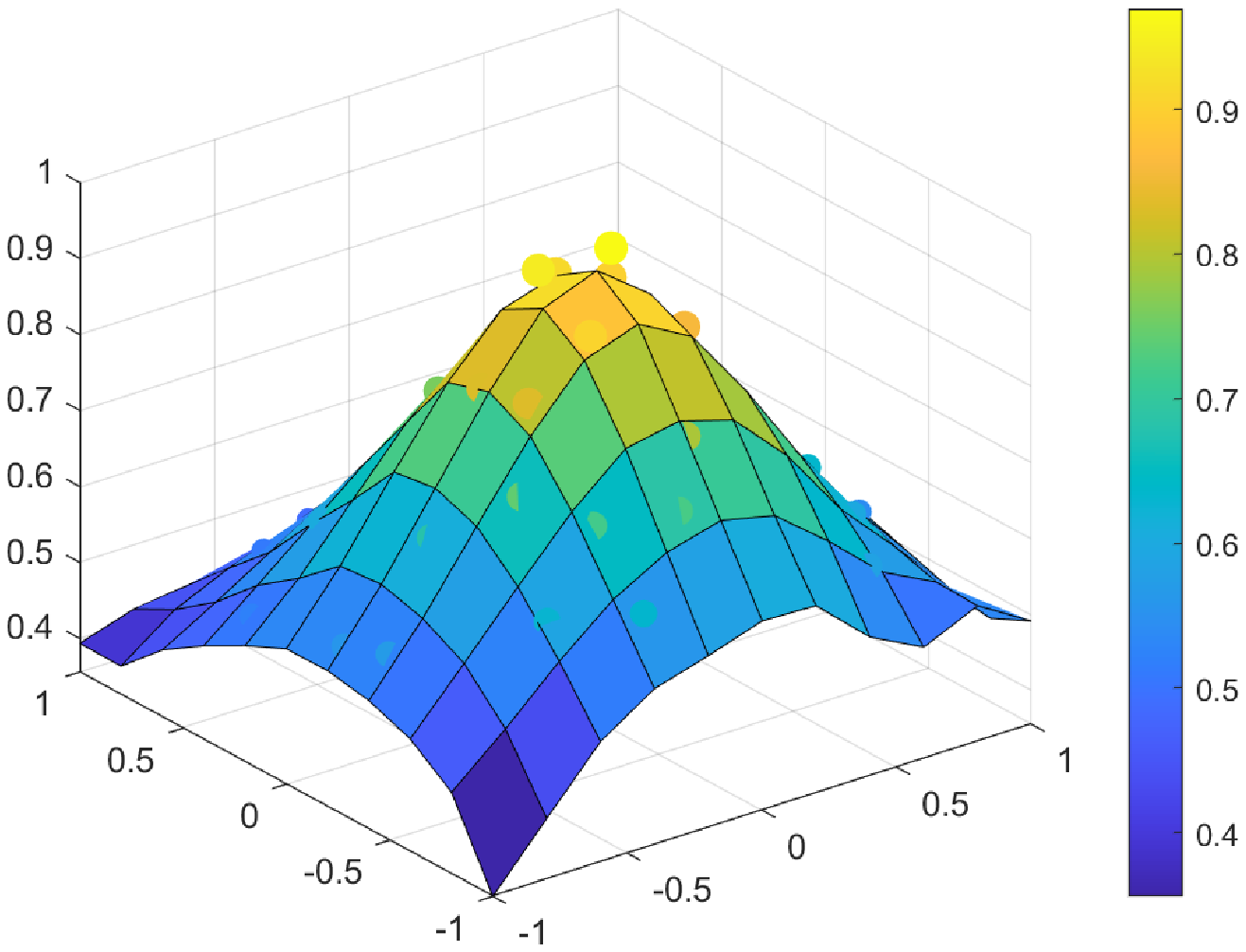}}\subfigure{\includegraphics[width=0.5\columnwidth]{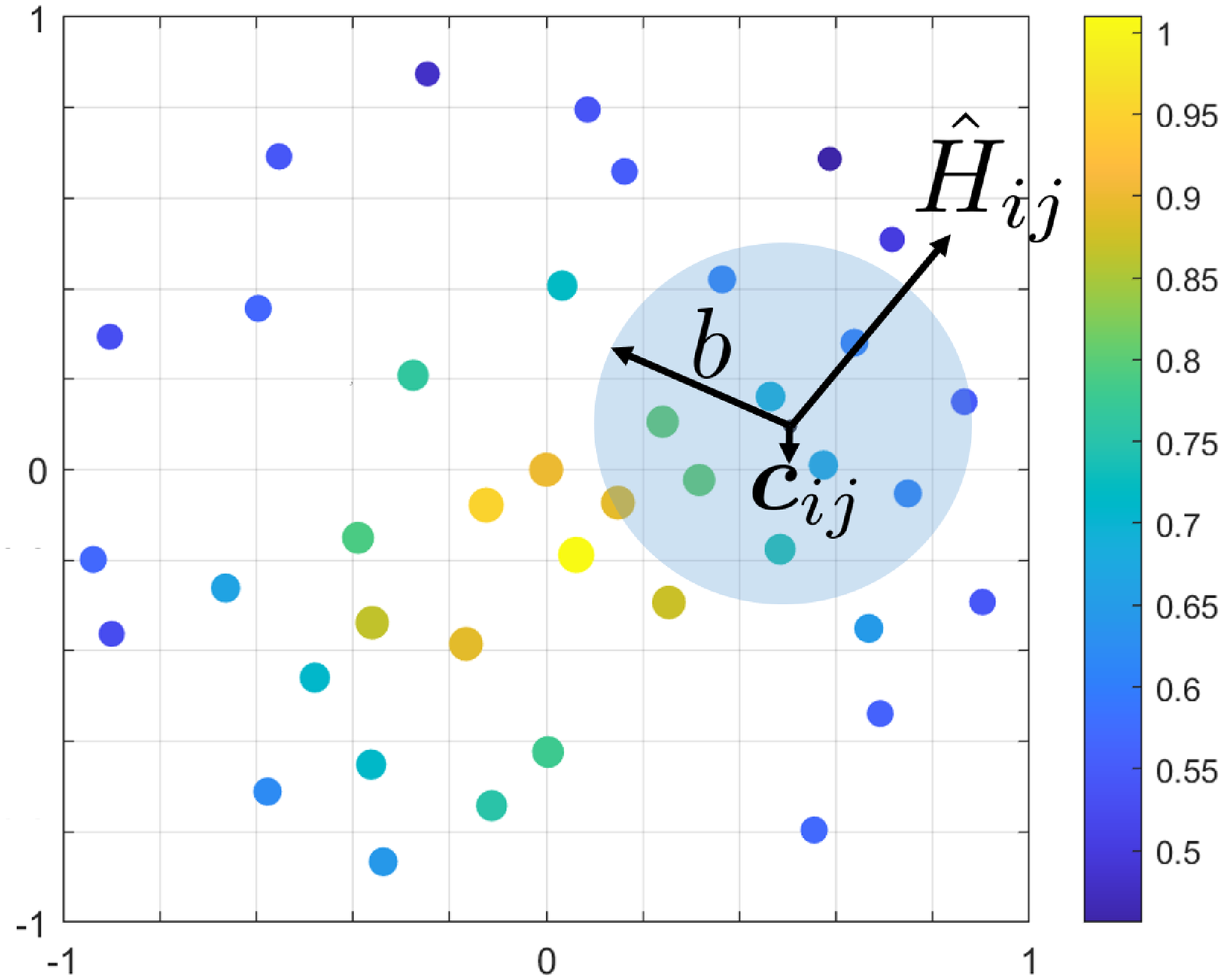}}

\qquad  \ \ \,\, \,\,\,\,\,\, \,\quad \footnotesize (a) \qquad \qquad \qquad \qquad \qquad \qquad \qquad  \, \footnotesize(b)

\caption{\label{fig:(a)-Propagation-map}(a) Propagation map reconstruction
based on local polynomial regression (b) Sensor measurements (colored
dots) within a range of $b$ (grey region) from the grid center $\bm{c}_{ij}$
are used to estimate $\hat{H}_{ij}$. \mysubsubnote{I would prefer to illustrate the idea of local polynomial regression using a 1-D example. The 2-D example looks complicated.}}
\end{figure}

\section{Local Reconstruction via \\Polynomial Regression\label{sec:Local-Polynomial-Regression}}

It is crucial to determine the window size $b$ in the proposed integrated
approach. Intuitively, if the observation set $\Omega$ is constructed
by sampling uniformly at random, the reconstruction performance mainly
depends on the quality of the interpolation for the entries in $\Omega$.
The interpolation performance depends on the parameter $b$ as will
be shown later. If $\Omega$ is constructed by a sensor-aware approach,
then for a small $b$, matrix completion dominates the performance,
since many of entries are missing; and for a large $b$, interpolation
dominates, since the local interpolation fully constructs the matrix.

In this section, we exploit local polynomial regression\footnote{Note that, in this paper, although we derive the results based on
a local polynomial regression model for the local interpolation, the
same methodology can be extended to the integration with other interpolation
methods, such as Kriging.} to construct matrix observations $\hat{H}_{ij}$. Then, we derive
analytical results to characterize the construction error in terms
of the parameter $b$, and optimize for $b$.

\subsection{Interpolation based on Local Polynomial Regression}

Consider to employ a \emph{parametric model} $\hat{\rho}(\bm{z};\bm{c})$
to locally approximate the propagation map $\rho(\bm{z})$ in the
neighborhood of $\bm{c}$. When the area of interest around $\bm{c}$
is small enough compared with the variation of the propagation map
$\rho(\bm{z})$, one may employ a zeroth order model 
\begin{equation}
\hat{\rho}(\bm{z};\bm{c})=\alpha(\bm{c})\label{eq:zero-th}
\end{equation}
or a first order model 
\begin{equation}
\hat{\rho}(\bm{z};\bm{c})=\alpha(\bm{c})+\bm{\beta}^{\text{T}}(\bm{c})(\bm{z}-\bm{c})\label{eq:first order}
\end{equation}
to approximate $\rho(\bm{z})$. The coefficients $\alpha(\bm{c})$
and $\bm{\beta}(\bm{c}\text{)}$ depend on the location $\bm{c}$,
and can be estimated using the measurements $\{(\bm{z}_{m},\gamma_{m})\}$
obtained near $\bm{c}$. The extension to a higher order model is
straight-forward.

Specifically, let $\bm{\theta}=\{\alpha(\bm{c}),\bm{\beta}(\bm{c}),...\}$
be the set of coefficients for the polynomial $\hat{\rho}(\bm{z};\bm{c},\bm{\theta})$.
Then, $\bm{\theta}$ can be estimated using distance-weighted least-squares
regression: 
\begin{equation}
\text{\ensuremath{\underset{\bm{\theta}}{\text{minimize}}}}\ \sum_{m=1}^{M}\big(\gamma_{m}-\hat{\rho}(\bm{z}_{m};\bm{c},\bm{\theta})\big)^{2}K\left(\frac{\bm{z}_{m}-\bm{c}}{b}\right)\label{eq:LS}
\end{equation}
where $b$ is the window parameter as introduced in (\ref{eq:model-sparse-observation})
and $K(\bm{u})$ is a two dimensional kernel function that satisfies
the following conditions:
\begin{itemize}
\item[(i)] $K(\bm{u})$ is a non-negative, symmetric, and bounded \ac{pdf}
satisfying 
\begin{equation}
\int K(\bm{u})d\bm{u}=\iint K(\bm{u})du_{x}du_{y}=1\label{eq:integral 1}
\end{equation}
\begin{equation}
\iint K(\bm{u})^{p_{0}}u_{x}^{p_{1}}u_{y}^{p_{2}}du_{x}du_{y}=0\label{eq:integral 0}
\end{equation}
for $p_{0}\in\{1,2\}$ and $p_{1}$, $p_{2}$ $\in\{1,3\}$.
\item[(ii)] $K(\bm{u})$ has a compact support 
\begin{equation}
\mathcal{C}=\left\{ \text{\ensuremath{\bm{u}}}\in\mathbb{R}^{2}:||\bm{u}||_{2}<1\right\} \label{eq:support}
\end{equation}
 where $K(\bm{u})=0$ for $\bm{u}\notin\mathcal{C}$.
\end{itemize}
Thus, the regression problem in (\ref{eq:LS}) only considers the
measurements that in a radius $b$ from $\bm{c}$ and assigns a high
weight if the measurement location $\bm{z}_{m}$ is close to $\bm{c}$,
and a low weight if $\bm{z}_{m}$ is far away from $\bm{c}$. For
instance, the Epanechnikov kernel \cite{Fan:b96} $K(\bm{u})=\max\{0,\frac{3}{4}(1-||\bm{u}||^{2})\}$
and the truncated Gaussian kernel $K(\bm{u})=\frac{50}{17\sqrt{2\pi}}\text{exp}(-||\bm{u}||^{2}/2)\text{\ensuremath{\mathbb{I}}}\{||\bm{u}||^{2}\leq1\}$
satisfy the above conditions.

As a result, for a given location $\bm{c}$, the local parameter $\hat{\alpha}(\bm{c})$
under a zeroth order model is found as the solution to 
\begin{equation}
\text{\ensuremath{\underset{\alpha}{\text{minimize}}}}\text{\ }\sum_{m=1}^{M}\big(\gamma_{m}-\alpha(\bm{c})\big)^{2}K\left(\frac{\bm{z}_{m}-\bm{c}}{b}\right)\label{eq:LS-0-order}
\end{equation}
whereas, the local parameter $(\hat{\alpha}(\bm{c}),\hat{\bm{\beta}}(\bm{c}))$
under a first order model is found as the solution to
\begin{equation}
\text{\ensuremath{\underset{\alpha,\bm{\beta}}{\text{minimize}}}\ }\sum_{m=1}^{M}\big(\gamma_{m}-\alpha(\bm{c})-\bm{\beta}^{\text{T}}(\bm{c})(\bm{z}_{m}-\bm{c})\big)^{2}K\left(\frac{\bm{z}_{m}-\bm{c}}{b}\right).\label{eq:LS-1-order}
\end{equation}

There exists closed-form solutions.
\begin{prop}[Interpolation]
 Given a location $\bm{c}$, the solution to (\ref{eq:LS-0-order})
under the zeroth order model is given by

\begin{equation}
\hat{\alpha}(\bm{c})=\left(\sum_{m=1}^{M}w_{m}(\bm{c})\gamma_{m}\right)\Big/\sum_{m=1}^{M}w_{m}(\bm{c})\label{eq:sol-0th}
\end{equation}
where $w_{m}(\bm{c})=K((\bm{z}_{m}-\bm{c})/b)$ denotes the kernel
weight for the $m$th measurement.

The solution to (\ref{eq:LS-1-order}) under the first order model
is given by\textup{ 
\begin{equation}
\left[\begin{array}{c}
\hat{\alpha}(\bm{c})\\
\hat{\bm{\beta}}(\bm{c})
\end{array}\right]=(\tilde{\bm{D}}\bm{W}\tilde{\bm{D}}^{\text{T}})^{-1}\tilde{\bm{D}}\bm{W}\bm{\gamma}\label{eq:firstsol}
\end{equation}
where $\bm{\gamma}\triangleq(\gamma_{1},\gamma_{2},\cdots,\gamma_{M})^{\text{T}}$}
is a vector form of the measurements, $\bm{W}=\mbox{diag}\{(w_{1}(\bm{c}),w_{2}(\bm{c}),\cdots,w_{M}(\bm{c}))\}$,
and
\[
\tilde{\bm{D}}=\left[\begin{array}{c}
\bm{1}^{\text{T}}\\
\bm{D}
\end{array}\right]
\]
in which $\bm{1}$ is a vector of all $1$\textquoteright s and $\bm{D}$
is a $2\times M$ matrix with the $m$th column given by\textup{ }$\bm{z}_{m}-\bm{c}$.
\end{prop}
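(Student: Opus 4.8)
The statement is just the closed-form solution of two (weighted) least-squares problems, so the plan is to apply the standard first-order optimality conditions and solve the resulting normal equations. For the zeroth order case (\ref{eq:LS-0-order}), I would treat the objective $J_0(\alpha)=\sum_{m=1}^{M}\big(\gamma_{m}-\alpha\big)^{2}w_{m}(\bm{c})$ as a one-dimensional convex quadratic in $\alpha$ (convex because each $w_m(\bm{c})=K((\bm{z}_m-\bm{c})/b)\ge 0$ by condition (i)), differentiate, and set $\mathrm{d}J_0/\mathrm{d}\alpha = -2\sum_{m}w_{m}(\bm{c})(\gamma_{m}-\alpha)=0$. Rearranging gives $\alpha\sum_{m}w_{m}(\bm{c})=\sum_{m}w_{m}(\bm{c})\gamma_{m}$, which yields (\ref{eq:sol-0th}) directly, provided $\sum_m w_m(\bm{c})>0$, i.e. at least one sensor lies within radius $b$ of $\bm{c}$ (which holds for $(i,j)\in\Omega$ with $M_0\ge 1$).

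For the first order case (\ref{eq:LS-1-order}), the plan is to vectorize. I would set $\bm{\eta}\triangleq[\alpha(\bm{c});\ \bm{\beta}(\bm{c})]\in\mathbb{R}^{3}$ and observe that the $m$th residual is $\gamma_m-\alpha(\bm{c})-\bm{\beta}^{\mathrm{T}}(\bm{c})(\bm{z}_m-\bm{c}) = \gamma_m - \tilde{\bm d}_m^{\mathrm{T}}\bm{\eta}$, where $\tilde{\bm d}_m = [1;\ \bm{z}_m-\bm{c}]$ is precisely the $m$th column of $\tilde{\bm{D}}$. Hence the objective in (\ref{eq:LS-1-order}) equals $(\bm{\gamma}-\tilde{\bm{D}}^{\mathrm{T}}\bm{\eta})^{\mathrm{T}}\bm{W}(\bm{\gamma}-\tilde{\bm{D}}^{\mathrm{T}}\bm{\eta})$ with $\bm{W}=\mathrm{diag}\{(w_1(\bm{c}),\dots,w_M(\bm{c}))\}\succeq 0$. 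Expanding and taking the gradient with respect to $\bm{\eta}$ gives the normal equations $\tilde{\bm{D}}\bm{W}\tilde{\bm{D}}^{\mathrm{T}}\bm{\eta}=\tilde{\bm{D}}\bm{W}\bm{\gamma}$; left-multiplying by $(\tilde{\bm{D}}\bm{W}\tilde{\bm{D}}^{\mathrm{T}})^{-1}$ produces (\ref{eq:firstsol}).

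The only nonroutine point, and the one I would spend a sentence on, is the invertibility of the $3\times 3$ matrix $\tilde{\bm{D}}\bm{W}\tilde{\bm{D}}^{\mathrm{T}}=\sum_{m}w_m(\bm{c})\tilde{\bm d}_m\tilde{\bm d}_m^{\mathrm{T}}$. It is positive semidefinite automatically; it is positive definite iff the vectors $\{\tilde{\bm d}_m : w_m(\bm{c})>0\}$ span $\mathbb{R}^3$, i.e. there are at least three sensors within radius $b$ of $\bm{c}$ whose positions $\bm{z}_m$ are not collinear. Under the sensor-aware sampling (\ref{eq:model-sparse-observation}) with $M_0\ge 3$ this holds generically (for sensor locations in general position), so $\bm{\eta}$ is the unique minimizer; I would state this as a standing assumption rather than belabor the degenerate cases. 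Everything else is elementary algebra, so I do not anticipate a genuine obstacle — the write-up is a matter of presenting the normal-equation derivation cleanly and flagging the rank condition on $\tilde{\bm D}\bm{W}\tilde{\bm D}^{\mathrm{T}}$.
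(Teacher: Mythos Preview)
Your proposal is correct and follows essentially the same approach as the paper: the paper's proof is a one-sentence sketch noting that (\ref{eq:LS-0-order}) and (\ref{eq:LS-1-order}) are convex unconstrained quadratics, rewriting them in matrix--vector form, and setting the derivative to zero, citing standard references. Your write-up is simply a fleshed-out version of that sketch, with the added (and welcome) remark on the rank condition for $\tilde{\bm D}\bm W\tilde{\bm D}^{\mathrm T}$, which the paper omits.
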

\mysubsubnote{please change $Z_{1}$ into $\tilde{D}$} \begin{proof}
Note that the problems (\ref{eq:LS-0-order}) and (\ref{eq:LS-1-order})
are unconstrained quadratic optimization problems and can be easily
verified to be convex. Thus, the solution can be obtained by re-arranging
the objective into a matrix-vector form and setting the derivative
to zero \cite{Fan:b96,boyd2004convex}. \end{proof} As a result,
to construct the $(i,j)$th observation $\hat{H}_{ij}$ at the grid
centered at $\bm{c}_{ij}$, one can first construct a local model
$\hat{\rho}(\bm{z};\bm{c}_{ij})$ at location $\bm{c}_{ij}$. Then,
$\hat{H}_{ij}$ can be estimated from $\hat{\rho}(\bm{z}=\bm{c}_{ij};\bm{c}_{ij})=\alpha(\bm{c}_{ij})$
for both zeroth order and first order models. From the solution (\ref{eq:LS-0-order})
for the zeroth order model, $\hat{H}_{ij}=\left(\sum_{m}w_{m}(\bm{c}_{ij})\gamma_{m}\right)/\sum_{m}w_{m}(\bm{c}_{ij})$
is simply the distance-weighted strength $\gamma_{m}$ from the measurements
around the grid center $\bm{c}_{ij}$. For the first order model,
$\hat{H}_{ij}=\hat{\alpha}(\bm{c}_{ij})$ as given in (\ref{eq:firstsol})
with setting $\bm{c}=\bm{c}_{ij}$.

\subsection{The Error of Interpolation}

\mysubsubnote{Analyse the bias and variance under finite M for 0th
order model and 1st order model respectively}

An unsolved yet important issue is the choice of the window parameter
$b$ in the estimation problem (\ref{eq:LS}). Intuitively, the zeroth
order model is easy to estimate as it has just one parameter, but
it has poor inference capability away from $\bm{c}$, and as a result,
it may only work well in a small local area under a small window size
$b$ where the propagation map $\rho(\bm{z})$ varies slowly. On the
other hand, a higher order model may have less bias to infer $\rho(\bm{z})$
away from $\bm{c}$, but it requires a lot more measurement data as
it has more parameters, and thus, it prefers a large $b$ to include
more measurements.

Recall the interpolation error $\xi_{ij}\triangleq\hat{H}_{ij}-\rho(\bm{c}_{ij})$,
and $\hat{H}_{ij}=\hat{\alpha}(\bm{c}_{ij})$ which can be computed
using (\ref{eq:sol-0th}) or (\ref{eq:firstsol}) given the center
location $\bm{c}_{ij}$ of the $(i,j)$th grid. In the following,
unless specifically pointed out, we take all the expectation and variance
over the random measurement noise $\epsilon_{m}$ given the sensor
locations $\bm{z}_{m}$. Assume that $\rho(\text{\ensuremath{\bm{z}}})$
is second order differentiable. For a fixed sensor deployment $\bm{z}_{1},\bm{z}_{2},\dots,\bm{z}_{M}$,
the bias $\mathbb{E}\{\xi_{ij}\}$ and variance $\mathbb{V}\{\xi_{ij}\}$
of the interpolation error for $\hat{H}_{ij}$ can be derived and
summarized in the following theorem.
\begin{thm}[Error under zeroth order model]
\label{(estimation-error)} For the zeroth order model, 
\begin{align}
\mathbb{E}\{\xi_{ij}\} & =\sum_{m=1}^{M}\nabla\rho(\bm{c}_{ij})^{\text{T}}\left(\bm{z}_{m}-\bm{c}_{ij}\right)\bar{w}_{m}(\bm{c}_{ij})+o(b)\label{eq:0th-bias}\\
\mathbb{V}\{\xi_{ij}\} & =\sum_{m=1}^{M}\bar{w}_{m}^{2}(\bm{c}_{ij})\sigma^{2}\label{eq:0th-variance}
\end{align}
\label{thm1:0-th bv}\textup{where} $\bar{w}_{m}(\bm{c}_{ij})\triangleq w_{m}(\bm{c}_{ij})/\sum_{i=1}^{M}w_{i}(\bm{c}_{ij})$
\textup{is the normalized weight for the $m$th measurement, and $\nabla{\rho}(\bm{c}_{ij})$
is the derivative of the propagation map $\rho(\bm{z})$ at location
$\bm{z}=\bm{c}_{ij}$. The term $o(b)$ represents a residual and
it satisfies $o(b)/b\to0$ as $b\to0$.}
\end{thm}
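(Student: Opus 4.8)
The plan is to start from the closed-form interpolator in Proposition~1. Under the zeroth order model, $\hat H_{ij}=\hat\alpha(\bm c_{ij})=\sum_{m=1}^{M}\bar w_m(\bm c_{ij})\gamma_m$ by (\ref{eq:sol-0th}), where the normalized weights obey $\sum_{m=1}^{M}\bar w_m(\bm c_{ij})=1$ by definition. Substituting the measurement model $\gamma_m=\rho(\bm z_m)+\epsilon_m$ and writing $\rho(\bm c_{ij})=\big(\sum_m\bar w_m(\bm c_{ij})\big)\rho(\bm c_{ij})$ gives
$$\xi_{ij}=\sum_{m=1}^{M}\bar w_m(\bm c_{ij})\big(\rho(\bm z_m)-\rho(\bm c_{ij})\big)+\sum_{m=1}^{M}\bar w_m(\bm c_{ij})\epsilon_m,$$
a deterministic approximation term plus a zero-mean noise term. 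This split cleanly separates the bias (first sum) from the variance (second sum), since the expectation and variance are taken over the noise with the sensor locations $\bm z_m$ held fixed.

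For the bias I take $\mathbb E\{\cdot\}$ over $\{\epsilon_m\}$; as $\mathbb E\{\epsilon_m\}=0$, the noise term vanishes and $\mathbb E\{\xi_{ij}\}=\sum_m\bar w_m(\bm c_{ij})\big(\rho(\bm z_m)-\rho(\bm c_{ij})\big)$. Using second-order differentiability I Taylor-expand $\rho$ about $\bm c_{ij}$, $\rho(\bm z_m)-\rho(\bm c_{ij})=\nabla\rho(\bm c_{ij})^{\text{T}}(\bm z_m-\bm c_{ij})+\tfrac{1}{2}(\bm z_m-\bm c_{ij})^{\text{T}}\nabla^2\rho(\tilde{\bm c}_m)(\bm z_m-\bm c_{ij})$ with $\tilde{\bm c}_m$ on the segment joining $\bm z_m$ and $\bm c_{ij}$. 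The decisive point is that only measurements with $\|\bm z_m-\bm c_{ij}\|_2<b$ enter the sum, because $w_m(\bm c_{ij})=K\big((\bm z_m-\bm c_{ij})/b\big)=0$ outside the unit-ball support of $K$ (condition~(ii)); combined with boundedness of $\nabla^2\rho$ on a neighborhood of $\bm c_{ij}$, each quadratic remainder is at most $C\|\bm z_m-\bm c_{ij}\|_2^2\le Cb^2$. Since $\sum_m\bar w_m(\bm c_{ij})=1$, the weighted remainder is $O(b^2)=o(b)$, which gives (\ref{eq:0th-bias}).

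For the variance, $\xi_{ij}$ departs from its mean only through $\sum_m\bar w_m(\bm c_{ij})\epsilon_m$, so $\mathbb V\{\xi_{ij}\}=\mathbb V\big\{\sum_m\bar w_m(\bm c_{ij})\epsilon_m\big\}$; treating the $\epsilon_m$ as uncorrelated with variance $\sigma^2$ and the $\bar w_m(\bm c_{ij})$ as deterministic (fixed sensors) yields $\sum_m\bar w_m^2(\bm c_{ij})\sigma^2$, i.e. (\ref{eq:0th-variance}). I do not anticipate a deep obstacle; the only step needing care is the remainder estimate, which relies on the compact-support property of $K$ to ensure $\|\bm z_m-\bm c_{ij}\|_2<b$ for every contributing term and then on $\sum_m\bar w_m(\bm c_{ij})=1$ so that summation does not inflate the $O(b^2)$ bound. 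Note that the kernel moment conditions (\ref{eq:integral 1})--(\ref{eq:integral 0}) are not invoked here, because the statement keeps the leading first-order term explicitly rather than averaging it out over random sensor placements.
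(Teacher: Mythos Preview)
Your proposal is correct and follows essentially the same route as the paper: split $\xi_{ij}$ into a deterministic approximation part and a zero-mean noise part using $\sum_m\bar w_m=1$, Taylor-expand $\rho$ around $\bm c_{ij}$, and use the compact support of $K$ to enforce $\|\bm z_m-\bm c_{ij}\|<b$ so the remainder is $o(b)$. The only cosmetic difference is that you invoke the Lagrange-form remainder with $\nabla^2\rho(\tilde{\bm c}_m)$ to obtain an explicit $O(b^2)$ bound, whereas the paper uses the Peano-form $o(\|\bm z_m-\bm c_{ij}\|)$ remainder; both lead to the same $o(b)$ conclusion.
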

\begin{proof} See Appendix \ref{app:theorem1}. \end{proof} For
the result under the first order model, denote $\bm{D}_{ij}$ as a
$2\times M$ matrix to capture the direction from the grid center
$\bm{c}_{ij}$ to the sensor location $\bm{z}_{m}$, and the $m$th
column of $\bm{D}_{ij}$ is defined as $\bm{z}_{m}-\bm{c}_{ij}$.
In addition, define a direction matrix with offsets as $\tilde{\bm{D}}_{ij}=\left[\begin{array}{cc}
\bm{1} & \bm{D}_{ij}^{\text{T}}\end{array}\right]^{\text{T}}.$
\begin{thm}[Error under first order model]
\label{thm2:1st bv}For the first order model, 
\begin{align}
\mathbb{E}\{\xi_{ij}\} & =\frac{1}{2}\left[\tilde{\bm{W}}_{ij}^{-1}\tilde{\bm{D}}_{ij}\bm{W}_{ij}\text{diag}\{\bm{D}_{ij}^{\text{T}}\bm{\Psi}_{ij}\bm{D}_{ij}\}\right]_{(1,1)}+o(b^{2})\label{eq:1st bias}\\
\mathbb{V}\{\xi_{ij}\} & =\sigma^{2}\left[\tilde{\bm{W}}_{ij}^{-1}\left(\tilde{\bm{D}}_{ij}\bm{W}_{ij}\bm{W}_{ij}^{\text{T}}\tilde{\bm{D}}_{ij}^{\text{T}}\right)\tilde{\bm{W}}_{ij}^{-1}\right]_{(1,1)}\label{eq:1st-variance}
\end{align}
where $\tilde{\bm{W}}_{ij}=\tilde{\bm{D}}_{ij}\bm{W}_{ij}\tilde{\bm{D}}_{ij}^{\text{T}}$,
the operation $[\bm{A}]_{(1,1)}$ returns the $(1,1)$th entry of
a matrix $\bm{A}$, and $\bm{\Psi}_{ij}=\nabla^{2}\rho(\bm{c}_{ij})$
is the Hessian matrix of $\rho(\bm{z})$ at point $\bm{c}_{ij}$.
\textup{The term $o(b^{2})$ represents a residual and it satisfies
$o(b^{2})/b^{2}\to0$ as $b\to0$.}
\end{thm}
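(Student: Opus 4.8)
The plan is to substitute the measurement model~(\ref{eq:model-measurement}) together with a second-order Taylor expansion of $\rho$ into the closed form~(\ref{eq:firstsol}), and then exploit the fact that the distance-weighted least-squares projection reproduces \emph{exactly} the affine part of $\rho$, so that only the curvature term and the noise survive. First, write $\bm\gamma=\bm\rho+\bm\epsilon$ with $\bm\rho\triangleq(\rho(\bm z_1),\dots,\rho(\bm z_M))^{\text T}$ and $\bm\epsilon\triangleq(\epsilon_1,\dots,\epsilon_M)^{\text T}$. By the compact support of $K$, the only nonzero weights $w_m(\bm c_{ij})$ belong to sensors with $\|\bm z_m-\bm c_{ij}\|_2<b$; for those, second-order differentiability of $\rho$ gives $\rho(\bm z_m)=\rho(\bm c_{ij})+\nabla\rho(\bm c_{ij})^{\text T}(\bm z_m-\bm c_{ij})+\tfrac12(\bm z_m-\bm c_{ij})^{\text T}\bm\Psi_{ij}(\bm z_m-\bm c_{ij})+r_m$ with a remainder $r_m$ that is $o(b^2)$ uniformly in $m$ (all such sensors lie within distance $b$ of $\bm c_{ij}$). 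Stacking the $M$ equations and noting that the $m$th diagonal entry of $\bm D_{ij}^{\text T}\bm\Psi_{ij}\bm D_{ij}$ is precisely $(\bm z_m-\bm c_{ij})^{\text T}\bm\Psi_{ij}(\bm z_m-\bm c_{ij})$,
\[
\bm\rho=\rho(\bm c_{ij})\bm 1+\bm D_{ij}^{\text T}\nabla\rho(\bm c_{ij})+\tfrac12\,\text{diag}\{\bm D_{ij}^{\text T}\bm\Psi_{ij}\bm D_{ij}\}+\bm r .
\]

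Second, plug this and $\bm\gamma=\bm\rho+\bm\epsilon$ into $[\hat\alpha(\bm c_{ij}),\hat{\bm\beta}(\bm c_{ij})^{\text T}]^{\text T}=\tilde{\bm W}_{ij}^{-1}\tilde{\bm D}_{ij}\bm W_{ij}\bm\gamma$, which is legitimate once the sensors with positive weight are in general position (at least three non-collinear), so that $\tilde{\bm D}_{ij}$ has full row rank and $\tilde{\bm W}_{ij}=\tilde{\bm D}_{ij}\bm W_{ij}\tilde{\bm D}_{ij}^{\text T}\succ 0$. The key identity is $\tilde{\bm W}_{ij}^{-1}\tilde{\bm D}_{ij}\bm W_{ij}\tilde{\bm D}_{ij}^{\text T}=\bm I_3$; combined with $\bm 1=\tilde{\bm D}_{ij}^{\text T}\bm e_1$ and $\bm D_{ij}^{\text T}\bm v=\tilde{\bm D}_{ij}^{\text T}(0,\bm v^{\text T})^{\text T}$, it shows that the projection maps the constant part $\rho(\bm c_{ij})\bm 1$ to $\rho(\bm c_{ij})\bm e_1$ and the linear part $\bm D_{ij}^{\text T}\nabla\rho(\bm c_{ij})$ to $(0,\nabla\rho(\bm c_{ij})^{\text T})^{\text T}$. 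Hence the $\hat\alpha$-component, i.e. the $(1,1)$ entry, equals $\rho(\bm c_{ij})+\tfrac12[\tilde{\bm W}_{ij}^{-1}\tilde{\bm D}_{ij}\bm W_{ij}\,\text{diag}\{\bm D_{ij}^{\text T}\bm\Psi_{ij}\bm D_{ij}\}]_{(1,1)}+[\tilde{\bm W}_{ij}^{-1}\tilde{\bm D}_{ij}\bm W_{ij}\bm r]_{(1,1)}+[\tilde{\bm W}_{ij}^{-1}\tilde{\bm D}_{ij}\bm W_{ij}\bm\epsilon]_{(1,1)}$. Taking $\mathbb E\{\cdot\}$ over $\bm\epsilon$ (zero mean) and subtracting $\rho(\bm c_{ij})$ gives the bias~(\ref{eq:1st bias}), provided the residual $[\tilde{\bm W}_{ij}^{-1}\tilde{\bm D}_{ij}\bm W_{ij}\bm r]_{(1,1)}=o(b^2)$. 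For the variance, $\xi_{ij}-\mathbb E\{\xi_{ij}\}=\bm a^{\text T}\bm\epsilon$ with $\bm a^{\text T}\triangleq\bm e_1^{\text T}\tilde{\bm W}_{ij}^{-1}\tilde{\bm D}_{ij}\bm W_{ij}$; since the $\epsilon_m$ are zero-mean, uncorrelated, with variance $\sigma^2$, and $\tilde{\bm W}_{ij}$ is symmetric, $\mathbb V\{\xi_{ij}\}=\sigma^2\bm a^{\text T}\bm a=\sigma^2[\tilde{\bm W}_{ij}^{-1}(\tilde{\bm D}_{ij}\bm W_{ij}\bm W_{ij}^{\text T}\tilde{\bm D}_{ij}^{\text T})\tilde{\bm W}_{ij}^{-1}]_{(1,1)}$, which is~(\ref{eq:1st-variance}).

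The main obstacle is showing that the projected Taylor remainder is $o(b^2)$ and not merely $o(b)$, since $\tilde{\bm W}_{ij}^{-1}$ has a block structure whose lower-right $2\times2$ block scales like $b^{-2}$ (the columns of $\bm D_{ij}$ have length $O(b)$). The clean route is to rescale: with $\bm u_m=(\bm z_m-\bm c_{ij})/b$ one has $\tilde{\bm D}_{ij}=\text{diag}(1,b,b)\,\tilde{\bm U}$, so $\tilde{\bm W}_{ij}^{-1}\tilde{\bm D}_{ij}\bm W_{ij}=\text{diag}(1,1/b,1/b)\,\bm G^{-1}\tilde{\bm U}\bm W_{ij}$ with $\bm G\triangleq\tilde{\bm U}\bm W_{ij}\tilde{\bm U}^{\text T}$, whose entries are $O(\sum_m w_m)$. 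Because $\bm e_1^{\text T}\text{diag}(1,1/b,1/b)=\bm e_1^{\text T}$, the first row of $\tilde{\bm W}_{ij}^{-1}\tilde{\bm D}_{ij}\bm W_{ij}$ is $\bm e_1^{\text T}\bm G^{-1}\tilde{\bm U}\bm W_{ij}$, whose entries are $O(w_m/\sum_l w_l)$ and therefore sum in absolute value to an $O(1)$ constant (here one uses that the rescaled Gram matrix $\bm G/\sum_m w_m$ is bounded away from singular, i.e. the windowed sensors are spread out and not near-collinear). Consequently this first row turns the uniform bound $\|\bm r\|_\infty=o(b^2)$ into $[\tilde{\bm W}_{ij}^{-1}\tilde{\bm D}_{ij}\bm W_{ij}\bm r]_{(1,1)}=o(b^2)$, completing the bias argument, and it also makes explicit that the $\text{diag}\{\bm D_{ij}^{\text T}\bm\Psi_{ij}\bm D_{ij}\}$ term is genuinely $O(b^2)$. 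The remaining steps — convexity and the normal equations behind~(\ref{eq:firstsol}), linearity of expectation, and the quadratic-form covariance computation — are routine.
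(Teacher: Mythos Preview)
Your proposal is correct and follows essentially the same approach as the paper: second-order Taylor expansion of $\rho$, substitution into the closed form~(\ref{eq:firstsol}), and use of the identity $\tilde{\bm W}_{ij}^{-1}\tilde{\bm D}_{ij}\bm W_{ij}\tilde{\bm D}_{ij}^{\text T}=\bm I$ to kill the affine part, followed by the standard quadratic-form variance computation. Your rescaling argument showing that the first row of $\tilde{\bm W}_{ij}^{-1}\tilde{\bm D}_{ij}\bm W_{ij}$ has $O(1)$ row-sum---and hence that the projected Taylor remainder stays $o(b^{2})$ despite the $b^{-2}$ scaling in the lower-right block of $\tilde{\bm W}_{ij}^{-1}$---is in fact more careful than the paper, which simply asserts $\bm r_{ij}\sim o(b^{2})$ and carries it through the projection without further comment.
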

\begin{proof} See Appendix \ref{app:the2}. \end{proof} \mysubsubnote{Please
study the text below. We need this depth/detail of discussion. }

As seen from Theorems \ref{thm1:0-th bv} and \ref{thm2:1st bv},
the bias $\mathbb{E}\{\xi_{ij}\}$ depends on the variation of the
propagation map $\rho(\bm{z})$. Specifically, under the zeroth order
model, the slope $\nabla\rho(\bm{c}_{ij})$ of the propagation map
contributes to the bias. It follows that, for $\bm{c}_{ij}$ far away
from any source, the bias $\mathbb{E}\{\xi_{ij}\}$ tends to be small
because both $\rho(\bm{c}_{ij})$ and $\nabla\rho(\bm{c}_{ij})$ are
small for typical propagation maps. On the other hand, when $\bm{c}_{ij}$
locates in the area where the slope $\nabla\rho(\bm{c}_{ij})$ is
large, the bias $\mathbb{E}\{\xi_{ij}\}$ will be large, and thus,
the zeroth order interpolation may perform poorly. Under the first
order model, the bias $\mathbb{E}\{\xi_{ij}\}$ is not affected by
the slope but by the curvature of the propagation map, \emph{i.e.},
$\nabla^{2}\rho(\bm{c}_{ij})$.

A bias-variance trade-off due to the window size parameter $b$ can
be observed. First, it is not surprising to find that the absolute
bias $|\mathbb{E}\{\xi_{ij}\}|$ increases for large window size $b$,
if one numerically evaluates the expressions in (\ref{eq:0th-bias})
and (\ref{eq:1st bias}). This is because the estimator $\hat{H}_{ij}$
tends to fit to the structure of a larger area, but the local parametric
model $\hat{\rho}(\bm{z};\bm{c})$ is accurate only for a small area
around $\bm{c}$. Second, by contrast, the variance decreases as $b$
increases, which can be verified by numerically evaluating the variance
expressions in (\ref{eq:0th-variance}) and (\ref{eq:1st-variance}).
The reason is that a large window size $b$ can include more sensor
measurements, and thus, the measurement noise in (\ref{eq:model-measurement})
can be suppressed.

\subsection{Asymptotic Analysis of the Interpolation Error}

\label{subsec:Asymptotic-Results}

\mysubsubnote{Extend the results to asymptotic regime, and draw
some insights.}

To explicitly analyze the bias and variance trade off, we derive the
asymptotic distribution of the interpolation error $\xi_{ij}$ under
the regime of large number of sensors.

Define $f(\bm{z})$ as the density function of the sensors deployed
at location $\bm{z}$. Assume that $f(\bm{z})$ is second order differentiable
and the sensors are \ac{iid} according to $f(\bm{z})$. Moreover,
assume that the propagation function $\rho(\bm{z})$ is third order
differentiable. We have the following results on the asymptotic distribution
of the interpolation error $\xi_{ij}$.
\begin{thm}[Asymptotic interpolation error I]
\label{thm:asymp 0th}For a small enough $b$, the interpolation
error $\xi_{ij}$ under the zeroth order interpolation (\ref{eq:sol-0th})
converges as 
\begin{equation}
\xi_{ij}\overset{p}{\to}b^{2}C_{0}\Big(\frac{\vartheta_{1}(\bm{c}_{ij})}{f(\bm{c}_{ij})}+\frac{1}{2}\vartheta_{2}(\bm{c}_{ij})\Big)+o(b^{2})\label{eq:asym 0 mean}
\end{equation}
as $M\to\infty$, where $\overset{p}{\to}$ means convergence in probability,
and the centered error $\bar{\xi}_{ij}=\xi_{ij}-\mathbb{E}\{\xi_{ij}\}$
converges as 
\begin{equation}
\sqrt{Mb^{2}}\bar{\xi}_{ij}\overset{d}{\to}\mathcal{N}\Big(0,\frac{C_{1}\sigma^{2}}{f(\bm{c}_{ij})}+o(b)\Big)\label{eq:asym 0 distri}
\end{equation}
as $M\to\infty$, where $\overset{d}{\to}$ means convergence in distribution,
\[
\vartheta_{1}(\bm{c}_{ij})=\frac{\partial f(\bm{c}_{ij})}{\partial u_{x}}\frac{\partial\rho(\bm{c}_{ij})}{\partial u_{x}}+\frac{\partial f(\bm{c}_{ij})}{\partial u_{y}}\frac{\partial\rho(\bm{c}_{ij})}{\partial u_{y}}
\]
 
\[
\vartheta_{2}(\bm{c}_{ij})=\frac{\partial^{2}\rho(\bm{c}_{ij})}{\partial u_{x}^{2}}+\frac{\partial^{2}\rho(\bm{c}_{ij})}{\partial u_{y}^{2}}
\]
\textup{$C_{0}=\int u_{x}^{2}K(\bm{u})du_{x}$,} and\textup{ $C_{1}=\int K(\bm{u})^{2}d\bm{u}.$}
\end{thm}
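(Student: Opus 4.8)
The plan is to start from the finite-$M$ expressions in Theorem \ref{thm1:0-th bv} and take limits as $M\to\infty$ using a law-of-large-numbers argument on the kernel-weighted sums, keeping track of the $b$-dependence carefully. The normalized weight $\bar w_m(\bm{c}_{ij})=w_m(\bm{c}_{ij})/\sum_i w_i(\bm{c}_{ij})$ is a ratio of two sums, so first I would handle numerator and denominator separately. Since the sensors are \ac{iid} with density $f$, the denominator satisfies $\frac{1}{M}\sum_{m}K((\bm{z}_m-\bm{c}_{ij})/b)\overset{p}{\to}\mathbb{E}\{K((\bm{z}-\bm{c}_{ij})/b)\}$, and a change of variables $\bm{u}=(\bm{z}-\bm{c}_{ij})/b$ together with a Taylor expansion of $f$ around $\bm{c}_{ij}$ gives $\mathbb{E}\{K((\bm{z}-\bm{c}_{ij})/b)\}=b^2\big(f(\bm{c}_{ij})+O(b^2)\big)$, where the first-order term in the expansion drops out by the symmetry condition \eqref{eq:integral 0}. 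So the denominator behaves like $Mb^2 f(\bm{c}_{ij})$.

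For the bias \eqref{eq:asym 0 mean}, I would substitute into the finite-$M$ bias \eqref{eq:0th-bias}, which reads $\mathbb{E}\{\xi_{ij}\}=\nabla\rho(\bm{c}_{ij})^{\text{T}}\big(\sum_m(\bm{z}_m-\bm{c}_{ij})w_m\big)/\sum_m w_m+o(b)$, plus I must also retain the second-order term in the Taylor expansion of $\rho$ that the theorem's $o(b)$ absorbed — here, because we want accuracy to $b^2$, I'd re-expand $\rho(\bm{z}_m)$ to second order, yielding a Hessian term that produces $\tfrac12\vartheta_2$. The key computations are: $\frac{1}{M}\sum_m(\bm{z}_m-\bm{c}_{ij})w_m\overset{p}{\to}\mathbb{E}\{(\bm{z}-\bm{c}_{ij})K(\cdot)\}$, and by the change of variables this equals $b^3\int \bm{u}\,K(\bm{u})\big(f(\bm{c}_{ij})+b\nabla f(\bm{c}_{ij})^{\text T}\bm{u}+\cdots\big)d\bm{u}$; the leading $b^3 f(\bm{c}_{ij})\int \bm{u}K(\bm{u})d\bm{u}$ vanishes by symmetry, leaving $b^4\int \bm{u}\bm{u}^{\text T}K(\bm{u})d\bm{u}\,\nabla f(\bm{c}_{ij})$, whose diagonal integrals are $C_0$. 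Dividing by the denominator $\sim Mb^2 f(\bm{c}_{ij})$ and contracting with $\nabla\rho(\bm{c}_{ij})$ produces $b^2 C_0\,\vartheta_1(\bm{c}_{ij})/f(\bm{c}_{ij})$; the re-expanded Hessian term contributes $b^2 C_0 \tfrac12\vartheta_2(\bm{c}_{ij})$. Combining gives \eqref{eq:asym 0 mean}; Slutsky's theorem legitimizes dividing the two convergent sums.

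For the asymptotic normality \eqref{eq:asym 0 distri}, I would write the centered error $\bar\xi_{ij}=\sum_m \bar w_m(\bm{c}_{ij})\epsilon_m$ (the noise enters linearly through \eqref{eq:sol-0th}), a weighted sum of \ac{iid} zero-mean noise terms with the $\bar w_m$ themselves random through the $\bm{z}_m$'s. Conditioning on the sensor locations, this is a sum of independent terms, and I would apply a Lindeberg–Feller central limit theorem after checking the Lindeberg condition (which holds since each weight is $O(1/(Mb^2))$ and the noise has finite variance $\sigma^2$). The conditional variance is $\sigma^2\sum_m \bar w_m^2$; using $\sum_m w_m^2/M\overset{p}{\to}\mathbb{E}\{K((\bm z-\bm c_{ij})/b)^2\}=b^2(f(\bm{c}_{ij})C_1+O(b))$ via the same change of variables, and $(\sum_m w_m)^2\sim (Mb^2 f(\bm{c}_{ij}))^2$, gives $\sigma^2\sum_m\bar w_m^2\sim \frac{\sigma^2 C_1}{Mb^2 f(\bm{c}_{ij})}$, hence $Mb^2\cdot\mathbb{V}\{\bar\xi_{ij}\}\to \sigma^2 C_1/f(\bm{c}_{ij})$, matching \eqref{eq:asym 0 distri}.

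The main obstacle is making the interplay of the two limiting regimes rigorous: the statement is "for small enough $b$, as $M\to\infty$", so $b$ is held fixed while $M\to\infty$, yet the conclusions are expansions in $b$ with $o(b^2)$ and $o(b)$ remainders. I would need to argue that the $O(b^k)$ error terms arising from the Taylor expansions of $f$ and $\rho$ inside the expectations are uniform enough that, after the $M\to\infty$ limit, they collapse to genuine $o(b^2)$ (resp. $o(b)$) quantities — essentially verifying that the remainder of each Taylor expansion, integrated against the compactly supported kernel, is controlled by the relevant derivative bounds (which exist by the third-order differentiability of $\rho$ and second-order differentiability of $f$). Care is also needed that the probability-one convergence of the ratio does not degrade the rate; Slutsky plus the explicit variance computation handle this, but writing it cleanly is the delicate part.
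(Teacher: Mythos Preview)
Your proposal is largely correct and tracks the paper's proof closely in the bias computation: the change of variables $\bm{u}=(\bm{z}-\bm{c}_{ij})/b$, the Taylor expansions of $f$ and $\rho$, the use of the kernel symmetry \eqref{eq:integral 0} to kill odd moments, and the appeal to Slutsky's theorem for the ratio are exactly the steps in the paper's Appendix~\ref{app:the3} (packaged there as Lemmas~\ref{lem: f}--\ref{lem:Ym}).

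The one substantive difference is in the CLT step. The paper does \emph{not} condition on sensor locations and invoke Lindeberg--Feller. Instead it writes $b^{2}(\varepsilon_{1}+\varepsilon_{2})=\tfrac{1}{M}\sum_{m}Z_{m}$ with $Z_{m}=X_{m}+Y_{m}$, where $X_{m}=w_{m}(\bm{c}_{ij})\epsilon_{m}$ and $Y_{m}=w_{m}(\bm{c}_{ij})(\rho(\bm{z}_{m})-\rho(\bm{c}_{ij}))$, observes that the $Z_{m}$ are \ac{iid} (since both $\bm{z}_{m}$ and $\epsilon_{m}$ are), and applies the classical CLT directly. This is simpler---no Lindeberg condition to verify---and it forces you to confront a point your write-up elides: $\mathbb{V}\{Z_{m}\}$ has \emph{two} contributions, $\mathbb{V}\{X_{m}\}=C_{1}\sigma^{2}b^{2}f(\bm{c}_{ij})+o(b^{3})$ from the noise and $\mathbb{V}\{Y_{m}\}=o(b^{3})$ from the sensor-position randomness in the bias term (the paper checks $\text{Cov}(X_m,Y_m)=0$). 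The latter is lower order, which is why only $C_{1}\sigma^{2}/f(\bm{c}_{ij})$ survives in \eqref{eq:asym 0 distri}. In your route, writing $\bar{\xi}_{ij}=\sum_{m}\bar{w}_{m}\epsilon_{m}$ amounts to centering by the \emph{conditional} mean given $\{\bm{z}_{m}\}$; this gives the correct leading variance, but you should either state that interpretation explicitly or, if you center by the unconditional mean, argue separately that the random fluctuations of $\sum_{m}\bar{w}_{m}\rho(\bm{z}_{m})$ about its mean contribute only $o((Mb^{2})^{-1})$ to the variance. Without that, the identification $\bar{\xi}_{ij}=\sum_{m}\bar{w}_{m}\epsilon_{m}$ is only approximate.
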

\begin{proof} See Appendix \ref{app:the3}. \end{proof}

Following a similar technique as illustrated in Appendix \ref{app:the3},
we obtain the following result on the interpolation error $\xi_{ij}$
under the first order interpolation (\ref{eq:firstsol}).
\begin{thm}[Asymptotic interpolation error II]
\label{thm:asymp 1st }For a small enough $b$, the interpolation
error $\xi_{ij}$ under the first order interpolation (\ref{eq:firstsol})
converges as 
\[
\xi_{ij}\overset{p}{\to}\frac{1}{2}b^{2}C_{0}\vartheta_{2}(\bm{c}_{ij})+o(b^{2})
\]
as $M\to\infty,$ and the centered error $\bar{\xi}_{ij}=\xi_{ij}-\mathbb{E}\{\xi_{ij}\}$
converges as 
\[
\sqrt{Mb^{2}}\bar{\xi}_{ij}\overset{d}{\to}\mathcal{N}\Big(0,\frac{C_{1}\sigma^{2}}{f(\bm{c}_{ij})}+o(b)\Big)
\]
as $M\to\infty$.
\end{thm}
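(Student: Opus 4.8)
The plan is to follow the argument of Appendix~\ref{app:the3} for the zeroth order case almost verbatim, replacing the scalar Nadaraya--Watson quantities by their $3\times3$ block analogues. First I would take the closed form \eqref{eq:firstsol} at $\bm{c}=\bm{c}_{ij}$, substitute $\gamma_m=\rho(\bm{z}_m)+\epsilon_m$, and Taylor expand $\rho$ about $\bm{c}_{ij}$ to second order; this is legitimate since $\rho$ is third order differentiable and $K$ vanishes for $\|\bm{z}_m-\bm{c}_{ij}\|\ge b$, so the remainder is $O(b^3)=o(b^2)$. Writing the affine part of the expansion as $\tilde{\bm{D}}_{ij}^{\text{T}}[\rho(\bm{c}_{ij});\nabla\rho(\bm{c}_{ij})]$ and using the least-squares projector identity $\tilde{\bm{W}}_{ij}^{-1}\tilde{\bm{D}}_{ij}\bm{W}_{ij}\tilde{\bm{D}}_{ij}^{\text{T}}=\bm{I}$, the affine part is reproduced \emph{exactly}, so that
\[
\xi_{ij}=\tfrac12\big[\tilde{\bm{W}}_{ij}^{-1}\tilde{\bm{D}}_{ij}\bm{W}_{ij}\,\text{diag}\{\bm{D}_{ij}^{\text{T}}\bm{\Psi}_{ij}\bm{D}_{ij}\}\big]_{(1,1)}+\big[\tilde{\bm{W}}_{ij}^{-1}\tilde{\bm{D}}_{ij}\bm{W}_{ij}\bm{\epsilon}\big]_{(1,1)}+o(b^2),
\]
which is exactly the finite-$M$ split of Theorem~\ref{thm2:1st bv} into a deterministic bias term and a centered noise term $\bar{\xi}_{ij}=\sum_m a_m\epsilon_m$, $a_m=[\tilde{\bm{W}}_{ij}^{-1}\tilde{\bm{D}}_{ij}\bm{W}_{ij}]_{1,m}$.

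Second, I would isolate the window dependence by the scaling $\bm{B}=\text{diag}(1,b,b)$: the $m$th column of $\tilde{\bm{D}}_{ij}$ is $\bm{B}\bar{\bm{u}}_m$ with $\bar{\bm{u}}_m=(1,u_{m,x},u_{m,y})^{\text{T}}$, $\bm{u}_m=(\bm{z}_m-\bm{c}_{ij})/b$, hence $\tilde{\bm{W}}_{ij}=\bm{B}\bm{Q}_M\bm{B}$ with $\bm{Q}_M=\sum_m K(\bm{u}_m)\bar{\bm{u}}_m\bar{\bm{u}}_m^{\text{T}}$, $\tilde{\bm{W}}_{ij}^{-1}=\bm{B}^{-1}\bm{Q}_M^{-1}\bm{B}^{-1}$, and the first row of $\bm{B}^{-1}$ is $(1,0,0)$, so the $(1,1)$ entries above only see the first row of $\bm{Q}_M^{-1}$ times the unscaled quantities. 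Since the sensors are i.i.d.\ $\sim f$, the weak law of large numbers gives $\tfrac1M\bm{Q}_M\overset{p}{\to}b^2\!\int K(\bm{u})\bar{\bm{u}}\bar{\bm{u}}^{\text{T}}f(\bm{c}_{ij}+b\bm{u})\,d\bm{u}$ as $M\to\infty$, and Taylor expanding $f(\bm{c}_{ij}+b\bm{u})=f(\bm{c}_{ij})+b\nabla f(\bm{c}_{ij})^{\text{T}}\bm{u}+O(b^2)$, the symmetry of $K$ and the moment conditions \eqref{eq:integral 1}--\eqref{eq:integral 0} kill all odd moments, yielding the leading term $b^2 f(\bm{c}_{ij})\,\text{diag}(1,C_0,C_0)$; the same computation applied to $\tfrac1M\sum_m K(\bm{u}_m)\bar{\bm{u}}_m(\bm{u}_m^{\text{T}}\bm{\Psi}_{ij}\bm{u}_m)$ and to $\tfrac1M\sum_m K(\bm{u}_m)^2\bar{\bm{u}}_m\bar{\bm{u}}_m^{\text{T}}$ gives leading terms $b^2 f(\bm{c}_{ij})\!\int K(\bm{u})\bar{\bm{u}}(\bm{u}^{\text{T}}\bm{\Psi}_{ij}\bm{u})\,d\bm{u}$ and $b^2 f(\bm{c}_{ij})\!\int K(\bm{u})^2\bar{\bm{u}}\bar{\bm{u}}^{\text{T}}\,d\bm{u}$ respectively. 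Throughout I keep $M\to\infty$ and $b\to0$ coupled with $Mb^2\to\infty$, so that $\bm{Q}_M$ is eventually nonsingular and the $\Theta(Mb^2)$ measurements in the window suffice for the law of large numbers.

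Substituting these limits into the bias term, the $\tfrac1{b^2 f(\bm{c}_{ij})}$ of $(\tfrac1M\bm{Q}_M)^{-1}$ cancels the $b^2 f(\bm{c}_{ij})$ of the numerator, and the $(1,1)$ entry extracts $\int K(\bm{u})(\bm{u}^{\text{T}}\bm{\Psi}_{ij}\bm{u})\,d\bm{u}=C_0(\Psi_{11}+\Psi_{22})=C_0\vartheta_2(\bm{c}_{ij})$, so $\xi_{ij}\overset{p}{\to}\tfrac12 b^2 C_0\vartheta_2(\bm{c}_{ij})+o(b^2)$ once the noise term, whose variance is $O(1/(Mb^2))\to0$, is dropped; because the affine component was fitted exactly there is no $\nabla\rho$--$\nabla f$ cross term, which is why the $\vartheta_1/f$ contribution of Theorem~\ref{thm:asymp 0th} is absent. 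For the distribution, $\sum_m a_m^2\sigma^2=\mathbb{V}\{\xi_{ij}\}=\sigma^2[\bm{Q}_M^{-1}(\sum_m K(\bm{u}_m)^2\bar{\bm{u}}_m\bar{\bm{u}}_m^{\text{T}})\bm{Q}_M^{-1}]_{11}$, and the limits above give $Mb^2\,\mathbb{V}\{\xi_{ij}\}\overset{p}{\to}\sigma^2 C_1/f(\bm{c}_{ij})+o(b)$, using $[\text{diag}(1,\tfrac1{C_0},\tfrac1{C_0})\,\text{diag}(C_1,\cdot,\cdot)\,\text{diag}(1,\tfrac1{C_0},\tfrac1{C_0})]_{11}=C_1$; then $\bar{\xi}_{ij}=\sum_m a_m\epsilon_m$ is a weighted sum of independent zero-mean errors with $a_m=O(1/(Mb^2))$ over a set of effective size $\Theta(Mb^2)$, so $\max_m a_m^2/\sum_m a_m^2\to0$, the Lindeberg condition holds, and the Lindeberg--Feller CLT with Slutsky's theorem gives $\sqrt{Mb^2}\,\bar{\xi}_{ij}\overset{d}{\to}\mathcal{N}\big(0,C_1\sigma^2/f(\bm{c}_{ij})+o(b)\big)$.

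The main obstacle is not conceptual but the block-order bookkeeping in the last two steps: one has to carry the $3\times3$ expansions to the correct power of $b$, and check that the $O(b)$ corrections coming from $\nabla f$ are precisely the ones annihilated by the symmetry of $K$ at the $(1,1)$ position (both $\int K(\bm{u})(\bm{u}^{\text{T}}\bm{\Psi}_{ij}\bm{u})((\nabla f)^{\text{T}}\bm{u})\,d\bm{u}$ and the relevant off-diagonal feed-through terms reduce to odd $K$-moments), so that the residuals are genuinely $o(b^2)$ for the bias and $o(b)$ for the variance rather than merely $O(b)$; the window must also shrink slowly enough ($Mb^2\to\infty$) for the WLLN/CLT to apply to the measurements actually inside the window.
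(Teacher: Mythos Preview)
Your proposal is correct and is essentially the approach the paper itself implicitly endorses: the paper omits the proof of Theorem~\ref{thm:asymp 1st } and simply points to Ruppert and Wand (1994), whose argument is exactly the block analogue of Appendix~\ref{app:the3} that you outline, with the scaling $\bm{B}=\text{diag}(1,b,b)$, the WLLN on $\tfrac1M\bm{Q}_M$, the symmetry of $K$ to kill odd moments, and Slutsky plus a Lindeberg-type CLT for the centered error. Your identification of the key structural point---that the least-squares projector reproduces affine functions exactly, so the $\vartheta_1/f$ term present in Theorem~\ref{thm:asymp 0th} cannot appear---is precisely the design-adaptivity property that distinguishes the first-order case, and your bookkeeping caveats in the last paragraph (carrying the $3\times3$ expansion to the right power of $b$, checking that the $\nabla f$ feed-through at the $(1,1)$ position lands on odd $K$-moments, and requiring $Mb^2\to\infty$) are the right places to be careful.
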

\begin{proof}
Omitted. The result is obtained following the same derivation as \cite[Theorem 2.1]{RupDavWan:J94}
in a straight-forward way.
\end{proof}
As a result from Theorems \ref{thm:asymp 0th} and \ref{thm:asymp 1st },
if one chooses the window size $b$ according to $M$ as $b=M^{-p}$,
then the bias asymptotically converges to $0$ at a rate $C_{2}M^{-2p}$,
where the coefficient $C_{2}$ depends on the choice of interpolation
methods (e.g., (\ref{eq:sol-0th}) and (\ref{eq:firstsol})), and
the variance converges to $0$ at a rate $\frac{C_{1}\sigma^{2}}{f(\bm{c}_{ij})}M^{-(1-2p)}$,
which is identical for both interpolation methods. In particular,
under $p=1/6$, the \Ac{mse}, which equals to the squared-bias plus
the variance, converges to $0$ at a fast rate, where both the squared-bias
and the variance terms scale as $O(M^{-2/3})$.

In addition, we discover some useful analytical insights from Theorems
\ref{thm:asymp 0th} and \ref{thm:asymp 1st } as follows.

\emph{Advantage provided by the first order method:} The advantage
of the first order method over the zeroth order counterpart is observed
from the bias of the estimator. Specifically, the first term in the
bias coefficient $\vartheta_{1}(\bm{c}_{ij})$ under the zeroth order
method disappears in the bias under the first order interpolation
method. As a result, the first order method is less affected by the
sensor distribution $f(\bm{z})$. For example, under non-uniform sensor
distribution, the term in $\vartheta_{1}(\bm{c}_{ij})$ yields a non-zero
bias, but such a bias does not appear in Theorem \ref{thm:asymp 1st }
under the first order method.

\emph{Impact from the sensor distribution and the propagation map:}
The bias of the estimator depends on the sensor distribution $f(\bm{c}_{ij})$
under the zeroth order method but not for the first order method.
The bias of both zeroth and first order method tend to be larger in
the area where the propagation function is with larger curvature $\partial^{2}\rho(\bm{c}_{ij})/\partial u_{x}^{2}$,
$\partial^{2}\rho(\bm{c}_{ij})/\partial u_{y}^{2}$. In addition,
the bias under the zeroth order method also depends on the gradient
of the sensor density $f(\bm{z})$. The analytical result matches
with the intuition that more sensors are generally required in the
region where the propagation map has larger variation.

\emph{Intuition on the variance:} The variance of the interpolation
error is not sensitive to model complexity as it remains the same
for both zeroth order method and first order method. The variance
is inversely proportional to $Mb^{2}f(\bm{c}_{ij})$ which represents
the number of sensors within the window centered at point $\bm{c}_{ij}$
with radius $b$. This implies that for the area with dense measurements,
the interpolation error of that area tends to have small variance.

\emph{The bias and variance trade off:} It is observed from the results
of Theorem \ref{thm:asymp 0th} and Theorem \ref{thm:asymp 1st }
that the bias is proportional to the square of the window size $b$
but the variance is inversely proportional to the square of $b$.
This implies that there might exist an optimal window size $b$, thus
the MSE of the estimator $\hat{\rho}$ attains smallest.

\subsection{Window Size Optimization \label{subsec:Bandwidth-Optimization-with}}

Here, we explore two methods to optimize $b$.

\subsubsection{Leave One Out cross validation}

A natural way to choose $b$ is to employ the Leave One Out cross
validation \cite{VehAkiAnd:J17}. The method minimizes the MSE of
the interpolation by solving
\begin{equation}
\underset{b}{\text{minimize}}\quad\frac{1}{M}\sum_{m=1}^{M}\left(\gamma_{m}-\hat{\rho}_{-m}(\bm{z}_{m})\right){}^{2}\label{eq:optimize b}
\end{equation}
where $\hat{\rho}_{-m}(\bm{z}_{m})$ is obtained in the same way as
$\hat{\rho}(\bm{z}_{m})$ in (\ref{eq:LS}) based on the $M-1$ samples
from $\left\{ (\bm{z}_{j},\gamma_{j})\right\} _{j=1}^{M}$ that exclude
the $m$th one.

However, Leave One Out cross validation is computationally expensive
when the data set is big with large $M$ as it requires fitting the
model $\hat{\rho}_{-m}$ $M$ times. Moreover, the estimated minimizer
$\hat{b}$ obtained from (\ref{eq:optimize b}) is mismatched with
the true minimizer on window size as shown in Fig.~\ref{fig:bandopt}.
The possible reason is that, the $m$th sensor plays the most significant
role to predict $\gamma_{m}$ at location $\bm{z}_{m}$, but the $m$th
sensor is excluded from estimating $\hat{\rho}_{-m}(\bm{z}_{m})$
in the Leave One Out cross validation.

\subsubsection{Analytical minimum \Ac{mse} method\label{subsec:Analytical-minimum-}}

We propose to optimize the window size $b$ by minimizing the \Ac{mse}
of the interpolation using the analytical results from Theorems \ref{thm1:0-th bv}
and \ref{thm2:1st bv}. The problem is formulated as follows:

\begin{equation}
\underset{b}{\mbox{minimize}}\quad\frac{1}{|\Omega|}\sum_{(i,j)\in\Omega}\mu_{ij}^{2}+\nu_{ij}^{2}\label{eq:optb}
\end{equation}
where $\Omega$ is defined in (\ref{eq:model-sparse-observation})
which depends on $b$, and $\mu_{ij}$ and $\nu_{ij}$ are respectively
the bias and variance derived from Theorems \ref{thm1:0-th bv} and
\ref{thm2:1st bv}. Specifically, $\mu_{ij}$ and $\nu_{ij}$ are
given as follows:
\begin{itemize}
\item Under the zeroth order interpolation (\ref{eq:sol-0th})
\begin{equation}
\mu_{ij}=\sum_{m=1}^{M}\hat{\bm{\beta}}^{\text{T}}(\bm{c}_{ij})(\bm{z}_{m}-\bm{c}_{ij})\bar{w}_{m}\label{eq:mu zeroth}
\end{equation}
\begin{equation}
\nu_{ij}^{2}=\sum_{m=1}^{M}\bar{w}_{m}^{2}(\bm{c}_{ij})\sigma^{2}\label{eq:nu zeroth}
\end{equation}
where $\hat{\bm{\beta}}(\bm{c}_{ij})$ is the gradient estimated from
(\ref{eq:LS-1-order}).
\item Under the first order interpolation (\ref{eq:firstsol})
\begin{equation}
\mu_{ij}=\frac{1}{2}\left[\tilde{\bm{W}}_{ij}^{-1}\tilde{\bm{D}}_{ij}\bm{W}_{ij}\text{diag}\{\bm{D}_{ij}^{\text{T}}\hat{\bm{\Psi}}{}_{ij}\bm{D}_{ij}\}\right]_{(1,1)}\label{eq:mu first}
\end{equation}
\begin{equation}
\nu_{ij}^{2}=\sigma^{2}[\tilde{\bm{W}}_{ij}^{-1}\left(\tilde{\bm{D}}_{ij}\bm{W}_{ij}\bm{W}_{ij}^{\text{T}}\tilde{\bm{D}}_{ij}^{\text{T}}\right)\tilde{\bm{W}}_{ij}^{-1}]_{(1,1)}\label{eq:nu first}
\end{equation}
where $\hat{\bm{\Psi}}$ is the estimated Hessian obtained from solving
problem (\ref{eq:LS}) under a second order model.
\end{itemize}
\begin{figure}
\subfigure{\includegraphics[width=0.5\columnwidth]{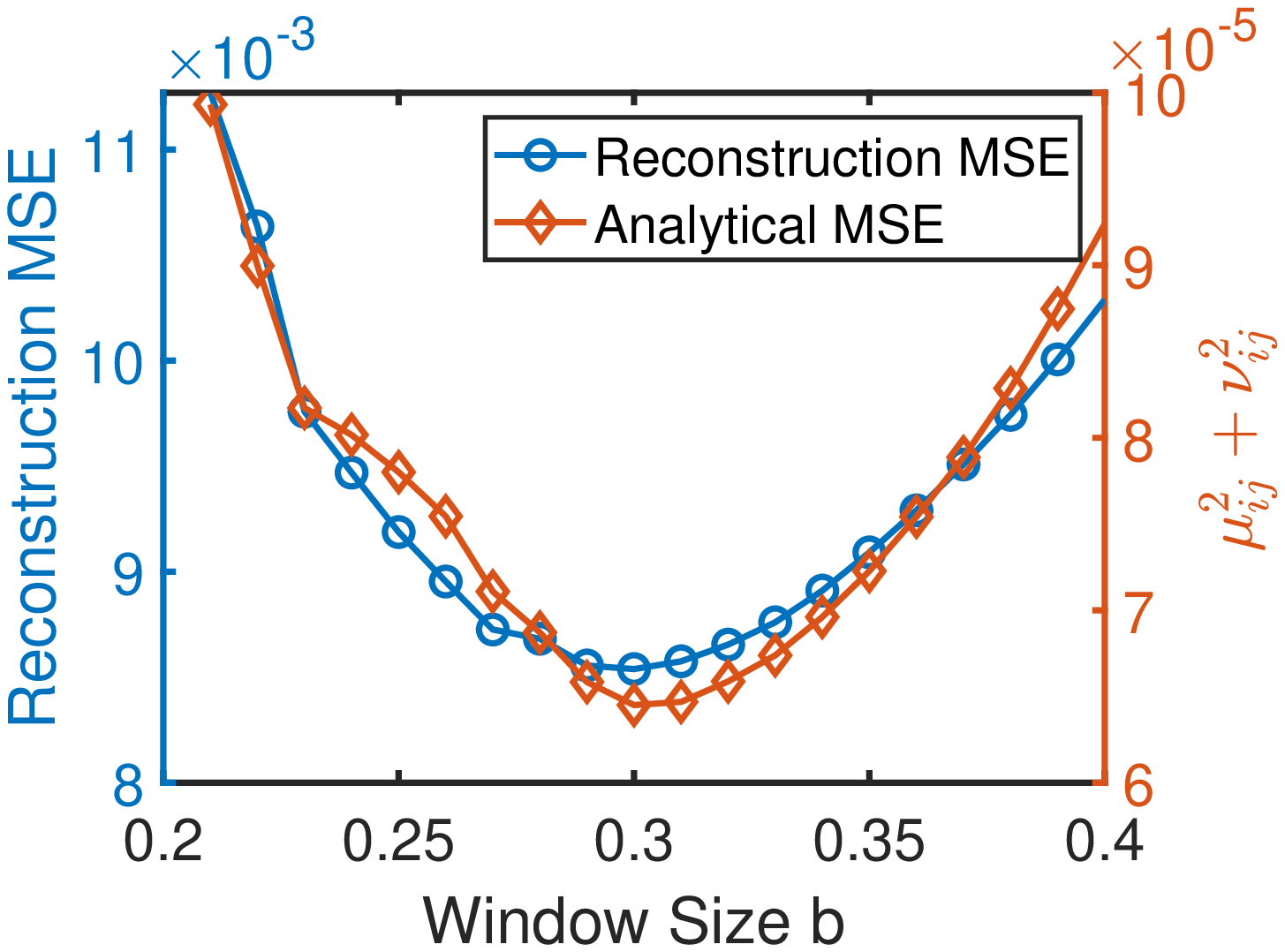}}\subfigure{\includegraphics[width=0.5\columnwidth]{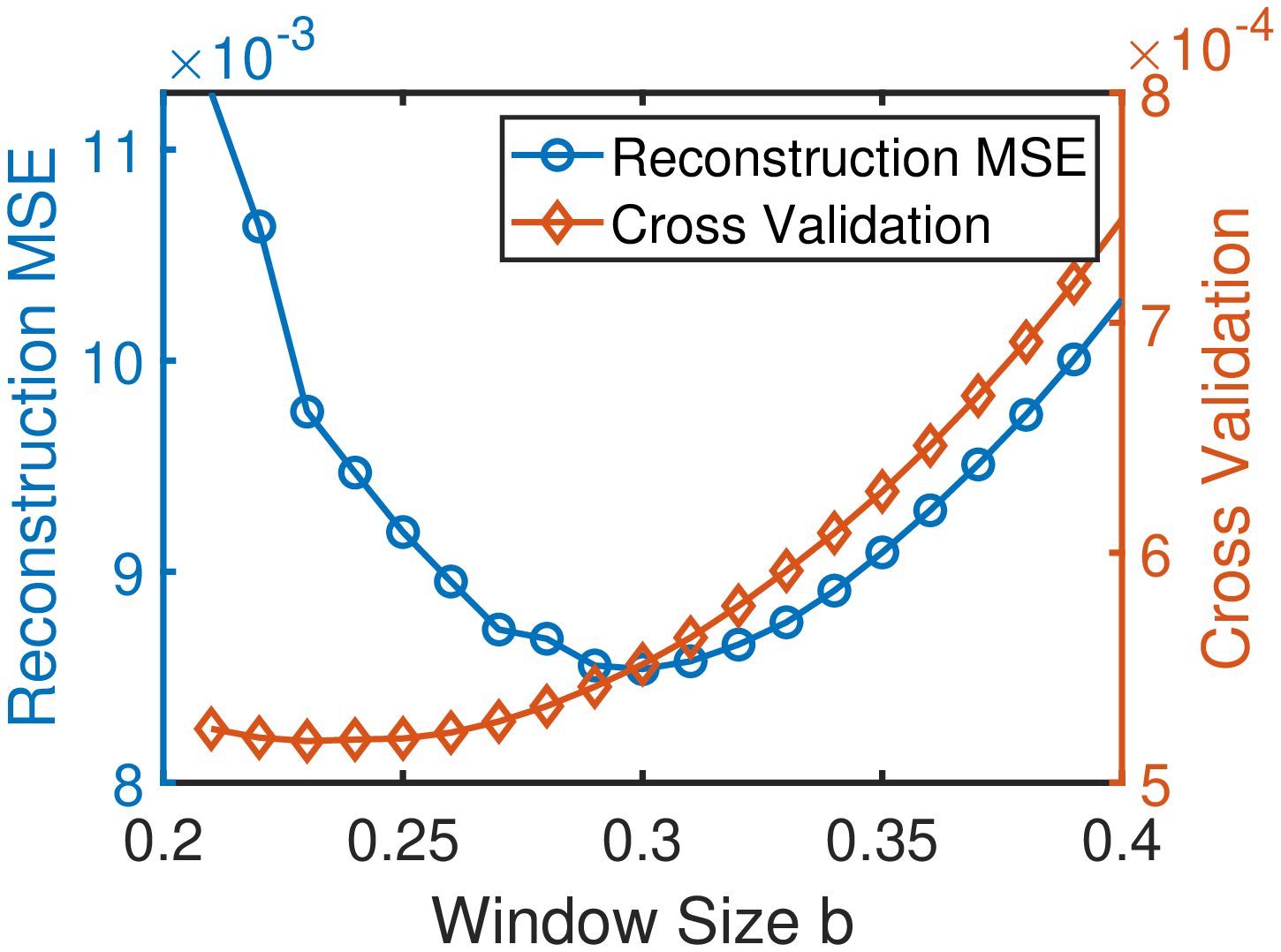}}

\qquad  \ \ \,\, \,\,\,\,\,\, \,\quad \footnotesize (a) \qquad \qquad \qquad \qquad \qquad \qquad \qquad  \, \footnotesize(b)

\caption{\label{fig:bandopt}(a) The MSE of the propagation map reconstruction
and the analytical MSE of the interpolation based on (\ref{eq:optb});
the minimizers $\hat{b}$ are matched; (b) The MSE of the propagation
map reconstruction and the cost from cross validation based on (\ref{eq:optimize b});
the minimizers $\hat{b}$ are mismatched.}
\end{figure}

In addition, the parameter $M_{0}$ in (\ref{eq:model-sparse-observation})
can be specified as $M_{0}=4$ for the zeroth order interpolation
method, since there are $3$ parameters to estimate including $\hat{\alpha}$
and the gradient $\hat{\bm{\beta}}(\bm{c}_{ij})$ for evaluating (\ref{eq:mu zeroth})
and (\ref{eq:nu zeroth}), and we need one additional measurement
since the support of the kernel in (\ref{eq:support}) has a radius
of $b$ and the measurement on the edge of the window will receive
a zero weight. Similarly, we may configure $M_{0}=7$ for the first
order interpolation, since there are $6$ parameters to estimate for
(\ref{eq:mu first}) and (\ref{eq:nu first}).

Fig. \ref{fig:bandopt} shows a numerical example on the MSE of propagation
map reconstruction under various choices of window size $b$. In the
experiment, we use the same propagation model as in Section \ref{sec:Numerical-Results}.
The dimension is $N=30$ and the number of sensors is $M=400.$ The
MSE of the propagation map reconstruction is calculated through $1/N^{2}||\bar{\bm{H}}-\bm{H}||_{F}^{2}$,
where $||\text{\ensuremath{\cdot}}||_{F}$ represents Frobenius norm,
and $\bar{\bm{H}}$ is the reconstructed propagation map based on
solving the matrix completion problem (\ref{eq:NNM-t}) in which the
$\hat{H}_{ij}$ is constructed from the local interpolation and $\mathcal{I}_{ij}$
is the uncertainty interval based on the local interpolation error.
More details to be discussed in Section \ref{sec:global reconstruction}.
The $\mu_{ij}$ and $\nu_{ij}$ are calculated according to (\ref{eq:mu first})
and (\ref{eq:nu first}). Fig.~\ref{fig:bandopt} (a) shows the reconstruction
MSE of matrix completion and the analytical MSE derived in (\ref{eq:optb})
in terms of $b$. It is observed that the minimizers $\hat{b}$ of
both curves are close. Fig.~\ref{fig:bandopt} (b) shows the reconstruction
MSE of matrix completion and the MSE in (\ref{eq:optimize b}) from
cross validation. It is observed that the minimizers are significantly
mismatched, confirming that the proposed analytical minimum MSE method
works better in optimizing the window size $b$. A slight adjustment
of window size $b$ in a $\pm0.1$ range may lead to $20$\% performance
difference. It is thus critical to optimize $b$.

\section{Global Reconstruction via Matrix Completion\label{sec:global reconstruction}}

In this section, we study matrix completion methods for propagation
map reconstruction that exploits the global structure of the propagation
map. We first numerically verify the low rank property of a propagation
map, and then, we develop uncertainty-aware matrix completion methods
that exploit the uncertainty from the local reconstruction.

\subsection{The Low Rank Property\label{subsec:lowrank simulation}}

\begin{figure}
\includegraphics{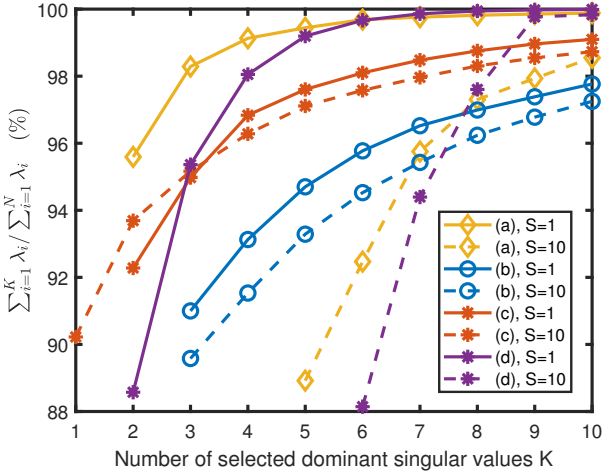}\caption{\label{fig:singular value}Percentage of the sum of the first $K$
dominant singular values over the sum of all singular values for a
$100$ by $100$ propagation matrix.}
\end{figure}
It is believed that the matrix representation of the propagation map
as constructed in Section \ref{subsec:Field-Reconstruction-via} is
likely low rank. The intuition is as follows: First, a propagation
map tends to have a unimodal structure where the longer the distance
away from the source, the smaller the RSS; second, there is usually
significant spatial correlation and hence $\rho(\bm{z})$ is smooth,
leading to the low rankness.

One special example to understand the low rank property is the exponential
2-D propagation map given by $g(d(\bm{s},\bm{z}))=e^{-d^{2}/\sigma}$
over $\bm{z}=(x,y)$, which can be decomposed as the product of two
1-D functions $g(d(\bm{s},\bm{z}))=u(x)v(y)$. This is due to the
fact that $d(\bm{s},\bm{z})^{2}=(x-s_{1,1})^{2}+(y-s_{1,2})^{2}$
and $e^{-(x^{2}+y^{2})/\sigma}=e^{-x^{2}/\sigma}e^{-y^{2}/\sigma}=u(x)v(y)$.
As a result, the matrix representation of the propagation map can
be written as the outer product of two vectors $\bm{H}=\bm{u}\bm{v}^{\text{T}}$,
indicating that $\bm{H}\in\mathbb{R}^{N\times N}$ is always rank-1
regardless of the matrix dimension $N$.

To numerically study the rank property of more propagation maps, we
select four common forms of propagation models and pick the parameters
that \emph{maximize} the rank of the propagation matrix $\bm{H}$
using a brute force search. The resulting models are listed as follows:
\begin{itemize}
\item[(a)]  $g(d)=\alpha d^{-\beta},\ \beta=2.2,\ h=0.09$
\item[(b)]  $g(d)=\alpha\mbox{exp}(-d^{\beta}),\ \beta=1.8,\ h=0.05$
\item[(c)]  $g(d)=\alpha-\beta\times\mbox{log}_{10}d,\ \beta=1.8,\ h=0.01$
\item[(d)]  $g(d)=\alpha d^{-\gamma}\beta^{-d},\ \beta=2.8,\ \gamma=1.5,\ h=0.01$
\end{itemize}
where $d=\sqrt{x^{2}+y^{2}+h^{2}}$ represents the distance from the
source at the origin, $(x,y)$ is the coordinate of the grid cell,
$h$ is the elevation, and the parameters $\alpha$ are selected to
normalize the total energy in the area. Note that models (a), (b)
and (c) correspond to energy in linear scale, exponential scale, and
log-scale, respectively, in radio propagation; in addition, model
(d) corresponds to a model for underwater acoustic propagation \cite{BRELeoMak:B04}.
The shadowing component is with the same setting as in Section \ref{sec:Numerical-Results}.

Fig.~\ref{fig:singular value} shows the percentage of the sum of
the first $K$ dominant singular values over the sum of all singular
values $\lambda_{i}$ for $N=100$. It clearly demonstrates the low
rank property, where $5$ dominant eigen-mode combined already reconstruct
over $95$\% energy of the propagation map for the single source scenario,
$S=1.$ For the scenario where $S=10$ sources are randomly placed
in the area and each source emits signal with the power randomly generated
from an exponential distribution with the rate parameter $\kappa=1$,
it is observed that there is still a low rank property for the aggregated
propagation map, and the rank increases much slower than the increasing
of the number of sources.

\subsection{Uncertainty-aware Matrix Completion}

The remaining challenge is to design new matrix completion methods
that exploit the knowledge that the observed entries may have different
uncertainty according to the local interpolation. We explore two uncertainty-aware
matrix completion formulations to demonstrate the core idea.

\subsubsection{Nuclear norm minimization with trust region (NNM-t)}

We introduce \emph{trust regions} in the following nuclear norm minimization
problem to exploit the local uncertainty in matrix completion
\begin{equation}
\begin{aligned}\underset{\bm{X}\in\mathbb{R}^{N\times N}}{\text{minimize}} & \quad\|\bm{X}\|_{*}\\
\text{subject to} & \quad X_{ij}-\hat{H}_{ij}\in\mathcal{I}_{ij},\qquad\forall(i,j)\in\Omega
\end{aligned}
\label{eq:NNM-t}
\end{equation}
where $\mathcal{I}_{ij}$ is the interval to specify the trust region
of the estimate $\hat{H}_{ij}$ obtained from the local reconstruction
in Section \ref{sec:Local-Polynomial-Regression}, and $\Omega$ is
defined in (\ref{eq:model-sparse-observation}).

The trust region $\mathcal{I}_{ij}$ for the estimation $\hat{H}_{ij}$
in (\ref{eq:NNM-t}) can be constructed using Theorems \ref{thm1:0-th bv}\textendash \ref{thm:asymp 1st }.
Specifically, Theorems \ref{thm:asymp 0th} and \ref{thm:asymp 1st }
suggest that the interpolation error $\xi_{ij}=\hat{H}_{ij}-\rho(\bm{c}_{ij})$
asymptotically follows a Gaussian distribution, where the mean $\mu_{ij}$
and variance $\nu_{ij}^{2}$ can be practically computed via (\ref{eq:mu zeroth})\textendash (\ref{eq:nu first}).
Then, using Gaussian approximation, the trust region $\mathcal{I}_{ij}$
with a $(1-\delta)$-confident level, \emph{i.e.}, $\mathbb{P}\{\hat{H}_{ij}-\rho(\bm{c}_{ij})\in\mathcal{I}_{ij}\}=1-\delta$,
can be approximately constructed as 
\begin{equation}
\mathcal{I}_{ij}=\Big(\text{\ensuremath{\mu_{ij}}}\pm\Phi^{-1}\Big(1-\frac{\delta}{2}\Big)\nu_{ij}\Big)\label{eq:confident-interval-I_ij}
\end{equation}
where $\Phi(x)$ is the cumulative distribution function of the standard
Gaussian distribution, and the notation $(a\pm b)$ represents an
interval $(a-b,a+b)$. The overall steps can be referred to Algorithm
\ref{alg:Propagation-field-reconstruction}.
\begin{algorithm}
Initialize $M$, $N$, $\bm{z}_{m}$, $\gamma_{m}$, $\delta$, $b_{\min}$,
$b_{\max}$ and interpolation order $r$.
\begin{enumerate}
\item Distance-weighted least-squares regression: Solving (\ref{eq:LS})
with $(r+1)$th order model.
\item Bias and variance approximation: $\mu_{ij}$ and $\nu_{ij}$.
\begin{enumerate}
\item If $r=0$, approximate (\ref{eq:0th-bias}), (\ref{eq:0th-variance})
with (\ref{eq:mu zeroth}), (\ref{eq:nu zeroth}).
\item If $r=1$, approximate (\ref{eq:1st bias}), (\ref{eq:1st-variance})
with (\ref{eq:mu first}), (\ref{eq:nu first}).
\end{enumerate}
\item Optimize $b$: Find the minimizer $\hat{b}$ of (\ref{eq:optb}) subject
to $b_{\mbox{\scriptsize min}}\leq b\leq b_{\mbox{\scriptsize max}}$.
\item Estimate $\hat{H}_{ij}$. Solving (\ref{eq:LS}) to get $\hat{\alpha}(\bm{c}_{ij})$
and $\hat{H}_{ij}=\hat{\alpha}(\bm{c}_{ij})$.
\begin{enumerate}
\item If $r=0$, 
\[
\hat{H}_{ij}=\bigg(\sum_{m=1}^{M}w_{m}(\bm{c}_{ij})\gamma_{m}\bigg)\Big/\sum_{m=1}^{M}w_{m}(\bm{c}_{ij}).
\]
\item If $r=1$, 
\[
\hat{H}_{ij}=\left[(\tilde{\bm{D}}_{ij}\bm{W}_{ij}\tilde{\bm{D}}_{ij}^{\text{T}})^{-1}\tilde{\bm{D}}_{ij}\bm{W}_{ij}\bm{\gamma}\right]_{(1,1)}.
\]
\end{enumerate}
\item Repeat step $1$.
\item Confidence interval $\mathcal{I}_{ij}$ construction: $\mathcal{I}_{ij}=\big(\mu_{ij}\pm\Phi^{-1}(1-\frac{\delta}{2})\nu_{ij}^{\frac{1}{2}}\big)$.
\item Nuclear norm matrix completion. Get the recovered matrix $\bar{\bm{H}}$
through solving (\ref{eq:NNM-t}).
\end{enumerate}
\caption{\label{alg:Propagation-field-reconstruction}Propagation map reconstruction
under NNM-t}
\end{algorithm}

\subsubsection{Weighted ALS (wALS)\label{subsec:Weighted-ALS}}

We design a wALS formulation to exploit the uncertainty of the observation
based on the conventional ALS algorithm developed in \cite{DavMar:J12}.
We propose to assign weights, which are inversely proportional to
the standard variance of the interpolation error, to the matrix observation.
The weights $\bm{w}$ are chosen with the intuition that the estimate
with smaller variance is believed to have higher accuracy and thus
plays a more significant role in the matrix completion problem.

Let $M'=|\Omega|$ be the number of observed entries via local reconstruction
from interpolation. For the $n$th element in $\Omega$ that corresponds
to the $(i,j)$th entry of the matrix, construct an observation $y_{n}=\hat{H}_{ij}-\mu_{ij}$,
and construct a weight $w_{n}=\nu_{ij}^{-1}$. Let $\mathcal{A}:\mathbb{R}^{N\times N}\to\mathbb{R}^{M'}$
be a sensing operator that maps the $(i,j)$th element of a matrix
$\bm{X}$ to the corresponding $n$th element of the vector $\bm{y}$.
The wALS problem is formulated as follows
\begin{equation}
\begin{aligned}\underset{\bm{X}\in\mathbb{R}^{N\times N}}{\text{minimize}} & \quad||\bm{w}\circ(\bm{y}-\mathcal{A}(\bm{X}))||_{2}\\
\text{subject to} & \quad\bm{X=\bm{LR}}\\
 & \quad\bm{L}\in\mathbb{R}^{N\times p}\\
 & \quad\bm{R}\in\mathbb{R}^{p\times N}
\end{aligned}
\label{eq:wALS}
\end{equation}
where `$\circ$' means element-wise product.

To numerically evaluate the performance of the proposed uncertainty-aware
matrix completion algorithms with their conventional counterparts,
consider the same experiment configuration in Section \ref{sec:Numerical-Results}
with number of sources $S=1$, number of measurements $M=40-200$
for reconstructing a propagation map with resolution $N=30$. For
a conventional NNM scheme, it solves a similar NNM problem except
that it replaces the constraint in (\ref{eq:NNM-t}) with $|X_{ij}-\hat{H}_{ij}|\leq\epsilon$,
where $\epsilon=2/M$ is the best parameter we found via cross validation.
For a conventional ALS, the weighting vector $\bm{w}$ in (\ref{eq:wALS})
is $\ensuremath{\bm{1}}$. For the proposed methods, the confident
level parameter $\delta=0.05$ is chosen for the NNM-t method, and
the rank parameter $p=1$ is chosen for both wALS and the conventional
ALS. Fig.~\ref{fig:ALS} compares the \ac{mse} in propagation map
reconstruction under different matrix completion algorithms. The results
demonstrate that both of the proposed NNM-t and wALS methods outperform
their conventional counterparts, which only exploit the local reconstruction
$\hat{H}_{ij}$ from Section \ref{sec:Local-Polynomial-Regression},
but not the bias and variance derived from our analysis. The results
confirm the advantage of explicitly exploiting the observation uncertainty
for enhancing the reconstruction performance.
\begin{figure}
\includegraphics{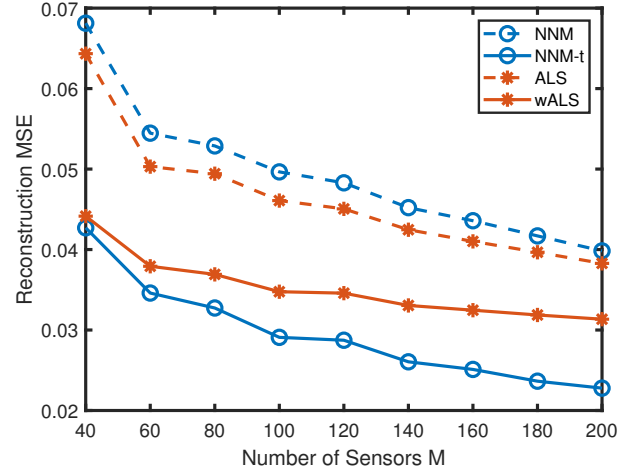}\caption{\label{fig:ALS} Reconstruction MSE for $S=1$ and $N=30$. The uncertainty-aware
scheme works.}
\end{figure}

\subsection{Identifiability of the Matrix Completion\label{subsec:Identifiability-of-the}}

Recall that under the sensor-aware approach, the observation set $\Omega$
is formed according to the deterministic sensor deployment $\{\bm{z}_{m}\}$
and the windows size parameter $b$. Thus, it is important to understand
whether the matrix is completable based on $\Omega$. Although establishing
a rigorous approach to determining the best observation pattern of
$\hat{\bm{H}}$ with \emph{both} the interpolation performance and
the identifiability conditions \cite{CanRec:J12,PimDan:J16,CheChiFan:J20}
taken into account still remains an open question that goes beyond
the scope of this paper, we have obtained the following insights with
a simple strategy to guarantee the identifiability for matrix completion
with high probability.

First, it is found that, for uniformly random sensor deployment, the
proposed scheme tends to naturally meet the identifiability conditions
required by the low rank matrix completion \cite{CanRec:J12,PimDan:J16,CheChiFan:J20,CheBhoSan:J15}.
For sensor-aware deployment, our observations are as follows. For
a moderate to large $M$, recall that a subset of entries $\hat{H}_{ij}$
with $M_{0}$ measurements within a window size $b$ are constructed
via interpolation, where $M_{0}$ was chosen based on the identifiability
of the local polynomial regression model, and $b$ was optimized such
that the interpolated error for $\hat{H}_{ij}$ is minimized. This
step naturally forms a large number of observations $\hat{H}_{ij}$,
and such a phenomenon is consistent with the identifiability condition
of low rank matrix completion as observed from our experiments. For
a small $M$, the window size optimization tends to select a large
window size $b$ to meet the identifiability condition of the local
polynomial regression model (see Section \ref{subsec:Bandwidth-Optimization-with}),
and meanwhile, a large $b$ is also needed for reducing the local
interpolation error. As a result, a large number of observed entries
are constructed from the interpolation under a large window size $b$;
this, again, is consistent with the identifiability condition as observed
from our experiments.

Then, following the above insight, we introduce a simple strategy
in our implementation to guarantee the identifiability for matrix
completion with high probability. Specifically, we introduce a constraint
$b_{\text{\text{min}}}\leq b\leq b_{\text{max}}$ in (\ref{eq:optb})
to control the window size parameter $b$, where $b_{\text{\text{min}}}$
is chosen such that the number of elements in the observation set
$\Omega$ formed in (\ref{eq:model-sparse-observation}) is no fewer
that $\omega N^{2}$, where $\omega$ is a target sampling ratio which
is typically chosen to meet the asymptotic scaling rule $\omega N^{2}\geq CN\text{log}^{2}(N)$
\cite{CanPla:J10} with a sufficiently large parameter $C$ to guarantee
sufficient observations for small to medium $N$. In addition, after
obtaining $b$ from solving (\ref{eq:optb}), we adopt a simple identifiability
check for the observation set $\Omega$, where we require each row
and each column of the sparse observation matrix $\hat{\bm{H}}$ has
at least one entry in $\Omega$. If such a criterion is not satisfied,
then $b$ is repeatedly increased by $20$\% until such a criterion
is met. We have observed that, following this strategy, the proposed
matrix completion is robust for small to large $M$.

\subsection{Complexity}

The computation of the interpolation assisted matrix completion method
consists of two parts. First, the interpolation step has a computational
complexity $\mathcal{O}(M^{2}\omega N^{2})$ as $M$ is the number
of measurements and $N$ represents the resolution of the reconstructed
map. Here, $\omega<1$ is the interpolation rate, since only a subset
of grid points are constructed by the interpolation, and moreover,
the interpolation of each grid point requires only a \emph{subset}
of $M$ measurements nearby. More specifically, if a uniform sampling
approach is adopted to form $\Omega$ as discussed in Section II-B,
then only $CN\text{log}^{2}(N)$ grid cells are needed to be constructed.
As such, $\omega$ can be upper bounded by $CN\text{log}^{2}(N)/N^{2}$,
and the complexity of the interpolation step is $\mathcal{O}(M^{2}N\text{log}^{2}(N))$.

Second, for the NNM-t approach, the matrix completion step has complexity
$\mathcal{O}(CN\log^{3}(2N))$, where $C$ is a constant that captures
the \emph{incoherence} of the matrix and the accuracy requirement
of the matrix completion algorithm \cite{JaiNet:B15}. Thus, the total
complexity of the NNM-t method is $\mathcal{O}(M^{2}\omega N^{2}+CN\log^{3}(2N))$.
In particular, under uniform sampling, the total complexity can be
simplified to $\mathcal{O}(M^{2}N\text{log}^{2}(N))$ for large $N$
and $M$.

To benchmark, Kriging has a computational complexity of $\mathcal{O}(M^{3}N^{2})$
for constructing $N^{2}$ points \cite{SriDurMur:B10}, because, first,
Kriging needs to learn a semi-variogram, and second, Kriging requires
to use all the $M$ measurements to reconstruct every points. A numerical
comparison on the computational time is reported in Table \ref{tab:Computation-time-(seconds)}.

\section{Numerical Results\label{sec:Numerical-Results}}

\begin{figure}
\includegraphics{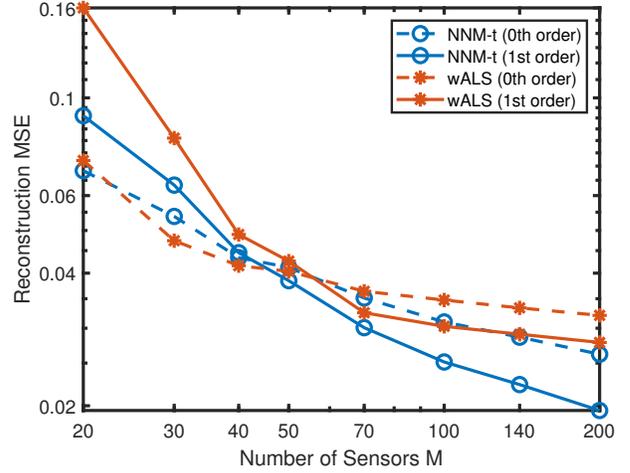}

\caption{\label{fig:cmp zeroth first} Reconstruction MSE for $S=1$ and $N=30$.
Zeroth order interpolation performs better for a small number sensors,
whereas, first order method prefers more sensors.}
\end{figure}

We adopt model (\ref{eq:model-propagation-field}) to simulate the
propagation map in an $L\times L$ area for $L=2$ kilometers, where
$g_{k}(d)=P_{k}d^{-1.5}A(f)^{-d}$, with parameter $A(f)=0.8$, corresponding
to an empirical energy field of underwater acoustic signal at frequency
$f=5$ kHz \cite{BRELeoMak:B04,StoMil:J09}, $d=\sqrt{x^{2}+y^{2}+h^{2}}$
represents the distance from the source, $(x,y)$ is the coordinate,
$h=400$ meters is the depth of interest for a multiple source scenario
we test later, and $h=1$ kilometers is for a single source scenario.
The parameter $P_{k}$ follows an independent exponential distribution
with rate parameter $\kappa=1$ to model the power emitted from source
$k$. The shadowing component in log-scale $\mbox{log}_{10}\zeta$
is modeled using a Gaussian process with zero mean and auto-correlation
function $\mathbb{E}\{\mbox{log}_{10}\zeta(\bm{z}_{i})\mbox{log}_{10}\zeta(\bm{z}_{j})\}=\sigma_{\text{s}}^{2}\mbox{exp}(-||\bm{z}_{i}-\bm{z}_{j}||_{2}/d_{\text{c}})$,
in which $d_{\text{c}}=200$ meters, $\sigma_{s}^{2}=1$. The sensors
are randomly and independently distributed in the area following a
uniform distribution. The measurement model in (\ref{eq:model-measurement})
is adopted, where the measurement noise $\epsilon$ is modeled as
a zero mean Gaussian random variable with standard deviation $\sigma=0.02$.

The MSE of the reconstructed propagation map is employed for performance
evaluation, which is calculated through $1/N^{2}||\bar{\bm{H}}-\bm{H}||_{F}^{2}$
as stated in Section \ref{subsec:Analytical-minimum-} .

The proposed interpolation assisted matrix completion is implemented
following Algorithm \ref{alg:Propagation-field-reconstruction} under
$M_{0}=7$ and confident level parameter $\delta=0.05$ as explained
in Sections \ref{subsec:Bandwidth-Optimization-with} and \ref{subsec:Weighted-ALS},
respectively. We evaluate two versions of implementation of the proposed
scheme. Scheme 1) NNM-t: The observation set $\Omega$ is constructed
using uniformly random sampling with $C=1.6$ as described in Section
\ref{subsec:Field-Reconstruction-via}, and $b_{\max}=0.35L$ to guarantee
that the interpolation only focuses on a local area. To circumvent
the possible singularity issue of estimating the interpolation parameters
via computing (\ref{eq:mu first}) and (\ref{eq:nu first}) for a
too small $b$, where $b$ serves as the kernel parameter in (\ref{eq:LS}),
we lower bound the kernel parameter as $b'=\max\{b,b_{\min}^{(i,j)}\}$
for each specific entry $(i,j)\in\Omega$, where $b_{\min}^{(i,j)}$
is the minimum radius from the grid center $\bm{c}_{ij}$ to include
$M_{0}$ sensor measurements. Scheme 2) NNM-t (adaptive): The observation
set $\Omega$ is constructed using sensor-aware sampling with an optimized
$b$ via solving (\ref{eq:optb}), where the parameter of identifiability
check is set as $C=2.5$.\footnote{The parameter $C=1.6$ and $2.5$ correspond to $60\%$ and over $90\%$
observation ratio, respectively, at matrix dimension $N=30$. Hence,
the NNM-t (adaptive) scheme requires high complexity in the interpolation
step.} In addition, for each grid cell, we optimize an individual window
size $b_{ij}$ by solving (\ref{eq:optb}) while dropping the summation.
The observation $\hat{H}_{ij}$ and the confident interval $\mathcal{I}_{ij}$
are computed based on $b_{ij}$.

The performance is compared with the following baselines that are
recently developed or adopted in related literature. Baseline $1$:
Kriging method \cite{ChoVin:J18,SatKoyFuj:J17}, the propagation map
is reconstructed by Universal Kriging $\hat{\rho}(\bm{c})=\hat{m}(\bm{c})+\epsilon(\bm{c})$,
where $\hat{m}(\bm{c})$ is the interpolated deterministic part, and
$\epsilon(\bm{c})$ is the interpolated residual which can be estimated
with an Ordinary Kriging method. Baseline $2$: $k$-nearest neighbor
local polynomial interpolation (k-LP) \cite{VerFunRaj:J16}, a first
order local polynomial regression method is used to estimate the \Ac{rss}
using $k$ nearest measurements, where $k$ is chosen through cross
validation. Baseline 3: On-grid low rank matrix completion (On-grid
LRMC) \cite{CanRec:J12}, based on the sensor topology, we first form
a refined on-grid measurement set as follows: if there is a sensor
in a grid cell, we re-sample the grid cell by sampling at the grid
center, and hence, there will be no discretization error. Note that
since this approach is not practical in real life, this scheme is
for performance benchmarking only. Then, the matrix is completed by
solving a conventional NNM problem. Baseline 4: Thin plate spline
(TPS) \cite{JuaGonGeo:J11}, a thin plate spline method is used to
construct $\hat{\rho}(\bm{c})$ from the scattered measurements $\{\bm{z}_{m},\gamma_{m}\}$.

\begin{figure}
\subfigure{\includegraphics[width=0.5\columnwidth]{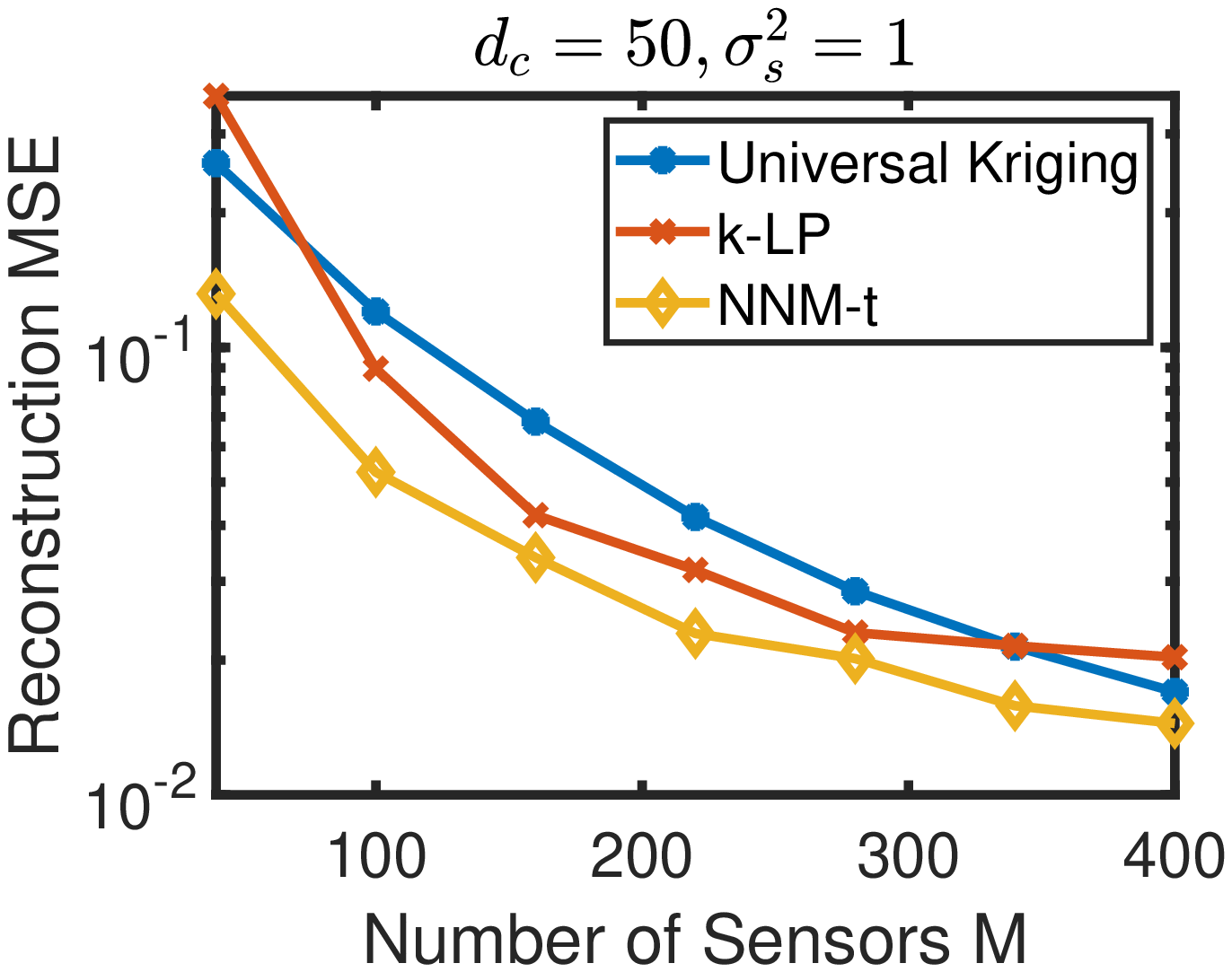}}\subfigure{\includegraphics[width=0.5\columnwidth]{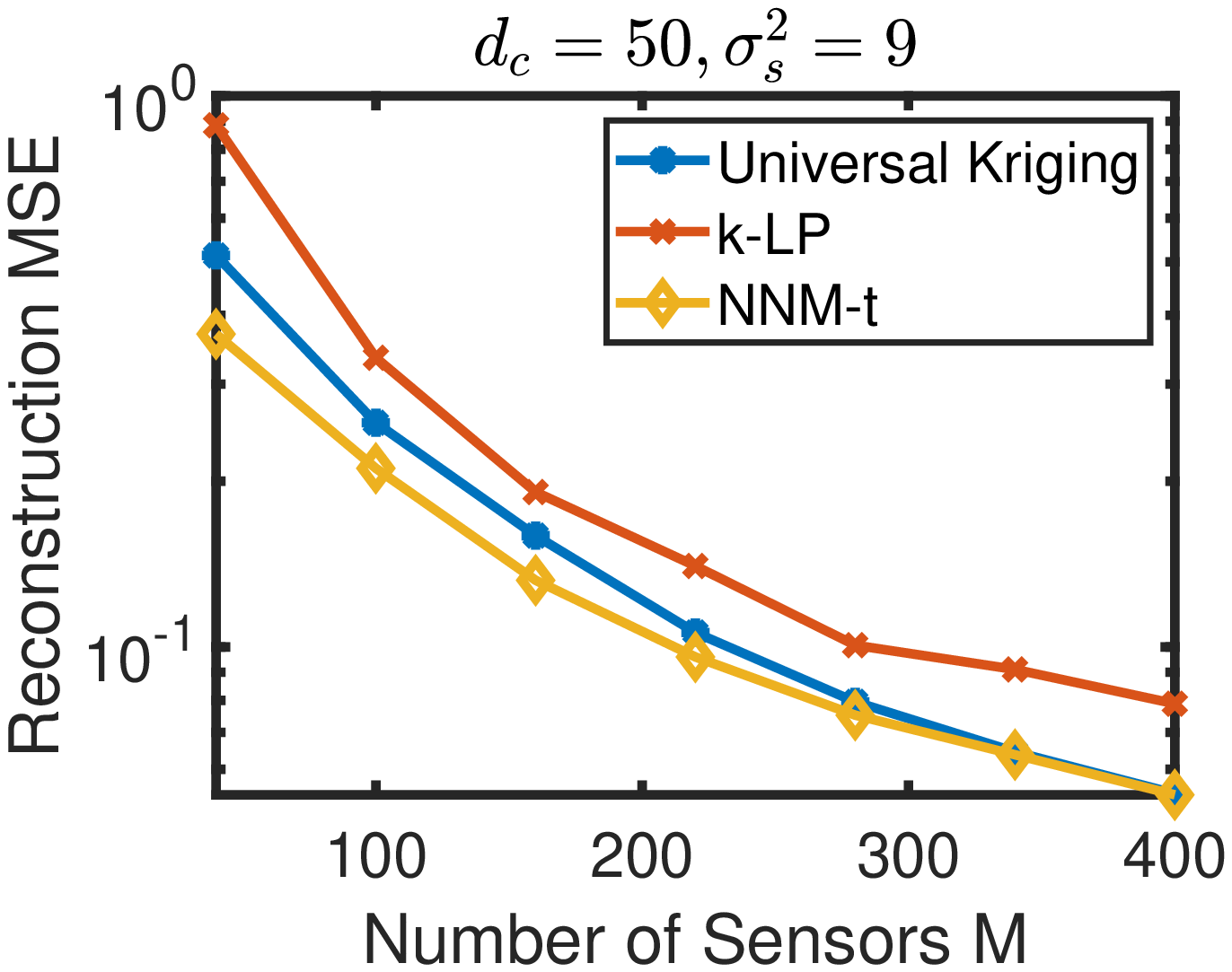}}\\

\subfigure{\includegraphics[width=0.5\columnwidth]{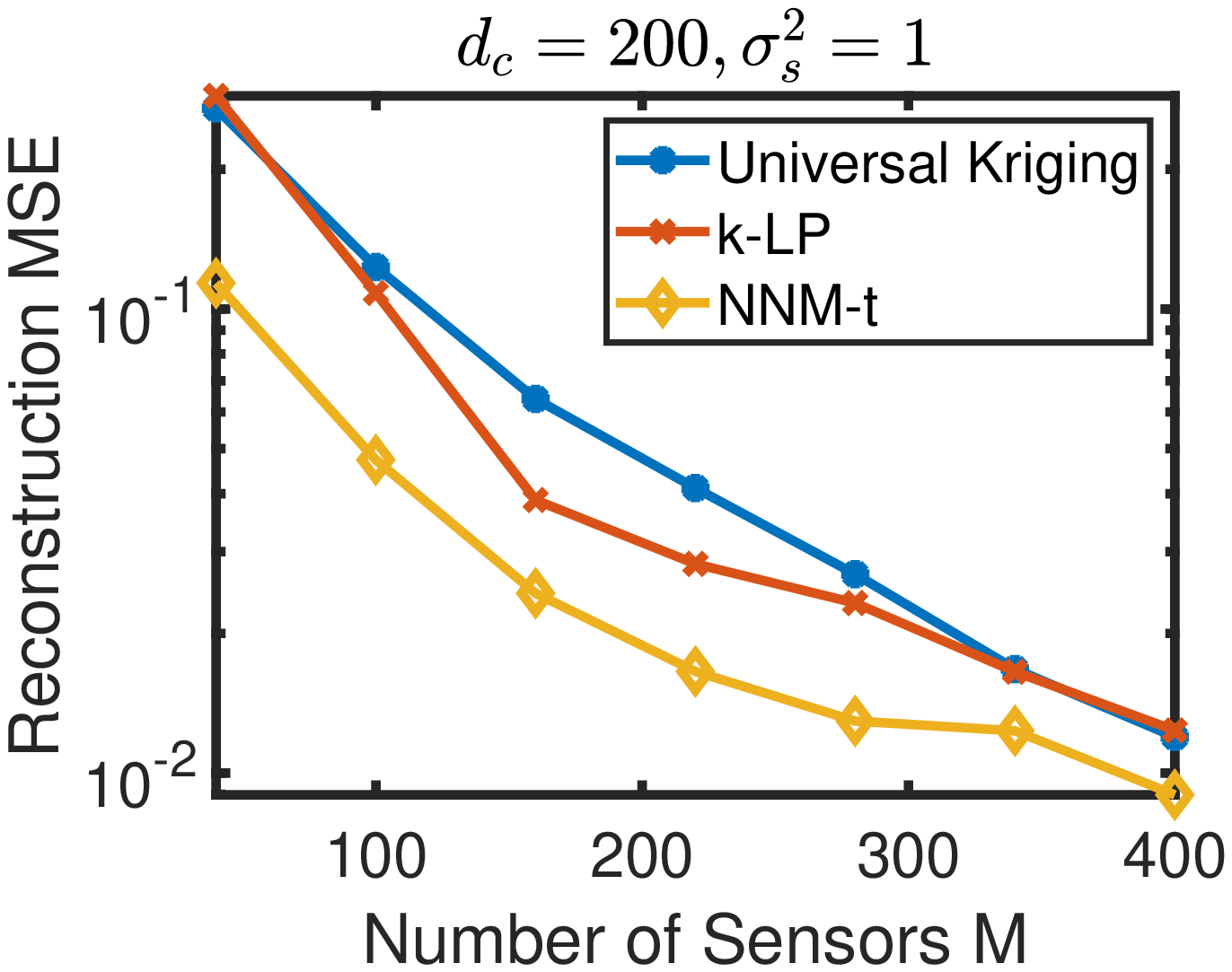}}\subfigure{\includegraphics[width=0.5\columnwidth]{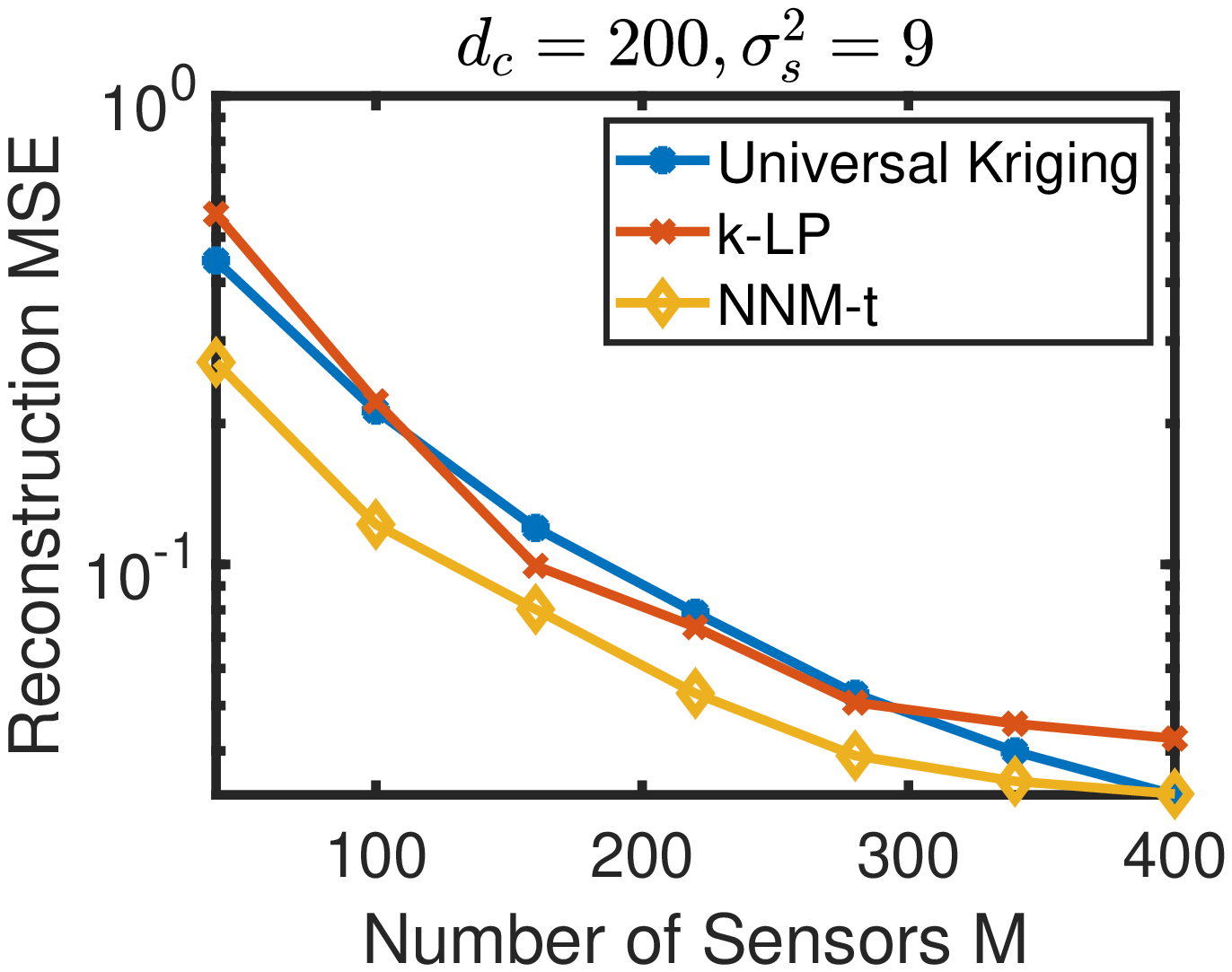}}

\caption{\label{fig:shadowing parameters}Performance under different shadowing
parameters.}
\end{figure}

\subsection{Zeroth Order or First Order Interpolation}

We evaluate the performance of the proposed schemes over different
interpolation models under different number of measurement samples.
For both NNM-t and wALS schemes, the zeroth order interpolation (\ref{eq:sol-0th})
and first order interpolation (\ref{eq:firstsol}) are used to construct
$\hat{H}_{ij}$. In a single source propagation map, $S=1$, and for
$N=30$, Fig.~\ref{fig:cmp zeroth first} shows that when there are
only a few measurements, the methods based on zeroth order interpolation
attain a smaller MSE than their counterparts based on first order
models. This is because zeroth order models have less parameters to
estimate. On the other hand, for medium to large number of measurements,
methods based on first order interpolation perform better, because
first order interpolation can substantially eliminate the bias in
the local interpolation as revealed in our asymptotic analysis stated
in Theorems \ref{thm:asymp 0th} and \ref{thm:asymp 1st }.

\subsection{Performance under Different Shadowing Parameters}

We quantify the propagation map reconstruction performance of the
proposed schemes under different shadowing parameters. The simulations
are performed under $d_{c}=\left\{ 50,200\right\} $ meters and $\sigma_{s}^{2}=\left\{ 1,9\right\} $
separately with the baselines are Universal Kriging and k-LP. The
experiment results shown in Fig.~\ref{fig:shadowing parameters}
reveal that the increasing of shadowing variance decreases the performance
of all baselines. The larger the correlation distance, the better
the performance.
\begin{figure}
\includegraphics{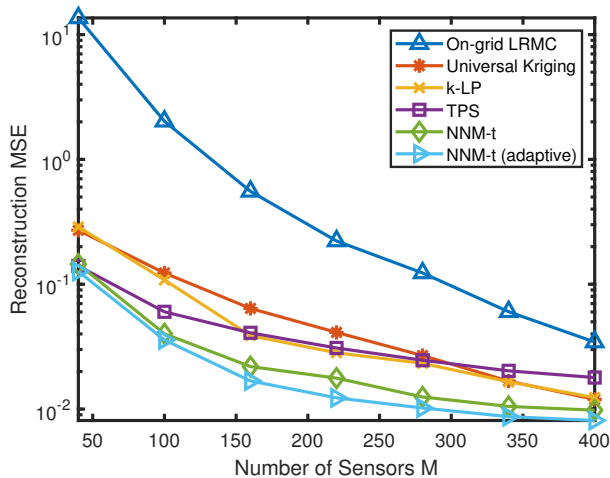}\caption{\label{fig:dimension}Reconstruction MSE for $S=3$ and $N=30$.}
\end{figure}
\begin{figure}
\includegraphics{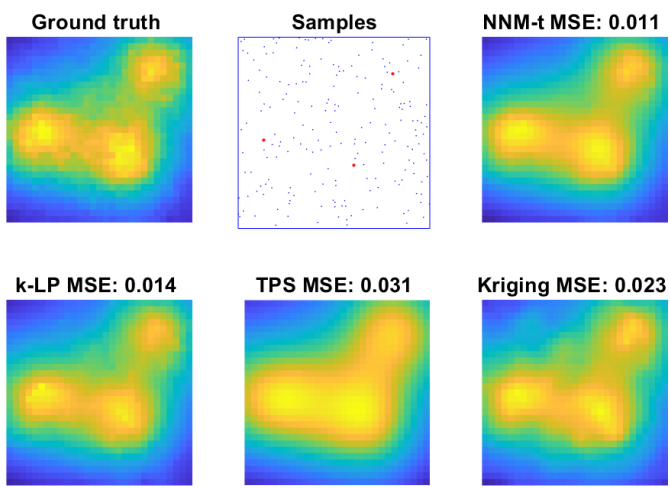}\caption{\label{fig:Ground-truth,-sampling}Ground truth, sampling pattern,
and the reconstruction MSE of different methods.}
\end{figure}

\subsection{Performance under Different Measurements}

We quantify the propagation map reconstruction performance of the
proposed schemes under different number of measurements $M=40-400$
with fixed resolution $N=30$ and number of sources $S=3$.

Fig.~\ref{fig:dimension} demonstrates the reconstruction \ac{mse}
versus the number of sensors $M$. A visual plot of a realization
of the propagation field, sampling pattern, and the reconstruction
results under $M=200$ are demonstrated in Fig.~\ref{fig:Ground-truth,-sampling}.
It is observed that the proposed NNM-t method can realize 10\%\textendash 50\%
MSE reduction from all baseline schemes under medium to large $M$,
which translates to a saving of roughly $1/3$ of sensor measurements
at a medium $M$. At small $M$, \emph{i.e.}, $M=40$, its performance
is similar to that of the TPS scheme. The NNM-t (adaptive) method
works better than the NNM-t method and outperforms all baseline schemes
because it optimizes the window size parameter $b_{ij}$ for each
grid cell in an adaptive way. The Universal Kriging and k-LP schemes
perform well at a large $M$, but their performance degrades at small
$M$, whereas, the TPS scheme works well at small $M$, but suffers
from substantial performance degradation at large $M$. Finally, the
On-grid LRMC scheme performs badly because the number of observations
$M=40$\textendash $400$ is substantially smaller than the total
number of entries $N^{2}=900$ to be recovered, and the matrix is
likely non-identifiable in this regime.

Table~\ref{tab:Computation-time-(seconds)} summarizes the computation
time based on our implementation using Matlab. The Universal Kriging
baseline is implemented by the \textquotedblleft mGstat\textquotedblright{}
function provided in the Geostatistical toolbox in Matlab. The NNM-t
problem is solved by CVX using an interior point method. The complexity
of k-LP and LRMC schemes is for benchmarking because they are the
building blocks of the proposed NNM-t. As observed in Table \ref{tab:Computation-time-(seconds)},
the proposed method has a lower complexity than Universal Kriging
at medium to large $M$ as expected from our analysis. 
\begin{table}
\caption{\label{tab:Computation-time-(seconds)}Computation time (seconds)
at different $M$ under $N=30$}

\begin{tabular}{c|ccccccc}
$M$ & 40 & 100 & 160 & 220 & 280 & 340 & 400\tabularnewline
\hline 
k-LP & 0.39 & 0.53 & 0.79 & 1.19 & 1.64 & 1.85 & 2.05\tabularnewline
LRMC & 6.63 & 6.39 & 6.60 & 6.76 & 6.45 & \multicolumn{1}{c}{7.11} & 6.91\tabularnewline
\hline 
U-Kriging & 2.51 & 5.77 & 10.63 & 19.49 & 28.32 & 40.37 & 55.76\tabularnewline
NNM-t & 8.26 & 6.10 & 6.67 & 8.57 & 6.72 & 6.98 & 8.73\tabularnewline
\end{tabular}
\end{table}

\begin{figure}
\includegraphics{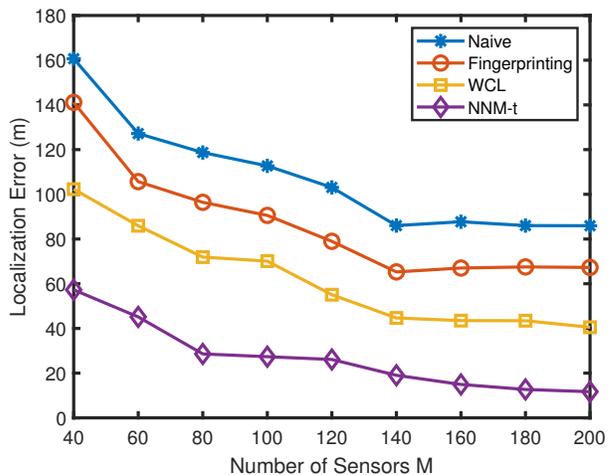}\caption{\label{fig:Source-localization-error}Source localization error for
$S=1$ and $N=30$.}
\end{figure}

The gain of the proposed method over a pure interpolation scheme can
be interpreted from the following two aspects. First, although there
is shadowing, the matrix of the propagation field is still relatively
low rank when choosing a large $N$ for the matrix dimension. In addition,
the proposed NNM-t method can dynamically adjust the windows size
$b$ such that there will always be enough observations to guarantee
the recoverability of the matrix completion. Second, a pure interpolation
method suffers from performance loss in the regions where measurements
are locally too sparse, typically, at the edge of the area of interest.
By contrast, the proposed interpolation assisted matrix completion
approach can fill in the missing values in these regions using the
global information constructed from the other part of the propagation
field.

\subsection{Application in RSS-based Source Localization}

Based on the reconstructed propagation map, we demonstrate the application
of propagation map reconstruction under NNM-t for $S=1$ to \ac{rss}-based
source localization. The localization algorithm based on matrix factorization
developed in \cite{CheMit:J17} is adopted. Specifically, based on
the reconstructed propagation map matrix $\bar{\bm{H}}$, the dominant
singular vectors $\bm{u}$ and $\bm{v}$ of $\bar{\bm{H}}$ are computed.
Then, the coordinates of the source are estimated via peak localization
of the dominant singular vectors $\bm{u}$ and $\bm{v}$ \cite{CheMit:J17}.

Three baseline schemes for localization are evaluated. Baseline $1$:
WCL \cite{WanUrrHanCab:J11}, we use $\hat{\bm{s}}_{\text{WCL}}=\sum_{m=1}^{M}w_{m}\bm{\bm{z}}_{m}/\sum_{m=1}^{M}w_{m}$
to estimate the location of the source, where $\text{\ensuremath{w_{m}}}=\gamma_{m}$
serves as the weight. Baseline $2$: naive method, the location of
the source is chosen as the location of the sensor with largest measurements.
Baseline 3: Fingerprinting \cite{JukMar:J15}, the propagation map
is first constructed using the Delaunay triangulation, and then, the
location with the largest RSS in the reconstructed propagation map
is selected as the location estimate of the source.

Fig.~\ref{fig:Source-localization-error} shows the RMSE of RSS-based
source localization based on the propagation map reconstructed from
the proposed NNM-t method. The RMSE is reduced by roughly a half and
both baseline schemes require $2$ times more sensors to achieve the
same localization accuracy.
\begin{figure}
\subfigure{\includegraphics[width=0.5\columnwidth]{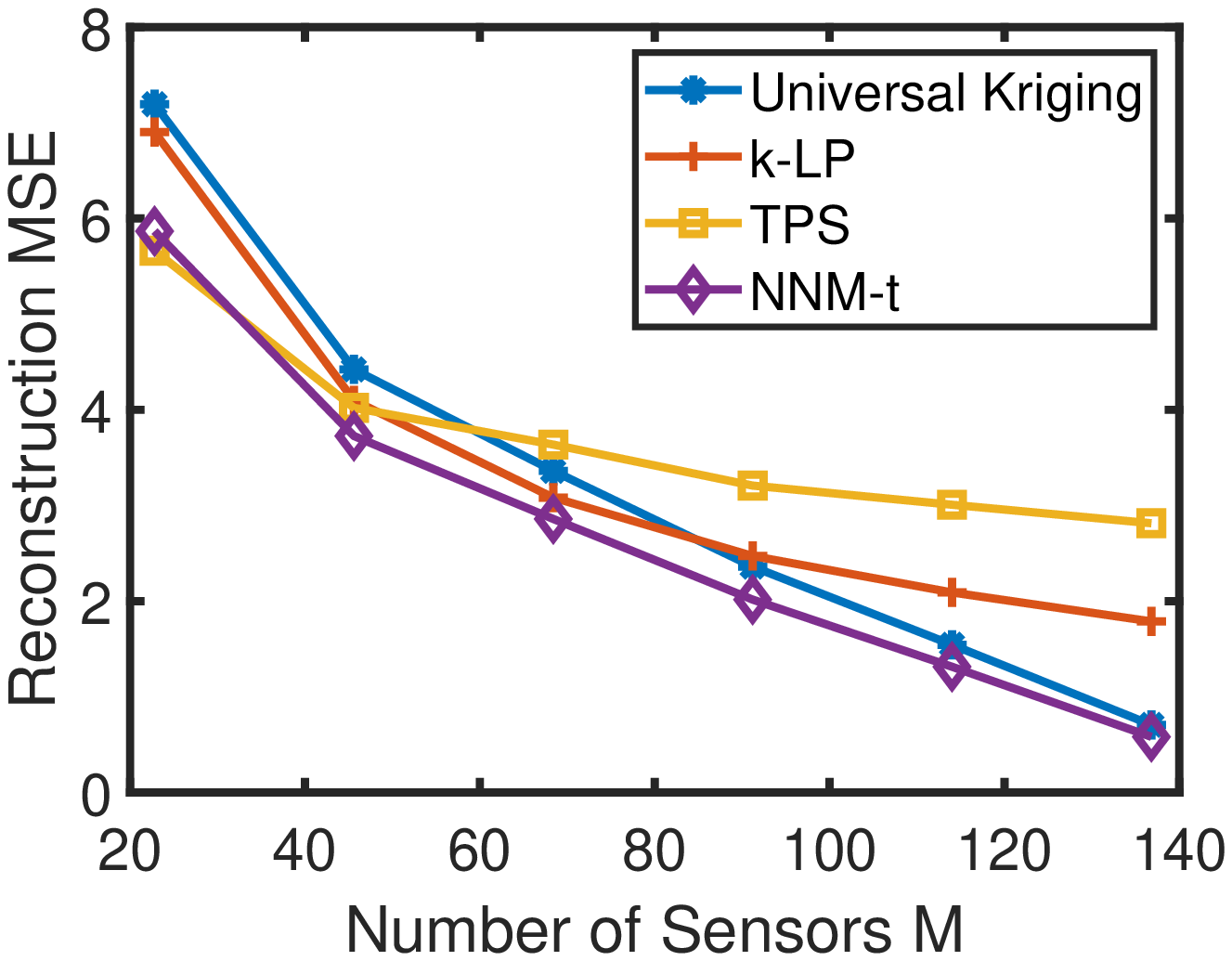}}\subfigure{\includegraphics[width=0.5\columnwidth]{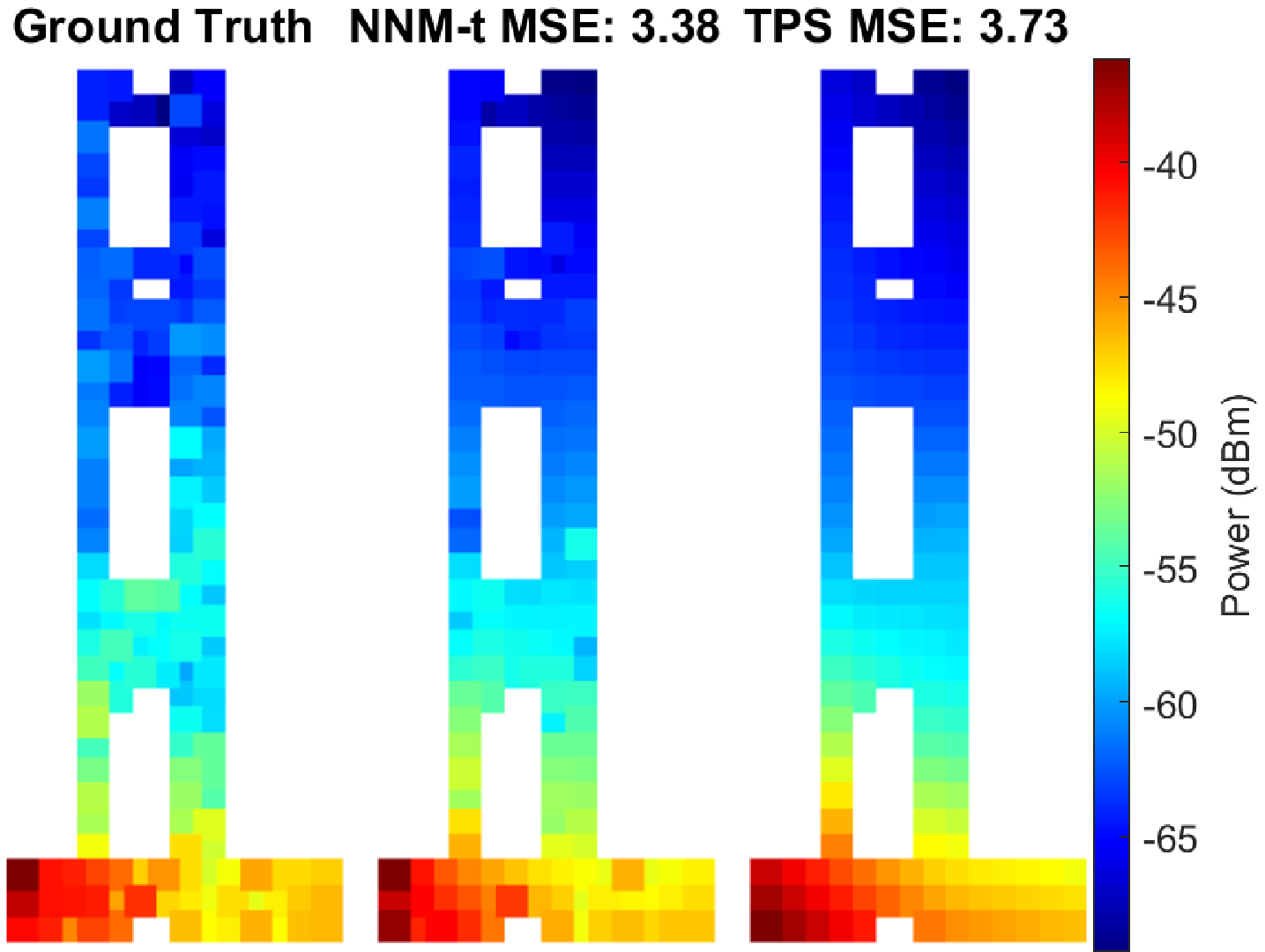}}

\qquad  \ \ \,\, \,\,\,\,\,\, \,\quad \footnotesize (a) \qquad \qquad \qquad \qquad \qquad   \, \footnotesize(b)

\caption{\label{fig:real data}The reconstruction performance evaluated on
real data set. (a) The reconstruction MSE versus the number of measurements
used for propagation map construction. (b) Visualization of the propagation
field in an indoor environment, showing the ground truth, the reconstruction
from NNM-t, and the reconstruction from TPS.}
\end{figure}

\subsection{Experiment on a Real Dataset}

We extend the experiment to a real dataset \cite{mannheim-compass-20080411}.
In this dataset, there are $M=166$ RSS measurements for each emitter.
The measurements are taken in a $14\times34\ \text{m}^{2}$ indoor
area. In the implementation of the proposed algorithm, the area is
divided into $1\times1\ \text{m}^{2}$ grid cells such that the $166$
measurements are in the grid center. Among the $166$ measurements,
$M'=20$ \textendash{} 140 measurements are randomly selected for
reconstructing the prorogation field. The results in reconstruction
\ac{mse} are reported in Fig.~\ref{fig:real data}. It is shown
that the proposed NNM-t method also achieves a promising performance
on a real dataset.

\section{Conclusion\label{sec:Conclusion}}

In this paper, an interpolation assisted matrix completion method
is proposed to reconstruct a propagation map from a sparse set of
signal strength measurements. The observation matrix is enriched through
interpolation and the bias and variance of the interpolation method
are analyzed to build uncertainty-aware matrix completion algorithms
which integrate interpolation with matrix completion. Furthermore,
a minimum MSE method to optimize the window size for local interpolation
is proposed. The numerical results demonstrate that the reconstruction
MSE of the propagation map can be reduced by $10$\%\textendash $50$\%
compared to the state-of-the-art schemes which corresponds to a saving
of nearly half of measurements under a not-so-dense sampling. In the
application of RSS-based source localization, the RMSE of localization
is reduced by more than $50$\% from a WCL baseline.

% ----

\appendices{}

% ----

\section{Proof of Theorem \ref{thm1:0-th bv} \label{app:theorem1}}

Applying first-order Taylor's expansion to $\rho(\bm{z})$ at the
neighborhood of $\bm{c}_{ij}$, the observation model in (\ref{eq:model-measurement})
becomes 
\begin{align}
\gamma_{m} & =\rho(\text{\ensuremath{\bm{z}_{m}}})+\epsilon_{m}\nonumber \\
 & =\rho(\bm{c}_{ij})+\nabla\rho^{\text{T}}(\bm{c}_{ij})(\bm{z}_{m}-\bm{c}_{ij})+o(||\bm{z}_{m}-\bm{c}_{ij}||)+\epsilon_{m}\label{eq:first order expan}
\end{align}
due to the first order differentiability of $\rho(\bm{z})$.

Recall that $\bar{w}_{m}(\bm{c}_{ij})\triangleq w_{m}(\bm{c}_{ij})/\sum_{i=1}^{M}w_{i}(\bm{c}_{ij})$,
and therefore, $\sum_{m=1}^{M}\bar{w}_{m}(\bm{c}_{ij})=1$. The expectation
of the zeroth order solution $\hat{\alpha}(\bm{c}_{ij})$ in (\ref{eq:sol-0th})
can be written as

\begin{align}
 & \mathbb{E}\left\{ \hat{\alpha}(\bm{c}_{ij})\right\} \nonumber \\
 & =\mathbb{E}\left\{ \frac{\sum_{m=1}^{M}w_{m}(\bm{c}_{ij})\gamma_{m}}{\sum_{m=1}^{M}w_{m}(\bm{c}_{ij})}\right\} \nonumber \\
 & =\mathbb{E}\Bigg\{\sum_{m=1}^{M}\bar{w}_{m}(\bm{c}_{ij})(\rho(\bm{c}_{ij})+\nabla\rho^{\text{T}}(\bm{c}_{ij})(\bm{z}_{m}-\bm{c}_{ij})\nonumber \\
 & \qquad\quad+o(||\bm{z}_{m}-\bm{c}_{ij}||)+\epsilon_{m})\Bigg\}\nonumber \\
 & =\mathbb{E}\Bigg\{\sum_{m=1}^{M}\bar{w}_{m}(\bm{c}_{ij})(\rho(\bm{c}_{ij})+\nabla\rho^{\text{T}}(\bm{c}_{ij})(\bm{z}_{m}-\bm{c}_{ij})\nonumber \\
 & \qquad\quad+o(||\bm{z}_{m}-\bm{c}_{ij}||))\Bigg\}+\sum_{m=1}^{M}\bar{w}_{m}(\bm{c}_{ij})\mathbb{E}\left\{ \epsilon_{m}\right\} \nonumber \\
 & =\rho(\bm{c}_{ij})+\sum_{m=1}^{M}\bar{w}_{m}(\bm{c}_{ij})(\nabla\rho^{\text{T}}(\bm{c}_{ij})(\bm{z}_{m}-\bm{c}_{ij}))+o(b)\label{eq:expectation-0th}
\end{align}
where $o(b)$ in the last equation is due to the fact that $o(||\bm{z}_{m}-\bm{c}_{ij}||)/b\to0$
as $b\to0$ since we are only interested in the $(i,j)$th grid such
that $||\bm{z}_{m}-\bm{c}_{ij}||<b$ according to the interpolation
strategy.

As a result, the bias
\[
\begin{aligned}\mathbb{E}\{\xi_{ij}\} & =\mathbb{E}\left\{ \hat{\alpha}(\bm{c}_{ij})\right\} -\rho(\bm{c}_{ij})\\
 & =\sum_{m=1}^{M}\bar{w}_{m}(\bm{c}_{ij})(\nabla\rho^{\text{T}}(\bm{c}_{ij})(\bm{z}_{m}-\bm{c}_{ij}))+o(b).
\end{aligned}
\]

For the variance of $\xi_{ij}=\hat{\alpha}(\bm{c}_{ij})-\rho(\bm{c}_{ij})$,
we note that $\mathbb{V}\left\{ \xi_{ij}\right\} =\mathbb{V}\left\{ \hat{\alpha}(\bm{c}_{ij})\right\} $
because $\rho(\bm{c}_{ij})$ is deterministic. From (\ref{eq:sol-0th})
and (\ref{eq:model-measurement}),

\[
\begin{aligned}\hat{\alpha}(\bm{c}_{ij}) & =\sum_{m=1}^{M}\bar{w}_{m}\rho(\bm{c}_{ij})+\sum_{m=1}^{M}\bar{w}_{m}\epsilon_{m}.\end{aligned}
\]

So, $\mathbb{V}\left\{ \xi_{ij}\right\} =\mathbb{V}\left\{ \sum_{m=1}^{M}\bar{w}_{m}\epsilon_{m}\right\} =\sum_{m=1}^{M}\bar{w}_{m}^{2}(\bm{c}_{ij})\sigma^{2}$,
since $\mathbb{V}\left\{ \epsilon_{m}\right\} =\sigma^{2}$ from the
observation model in (\ref{eq:model-measurement}). The results in
Theorem 1 are thus proven.

\section{Proof of Theorem \ref{thm2:1st bv}\label{app:the2}}

Applying second order Taylor's expansion to $\rho(\bm{z}_{m})$ at
the neighborhood of $\bm{c}_{ij}$, we have 
\begin{equation}
\begin{aligned}\rho\left(\bm{z}_{m}\right) & =\rho(\bm{c}_{ij})+\nabla\rho^{\text{T}}(\bm{c}_{ij})(\bm{z}_{m}-\bm{c}_{ij})\\
 & +\frac{1}{2}(\bm{z}_{m}-\bm{c}_{ij})^{\text{T}}\bm{\Psi}_{ij}(\bm{z}_{m}-\bm{c}_{ij})+o(\|\bm{z}_{m}-\bm{c}_{ij}\|^{2})
\end{aligned}
\label{eq:second order rho}
\end{equation}
where $\bm{\Psi}_{ij}=\nabla^{2}\rho(\bm{c}_{ij})$ is the Hessian
matrix of $\rho(\bm{z})$ evaluated at point $\bm{c}_{ij}$.

Denote $\bm{\varrho}\triangleq\left[\rho(\bm{z}_{1}),\cdots,\rho(\bm{z}_{M})\right]^{\text{T}}$.
Then the expression in (\ref{eq:second order rho}) can be rearranged
into the following matrix form 
\begin{equation}
\bm{\varrho}=\tilde{\bm{D}}_{ij}^{\text{T}}\left[\begin{array}{c}
\rho(\bm{c}_{ij})\\
\nabla\rho(\bm{c}_{ij})
\end{array}\right]+\frac{1}{2}\text{diag}\{\bm{D}_{ij}^{\text{T}}\bm{\Psi}_{ij}\bm{D}_{ij}\}+\bm{r}_{ij}\label{eq:matrixform_firstorder}
\end{equation}
where $\bm{r}_{ij}$ is a vector of the residual terms $o(||\bm{z}_{m}-\bm{c}_{ij}||^{2})$
. Noticing that the residual from the Taylor's expansion scales as
$||\bm{z}_{m}-\bm{c}_{ij}||^{2}$ with a bound $||\bm{z}_{m}-\bm{c}_{ij}||<b$
due to the interpolation strategy, thus we have $o(||\bm{z}_{m}-\bm{c}_{ij}||^{2})/b^{2}\to0$
as $b\to0$. Therefore, $\bm{r}{}_{ij}\sim o(b^{2})$.

From the measurement model (\ref{eq:model-measurement}), we have
$\bm{\gamma}=\bm{\varrho}+\bm{\epsilon}$ , where $\bm{\epsilon}=[\epsilon_{1},\cdots,\epsilon_{M}]^{\text{T}}$.
The expectation of the first order solution $\hat{\bm{\theta}}=\big[\hat{\alpha}(\bm{c}_{ij})\quad\hat{\bm{\beta}}(\bm{c}_{ij})\big]^{\text{T}}$
in (\ref{eq:firstsol}) can be written as 
\begin{equation}
\begin{aligned}\mathbb{E}\{\hat{\bm{\theta}}\} & =\mathbb{E}\left\{ \tilde{\bm{W}}_{ij}^{-1}\tilde{\bm{D}}_{ij}\bm{W}_{ij}\bm{\gamma}\right\} =\tilde{\bm{W}}_{ij}^{-1}\tilde{\bm{D}}_{ij}\bm{W}_{ij}\bm{\varrho}\end{aligned}
\label{eq:epctheta}
\end{equation}
since the noise $\bm{\epsilon}$ has zero mean, where $\tilde{\bm{W}}_{ij}=\tilde{\bm{D}}_{ij}\bm{W}_{ij}\tilde{\bm{D}}_{ij}^{\text{T}}$.

Substitute $\bm{\varrho}$ in (\ref{eq:epctheta}) with (\ref{eq:matrixform_firstorder}),
the expectation of $\hat{\bm{\theta}}$ can be rewritten as 
\[
\begin{aligned} & \mathbb{E}\big\{\hat{\bm{\theta}}\big\}\\
= & \tilde{\bm{W}}_{ij}^{-1}\tilde{\bm{D}}_{ij}\bm{W}_{ij}\\
 & \text{\ensuremath{\quad}}\times\left(\tilde{\bm{D}}_{ij}^{\text{T}}\left[\begin{array}{c}
\rho(\bm{c}_{ij})\\
\nabla\rho(\bm{c}_{ij})
\end{array}\right]+\frac{1}{2}\text{diag}\{\bm{D}_{ij}^{\text{T}}\bm{\Psi}_{ij}\bm{D}_{ij}\}+\bm{r}_{ij}\right)\\
= & \left[\begin{array}{c}
\rho(\bm{c}_{ij})\\
\nabla\rho(\bm{c}_{ij})
\end{array}\right]\\
 & \quad+\tilde{\bm{W}}_{ij}^{-1}\tilde{\bm{D}}_{ij}\bm{W}_{ij}\left(\frac{1}{2}\text{diag}\{\bm{D}_{ij}^{\text{T}}\bm{\Psi}_{ij}\bm{D}_{ij}\}\right)+o(b^{2}).
\end{aligned}
\]
As a result, the bias 
\begin{align*}
\mathbb{E}\{\xi_{ij}\} & =\left[\mathbb{E}\big\{\hat{\bm{\theta}}\big\}-\left[\begin{array}{c}
\rho(\bm{c}_{ij})\\
\nabla\rho(\bm{c}_{ij})
\end{array}\right]\right]_{(1,1)}\\
 & =\frac{1}{2}\left[\tilde{\bm{W}}_{ij}^{-1}\tilde{\bm{D}}_{ij}\bm{W}_{ij}\text{diag}\{\bm{D}_{ij}^{\text{T}}\bm{\Psi}_{ij}\bm{D}_{ij}\}\right]_{(1,1)}+o(b^{2})
\end{align*}
where the operation $[\bm{A}]_{(1,1)}$ returns the $(1,1)$th entry
of a matrix $\bm{A}$.

The variance of $\hat{\bm{\theta}}$ can be derived following 
\[
\begin{aligned}\mathbb{V}\{\hat{\bm{\theta}}\} & =\mathbb{E}\left\{ \left(\hat{\bm{\theta}}-\mathbb{E}\{\hat{\bm{\theta}}\}\right)^{2}\right\} \\
 & =\mathbb{E}\left\{ \left(\tilde{\bm{W}}_{ij}^{-1}\tilde{\bm{D}}_{ij}\bm{W}_{ij}\bm{\gamma}-\tilde{\bm{W}}_{ij}^{-1}\tilde{\bm{D}}_{ij}\bm{W}_{ij}\bm{\varrho}\right)^{2}\right\} \\
 & =\mathbb{E}\left\{ \left(\tilde{\bm{W}}_{ij}^{-1}\tilde{\bm{D}}_{ij}\bm{W}_{ij}(\bm{\gamma}-\bm{\varrho})\right)^{2}\right\} \\
 & =\mathbb{E}\left\{ \left(\tilde{\bm{W}}_{ij}^{-1}\tilde{\bm{D}}_{ij}\bm{W}_{ij}\epsilon\right)^{2}\right\} \\
 & =\sigma^{2}\tilde{\bm{W}}_{ij}^{-1}\left(\tilde{\bm{D}}_{ij}\bm{W}_{ij}\bm{W}_{ij}^{\text{T}}\tilde{\bm{D}}_{ij}^{\text{T}}\right)\tilde{\bm{W}}_{ij}^{-1}.
\end{aligned}
\]

Since $\bm{\theta}$ is deterministic, $\mathbb{V}\{\hat{\bm{\theta}}-\bm{\theta}\}=\mathbb{V}\{\hat{\bm{\theta}}\}$,
therefore $\begin{aligned}\mathbb{V}\{\hat{\bm{\theta}}-\bm{\theta}\} & =\sigma^{2}\tilde{\bm{W}}_{ij}^{-1}\left(\tilde{\bm{D}}_{ij}\bm{W}_{ij}\bm{W}_{ij}^{\text{T}}\tilde{\bm{D}}_{ij}^{\text{T}}\right)\tilde{\bm{W}}_{ij}^{-1}.\end{aligned}
$

As a result, 
\begin{align*}
\mathbb{V}\{\xi_{ij}\} & =\left[\mathbb{V}\{\hat{\bm{\theta}}-\bm{\theta}\}\right]_{(1,1)}\\
 & =\sigma^{2}\left[\tilde{\bm{W}}_{ij}^{-1}\left(\tilde{\bm{D}}_{ij}\bm{W}_{ij}\bm{W}_{ij}^{\text{T}}\tilde{\bm{D}}_{ij}^{\text{T}}\right)\tilde{\bm{W}}_{ij}^{-1}\right]_{(1,1)}.
\end{align*}

\section{Proof of Theorem \ref{thm:asymp 0th}\label{app:the3}}

The observation model in (\ref{eq:model-measurement}) can be rewritten
as

\begin{equation}
\begin{aligned}\gamma_{m} & =\rho(\bm{z}_{m})+\epsilon_{m}\\
 & =\rho(\bm{c}_{ij})+(\rho(\bm{z}_{m})-\rho(\bm{c}_{ij}))+\epsilon_{m}.
\end{aligned}
\label{eq:asymp zeroth order model}
\end{equation}

From (\ref{eq:asymp zeroth order model}), we have
\begin{align}
 & \frac{1}{Mb^{2}}\sum_{m=1}^{M}w_{m}(\bm{c}_{ij})\gamma_{m}\qquad\qquad\qquad\qquad\qquad\qquad\nonumber \\
 & =\frac{1}{Mb^{2}}\sum_{m=1}^{M}w_{m}(\bm{c}_{ij})\rho(\bm{c}_{ij})\nonumber \\
 & \quad\quad+\frac{1}{Mb^{2}}\sum_{m=1}^{M}w_{m}(\bm{c}_{ij})(\rho(\bm{z}_{m})-\rho(\bm{c}_{ij}))\nonumber \\
 & \qquad\quad+\frac{1}{Mb^{2}}\sum_{m=1}^{M}w_{m}(\bm{c}_{ij})\epsilon_{m}\label{eq:three terms}
\end{align}
where in the rest of the proof, the three terms in (\ref{eq:three terms})
are respectively denoted as 
\begin{equation}
\hat{f}(\bm{c}_{ij})\triangleq\frac{1}{Mb^{2}}\sum_{m=1}^{M}w_{m}(\bm{c}_{ij})\label{eq:f hat}
\end{equation}
which serves as an approximation of the sensor density at point $\bm{c}_{ij}$,
\begin{equation}
\varepsilon_{1}(\bm{c}_{ij})\triangleq\frac{1}{Mb^{2}}\sum_{m=1}^{M}w_{m}(\bm{c}_{ij})(\rho(\bm{z}_{m})-\rho(\bm{c}_{ij}))\label{eq:epsilon1}
\end{equation}
which quantifies the weighted average bias of the estimate at location
$\bm{c}_{ij}$, and
\begin{equation}
\varepsilon_{2}(\bm{c}_{ij})\triangleq\frac{1}{Mb^{2}}\sum_{m=1}^{M}=w_{m}(\bm{c}_{ij})\epsilon_{m}\label{eq:epsilon 2}
\end{equation}
which quantifies the weighted average error due to the measurement
noise.

Then, dividing $\hat{f}(\bm{c}_{ij})$ on both sides of (\ref{eq:three terms})
and according to the zeroth order solution (\ref{eq:sol-0th}), it
yields 
\begin{align*}
\hat{\rho}(\bm{c}_{ij}) & =\frac{\sum_{m=1}^{M}w_{m}(\bm{c}_{ij})\gamma_{m}}{\sum_{m=1}^{M}w_{m}(\bm{c}_{ij})}\\
 & =\rho(\bm{c}_{ij})+\frac{\varepsilon_{1}(\bm{c}_{ij})}{\hat{f}(\bm{c}_{ij})}+\frac{\varepsilon_{2}(\bm{c}_{ij})}{\hat{f}(\bm{c}_{ij})}
\end{align*}
which implies that the error equals to
\begin{equation}
\begin{aligned}\xi_{ij} & =\hat{\rho}(\bm{c}_{ij})-\rho(\bm{c}_{ij})=\frac{\varepsilon_{1}(\bm{c}_{ij})}{\hat{f}(\bm{c}_{ij})}+\frac{\varepsilon_{2}(\bm{c}_{ij})}{\hat{f}(\bm{c}_{ij})}.\end{aligned}
\label{eq:zeroth aysmp model}
\end{equation}

To analyze $\varepsilon_{1}(\bm{c}_{ij})$ and $\varepsilon_{2}(\bm{c}_{ij})$,
we derive the following lemmas.
\begin{lem}
\label{lem: f} Suppose that $f(\bm{z})$ is second order differentiable
and bounded. Then, for a sufficiently small $b$,
\[
\int K\Big(\frac{\bm{x}-\bm{c}_{ij}}{b}\Big)^{2}f(\bm{x})d\bm{x}=C_{1}b^{2}f(\bm{c}_{ij})+o\left(b^{3}\right).
\]
In addition, for a function $g(\bm{x})$ that satisfies $g(b\bm{u})/b\to0$
as $b\to0,$ uniformly for all $||\bm{u}||\leq1,$ there is
\[
\frac{1}{b}\int K(\bm{u})g(b\bm{u})f(\bm{z})d\bm{u}\to0
\]
 as $b\to0,$ for all $\bm{z}$, where $C_{1}=\int K(\bm{u})^{2}d\bm{u}$.
\end{lem}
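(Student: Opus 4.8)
The plan is to prove both statements by the standard kernel rescaling $\bm{u}=(\bm{x}-\bm{c}_{ij})/b$, which maps a window of radius $b$ onto the fixed unit ball $\mathcal{C}$, followed by a Taylor expansion of the smooth factor inside the integral. For the first identity, substituting $\bm{x}=\bm{c}_{ij}+b\bm{u}$ in $\mathbb{R}^{2}$ gives $d\bm{x}=b^{2}\,d\bm{u}$ and
\begin{equation*}
\int K\!\Big(\tfrac{\bm{x}-\bm{c}_{ij}}{b}\Big)^{2}f(\bm{x})\,d\bm{x}=b^{2}\int K(\bm{u})^{2}f(\bm{c}_{ij}+b\bm{u})\,d\bm{u},
\end{equation*}
where the right-hand integral is effectively over $\mathcal{C}$ by the compact-support condition (ii). Since $f$ is differentiable, $f(\bm{c}_{ij}+b\bm{u})=f(\bm{c}_{ij})+b\,\nabla f(\bm{c}_{ij})^{\text{T}}\bm{u}+o(b)$ uniformly for $\bm{u}\in\mathcal{C}$. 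Inserting this, the constant term integrates to $C_{1}f(\bm{c}_{ij})$ by the definition of $C_{1}$; the linear term integrates to $0$ since $K(\bm{u})^{2}$ is even (symmetry of $K$) while $\bm{u}$ is odd on the symmetric set $\mathcal{C}$; and the $o(b)$ remainder, uniform on the bounded set $\mathcal{C}$, passes through the integral to contribute $o(b)$. Multiplying by $b^{2}$ yields $C_{1}b^{2}f(\bm{c}_{ij})+o(b^{3})$.

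For the second identity, I would observe that $f(\bm{z})$ is constant with respect to the integration variable $\bm{u}$, so it factors out, leaving $f(\bm{z})\int K(\bm{u})\,g(b\bm{u})/b\,d\bm{u}$. By hypothesis $\sup_{\|\bm{u}\|\le1}|g(b\bm{u})/b|\to0$ as $b\to0$, and since $K\ge0$ has support in the unit ball with $\int K(\bm{u})\,d\bm{u}=1$,
\begin{equation*}
\Big|\int K(\bm{u})\,\tfrac{g(b\bm{u})}{b}\,d\bm{u}\Big|\le\Big(\sup_{\|\bm{u}\|\le1}\Big|\tfrac{g(b\bm{u})}{b}\Big|\Big)\int K(\bm{u})\,d\bm{u}\longrightarrow0 .
\end{equation*}
Multiplying by the finite constant $f(\bm{z})$ preserves the limit, which proves the claim.

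The only point that needs care is that the $o(\cdot)$ terms coming from Taylor's theorem be \emph{uniform} over the rescaled window; this is exactly what the compact-support property of $K$ provides, since after rescaling the window is always the fixed unit ball $\mathcal{C}$. A second minor point is that the vanishing of the first-order term relies on the symmetry of $K$ (hence of $K^{2}$) rather than on the listed moment conditions in (\ref{eq:integral 0}), which only cover the odd powers $p_{1},p_{2}\in\{1,3\}$; symmetry handles the mixed $(1,0)$ and $(0,1)$ moments of $K^{2}$ directly. Neither is a genuine obstacle, so I expect the lemma to follow routinely; the second-order differentiability of $f$ assumed in the statement is in fact more than this particular estimate requires (first-order differentiability at $\bm{c}_{ij}$ suffices) and is presumably reserved for the later expansions of $\hat{f}(\bm{c}_{ij})$.
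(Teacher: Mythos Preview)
Your proposal is correct and follows essentially the same route as the paper: the substitution $\bm{u}=(\bm{x}-\bm{c}_{ij})/b$ with Jacobian $b^{2}$, a first-order Taylor expansion of $f$, the vanishing of the linear term, and a uniform bound on the remainder over the compact support $\mathcal{C}$; the second claim is handled in both cases by pulling out the sup of $|g(b\bm{u})/b|$ over $\|\bm{u}\|\le1$. The only cosmetic difference is that the paper justifies $\int K(\bm{u})^{2}\nabla f(\bm{c}_{ij})^{\text{T}}\bm{u}\,d\bm{u}=0$ by citing the moment condition (\ref{eq:integral 0}), whereas you invoke the symmetry of $K$ directly---and you correctly observe that the listed exponent set $p_{1},p_{2}\in\{1,3\}$ does not literally cover the $(1,0)$ and $(0,1)$ moments, so your symmetry argument is in fact the cleaner justification.
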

\begin{proof}
Let $\bm{u}=(\bm{x}-\bm{c}_{ij})/b$. Note that $\int d\bm{x}=\int db\bm{u}=\iint dbu_{x}dbu_{y}=\iint b^{2}du_{x}du_{y}=\int b^{2}d\bm{u}$.
We have
\begin{equation}
\int K\Big(\frac{\bm{x}-\bm{c}_{ij}}{b}\Big)^{2}f(\bm{x})d\bm{x}=b^{2}\int K(\bm{u})^{2}f(b\bm{u}+\bm{c}_{ij})d\bm{u}.\label{eq:ku2f}
\end{equation}
Consider the first order Taylor's expansion
\begin{equation}
f(b\bm{u}+\bm{c}_{ij})=f(\bm{c}_{ij})+\nabla f(\bm{c}_{ij})^{\text{T}}b\bm{u}+R_{f}(b\bm{u})\label{eq:f taylor}
\end{equation}
where $\nabla f$ is the first order derivative of $f$, $R_{f}(b\bm{u})$
is the remainder term that satisfies $R_{f}(b\bm{u})/b\to0$ as $b\to0$
uniformly for all $\bm{u}$ with $\|\bm{u}\|\leq1$ due to the Taylor's
theorem and the fact that $f(\bm{z})$ is second order differentiable.
Thus,
\begin{align}
 & \int K\left(\bm{u}\right)^{2}f(b\bm{u}+\bm{c}_{ij})d\bm{u}\nonumber \\
 & =\int K\left(\bm{u}\right)^{2}\left(f(\bm{c}_{ij})+\nabla f(\bm{c}_{ij})^{\text{T}}b\bm{u}+R_{f}(b\bm{u})\right)d\bm{u}\nonumber \\
 & =f(\bm{c}_{ij})\int K\left(\bm{u}\right)^{2}d\bm{u}+b\int K\left(\bm{u}\right)^{2}\nabla f(\bm{c}_{ij})^{\text{T}}\bm{u}d\bm{u}\label{eq:2 term}\\
 & \qquad+\int K\left(\bm{u}\right)^{2}R_{f}(b\bm{u})d\bm{u}\label{eq:o(b)-1}\\
 & =C_{1}f(\bm{c}_{ij})+o\left(b\right)\label{eq:inter result}
\end{align}
where the second term in (\ref{eq:2 term}) equals $0$, since $\iint K\left(\bm{u}\right)^{2}(a_{1}u_{x}+a_{2}u_{y})du_{x}du_{y}=0$
due to (\ref{eq:integral 0}). For (\ref{eq:o(b)-1}), since the support
of $K\left(\bm{u}\right)$ is $\mathcal{C}=\left\{ \text{\ensuremath{\bm{u}}}\in\mathbb{R}^{2}:||\bm{u}||_{2}<1\right\} $,
we have 
\begin{equation}
\begin{alignedat}{1} & \frac{1}{b}\int K\left(\bm{u}\right)^{2}R_{f}(b\bm{u})d\bm{u}\\
\leq\  & \frac{1}{b}\bigg|\underset{\bm{u}\in\mathcal{C}}{\sup}\ R_{f}(b\bm{u})\bigg|\cdot\bigg|\int_{\mathcal{C}}K\left(\bm{u}\right)^{2}d\bm{u}\bigg|
\end{alignedat}
\qquad\qquad\label{eq:o(b)}
\end{equation}
which converges to $0$ because the term $R_{f}(b\bm{u})/b$ uniformly
converges to $0$ and the integral term is bounded.

Multiplying (\ref{eq:inter result}) with $b^{2}$ and substituting
the result in (\ref{eq:ku2f}) confirms the result in Lemma \ref{lem: f}.
\end{proof}
Define a random variable $X_{m}=w_{m}(\bm{c}_{ij})\epsilon_{m}$,
where recall that $w_{m}(\bm{c}_{ij})=K\big(\frac{\bm{z}_{m}-\bm{c}_{ij}}{b}\big)$.
Then, using Lemma \ref{lem: f}, the random variable $X_{m}$ can
be shown to have the following property.
\begin{lem}
\label{lem:Xm}The mean and variance of $X_{m}$ are given by $\mathbb{E}\big\{ X_{m}\big\}=0$
and $\mathbb{V}\big\{ X_{m}\big\}=C_{1}\sigma^{2}b^{2}f(\bm{c}_{ij})+o(b^{3}).$
\end{lem}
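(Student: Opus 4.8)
The plan is to exploit the independence between the measurement noise $\epsilon_m$ and the sensor location $\bm{z}_m$, which is the key structural feature in this asymptotic regime where the sensors are \ac{iid} according to the density $f(\bm{z})$. Writing $X_m = K\big((\bm{z}_m-\bm{c}_{ij})/b\big)\epsilon_m$, I emphasize at the outset that the expectation and variance are taken \emph{jointly} over the random sensor location $\bm{z}_m$ and the measurement noise $\epsilon_m$; this is the only point that requires care.

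First I would compute the mean by conditioning on $\bm{z}_m$. Since $\epsilon_m$ is independent of $\bm{z}_m$ and has zero mean, $\mathbb{E}\{X_m\mid\bm{z}_m\}=w_m(\bm{c}_{ij})\mathbb{E}\{\epsilon_m\}=0$, and the tower property gives $\mathbb{E}\{X_m\}=0$. Because the mean is zero, the variance coincides with the second moment, so $\mathbb{V}\{X_m\}=\mathbb{E}\{X_m^2\}=\mathbb{E}\big\{w_m(\bm{c}_{ij})^2\epsilon_m^2\big\}$. Using independence once more together with $\mathbb{E}\{\epsilon_m^2\}=\sigma^2$, this factors as $\mathbb{V}\{X_m\}=\sigma^2\,\mathbb{E}\big\{w_m(\bm{c}_{ij})^2\big\}$.

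It then remains to express the surviving expectation as an integral against the sensor density and invoke Lemma \ref{lem: f}: since the $\bm{z}_m$ are \ac{iid} with density $f$,
\[
\mathbb{E}\big\{w_m(\bm{c}_{ij})^2\big\}=\int K\Big(\frac{\bm{x}-\bm{c}_{ij}}{b}\Big)^2 f(\bm{x})\,d\bm{x}=C_1 b^2 f(\bm{c}_{ij})+o(b^3),
\]
where the last equality is exactly Lemma \ref{lem: f}. Multiplying through by $\sigma^2$ (a constant, so $\sigma^2\cdot o(b^3)=o(b^3)$) yields $\mathbb{V}\{X_m\}=C_1\sigma^2 b^2 f(\bm{c}_{ij})+o(b^3)$, which is the claimed result. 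There is no genuine obstacle in this argument; the proof is a short chain of independence factorizations capped by a direct application of the already-established Lemma \ref{lem: f}.
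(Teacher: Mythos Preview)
Your proposal is correct and follows essentially the same approach as the paper: factor $X_m$ via the independence of $\epsilon_m$ and $\bm{z}_m$ to get $\mathbb{E}\{X_m\}=0$ and $\mathbb{V}\{X_m\}=\sigma^2\,\mathbb{E}\{w_m(\bm{c}_{ij})^2\}$, then write the remaining expectation as $\int K((\bm{x}-\bm{c}_{ij})/b)^2 f(\bm{x})\,d\bm{x}$ and apply Lemma~\ref{lem: f}. Your explicit remark that the expectation here is joint over $(\bm{z}_m,\epsilon_m)$, in contrast to the conditional-on-location expectations used in Theorems~\ref{thm1:0-th bv}\textendash\ref{thm2:1st bv}, is a helpful clarification the paper leaves implicit.
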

\begin{proof}
Since $\mathbb{E}\{\epsilon_{m}\}=0$ and $\epsilon_{m}$ is independent
of the term $w_{m}(\bm{c}_{ij})$, it follows that $\mathbb{E}\big\{ X_{m}\big\}=0.$

Moreover,

\begin{align}
\mathbb{V}\left\{ X_{m}\right\}  & =\mathbb{E}\left\{ K\Big(\frac{\bm{z}_{m}-\bm{c}_{ij}}{b}\Big)^{2}\epsilon_{m}^{2}\right\} \nonumber \\
 & =\mathbb{E}\left\{ K\Big(\frac{\bm{z}_{m}-\bm{c}_{ij}}{b}\Big)^{2}\right\} \mathbb{E}\big\{\epsilon_{m}^{2}\big\}\nonumber \\
 & =\sigma^{2}\mathbb{E}\left\{ K\Big(\frac{\bm{z}_{m}-\bm{c}_{ij}}{b}\Big)^{2}\right\} \nonumber \\
 & =\sigma^{2}\int K\Big(\frac{\bm{z}_{m}-\bm{c}_{ij}}{b}\Big)^{2}f(\bm{x})d\bm{x}\nonumber \\
 & =C_{1}\sigma^{2}b^{2}f(\bm{c}_{ij})+o(b^{3})\label{eq:V Xm}
\end{align}
where the last equation is from Lemma \ref{lem: f}.
\end{proof}
Define a random variable $Y_{m}=w_{m}(\bm{c}_{ij})(\rho(\bm{z}_{m})-\rho(\bm{c}_{ij}))$.
We show the property of $Y_{m}$ as follows.
\begin{lem}
\label{lem:Ym}The random variable $Y_{m}$ has mean $\mathbb{E}\left\{ Y_{m}\right\} =b^{4}C_{0}\big(\vartheta_{1}(\bm{c}_{ij})+\frac{1}{2}f(\bm{c}_{ij})\vartheta_{2}(\bm{c}_{ij})\big)+o(b^{4})$
and its variance satisfies $\mathbb{V}\left\{ Y_{m}\right\} /b^{3}\to0$
as $b\to0,$ where $\vartheta_{1}(\bm{c}_{ij})=\frac{\partial f(\bm{c}_{ij})}{\partial u_{x}}\frac{\partial\rho(\bm{c}_{ij})}{\partial u_{x}}+\frac{\partial f(\bm{c}_{ij})}{\partial u_{y}}\frac{\partial\rho(\bm{c}_{ij})}{\partial u_{y}}$
and $\vartheta_{2}(\bm{c}_{ij})=\frac{\partial^{2}\rho(\bm{c}_{ij})}{\partial u_{x}^{2}}+\frac{\partial^{2}\rho(\bm{c}_{ij})}{\partial u_{y}^{2}}$.
\end{lem}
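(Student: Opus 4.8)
The plan is to evaluate $\mathbb{E}\{Y_m\}$ as an integral against the sensor density and to control $\mathbb{V}\{Y_m\}$ by a crude bound on the second moment. Since the $\bm{z}_m$ are i.i.d.\ with density $f$ and $Y_m$ does not involve the noise $\epsilon_m$, I would first write $\mathbb{E}\{Y_m\}=\int K\big((\bm{x}-\bm{c}_{ij})/b\big)\big(\rho(\bm{x})-\rho(\bm{c}_{ij})\big)f(\bm{x})\,d\bm{x}$ and then substitute $\bm{u}=(\bm{x}-\bm{c}_{ij})/b$, exactly as in Lemma \ref{lem: f}, so that $d\bm{x}=b^2\,d\bm{u}$ and the compact support of $K$ confines the integral to $\|\bm{u}\|\le 1$, giving
\[
\mathbb{E}\{Y_m\}=b^2\int K(\bm{u})\big(\rho(b\bm{u}+\bm{c}_{ij})-\rho(\bm{c}_{ij})\big)f(b\bm{u}+\bm{c}_{ij})\,d\bm{u}.
\]
The key tool is then a joint Taylor expansion about $\bm{c}_{ij}$: third order for $\rho$ (valid since $\rho$ is three times differentiable in this section) and second order for $f$.

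Next I would multiply the two expansions and integrate term by term. The $O(b)$ part of the integrand is $f(\bm{c}_{ij})\,\nabla\rho(\bm{c}_{ij})^{\text{T}}\bm{u}$, whose $K$-integral vanishes because $K$ has zero first moment by symmetry. At order $b^2$ there are two contributions, namely $\frac{1}{2}f(\bm{c}_{ij})\,\bm{u}^{\text{T}}\nabla^2\rho(\bm{c}_{ij})\bm{u}$ and $\big(\nabla\rho(\bm{c}_{ij})^{\text{T}}\bm{u}\big)\big(\nabla f(\bm{c}_{ij})^{\text{T}}\bm{u}\big)$; integrating against $K$ and using the vanishing cross moment $\int K(\bm{u})u_xu_y\,d\bm{u}=0$ from \eqref{eq:integral 0} together with $\int K(\bm{u})u_x^2\,d\bm{u}=\int K(\bm{u})u_y^2\,d\bm{u}=C_0$ collapses these to $\frac{1}{2}C_0 f(\bm{c}_{ij})\vartheta_2(\bm{c}_{ij})$ and $C_0\vartheta_1(\bm{c}_{ij})$, respectively. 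Every monomial occurring in the $O(b^3)$ part is odd in at least one coordinate, so it integrates to zero by the symmetry of $K$, and the Taylor remainders contribute only $o(b^3)$. Multiplying the resulting integral, $b^2 C_0\big(\vartheta_1(\bm{c}_{ij})+\frac{1}{2}f(\bm{c}_{ij})\vartheta_2(\bm{c}_{ij})\big)+o(b^3)$, by the Jacobian factor $b^2$ yields the claimed expression for $\mathbb{E}\{Y_m\}$.

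For the variance I would simply use $\mathbb{V}\{Y_m\}\le\mathbb{E}\{Y_m^2\}$ and, after the same change of variables, write $\mathbb{E}\{Y_m^2\}=b^2\int K(\bm{u})^2\big(\rho(b\bm{u}+\bm{c}_{ij})-\rho(\bm{c}_{ij})\big)^2 f(b\bm{u}+\bm{c}_{ij})\,d\bm{u}$; on $\|\bm{u}\|\le 1$ we have $\rho(b\bm{u}+\bm{c}_{ij})-\rho(\bm{c}_{ij})=O(b)$ uniformly, $f$ is bounded, and $\int K(\bm{u})^2\,d\bm{u}=C_1<\infty$, so $\mathbb{E}\{Y_m^2\}=O(b^4)$ and hence $\mathbb{V}\{Y_m\}/b^3=O(b)\to 0$ as $b\to 0$. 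The only delicate point is the bookkeeping in the middle step: one must track precisely which kernel moments survive, with the cancellation of the $O(b)$ and $O(b^3)$ terms resting on the symmetry of $K$ and conditions \eqref{eq:integral 1} and \eqref{eq:integral 0}, and one must confirm that the Taylor remainders are genuinely $o(b^3)$ — which is exactly where third-order differentiability of $\rho$ and second-order differentiability of $f$ are used. Everything else is routine computation of the type already carried out in Lemma \ref{lem: f}.
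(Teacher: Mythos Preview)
Your proposal is correct and follows essentially the same route as the paper: the substitution $\bm{u}=(\bm{x}-\bm{c}_{ij})/b$, Taylor expansion of $\rho$ and $f$ about $\bm{c}_{ij}$, elimination of odd-order terms by the symmetry of $K$, and an $O(b^4)$ bound on $\mathbb{E}\{Y_m^2\}$ for the variance. The only cosmetic difference is that the paper stops at a second-order expansion of $\rho$ and a first-order expansion of $f$, which already yields the required $o(b^2)$ remainder in the integral (hence $o(b^4)$ after the Jacobian factor); your extra order of expansion and the attendant cancellation of the explicit $O(b^3)$ monomials are valid but not needed for the stated claim.
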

\begin{proof}
Let $\bm{u}=(\bm{x}-\bm{c}_{ij})/b$. We have 
\begin{equation}
\begin{aligned}\mathbb{E}\left\{ Y_{m}\right\}  & =\int K\Big(\frac{\bm{x}-\bm{c}_{ij}}{b}\Big)(\rho(\bm{x})-\rho(\bm{c}_{ij}))f(\bm{x})d\bm{x}\\
 & =b^{2}\int K(\bm{u})(\rho(\bm{c}_{ij}+b\bm{u})-\rho(\bm{c}_{ij}))f(\bm{c}_{ij}+b\bm{u})d\bm{u}.
\end{aligned}
\label{eq:E rho1}
\end{equation}

Consider the first order Taylor expansion of $f(\bm{c}_{ij}+b\bm{u})$
as (\ref{eq:f taylor}) and the second order Taylor expansion
\[
\rho(\bm{c}_{ij}+b\bm{u})=\rho(\bm{c}_{ij})+\nabla\rho(\bm{c}_{ij})^{\text{T}}b\bm{u}+\frac{1}{2}b^{2}\bm{u}^{\text{T}}\bm{\Psi}_{ij}\bm{u}+R_{\rho}(b\bm{u}).
\]
We have
\begin{align}
 & \int K(\bm{u})(\rho(\bm{c}_{ij}+b\bm{u})-\rho(\bm{c}_{ij}))f(\bm{c}_{ij}+b\bm{u})d\bm{u}\nonumber \\
 & =\int K(\bm{u})\big(\nabla\rho(\bm{c}_{ij})^{\text{T}}b\bm{u}+\frac{1}{2}b^{2}\bm{u}^{\text{T}}\bm{\Psi}_{ij}\bm{u}+R_{\rho}(b\bm{u})\big)\nonumber \\
 & \qquad\times\left(f(\bm{c}_{ij})+\nabla f(\bm{c}_{ij})^{\text{T}}b\bm{u}+R_{f}(b\bm{u})\right)d\bm{u}.\label{eq:expand long}
\end{align}
Recall that $K\left(\bm{u}\right)$ has a bounded support and the
terms $R_{f}(b\bm{u})/b$ and $R_{\rho}(b\bm{u})/b^{2}$ uniformly
converge to $0$ as $b\to0$ due to the Taylor expansion. Then, the
terms involving $R_{f}(b\bm{u})$ and $R_{\rho}(b\bm{u})$ from expanding
the product in (\ref{eq:expand long}) are $o(b^{2})$ due to Lemma
\ref{lem: f}. As a result, equation (\ref{eq:expand long}) simplifies
to
\begin{align}
 & b^{2}\int K(\bm{u})\nabla\rho(\bm{c}_{ij})^{\text{T}}\bm{u}\nabla f(\bm{c}_{ij})^{\text{T}}\bm{u}d\bm{u}\nonumber \\
 & \quad+\frac{b^{2}}{2}\int K(\bm{u})\bm{u}{}^{\text{T}}\bm{\Psi}_{ij}\bm{u}f(\bm{c}_{ij})d\bm{u}\nonumber \\
 & =b^{2}\int K(\bm{u})\bigg(\frac{\partial\rho(\bm{c}_{ij})}{\partial u_{x}}u_{x}+\frac{\partial\rho(\bm{c}_{ij})}{\partial u_{y}}u_{y}\bigg)\nonumber \\
 & \qquad\qquad\times\bigg(\frac{\partial f(\bm{c}_{ij})}{\partial u_{x}}u_{x}+\frac{\partial f(\bm{c}_{ij})}{\partial u_{y}}u_{y}\bigg)du_{x}du_{y}\nonumber \\
 & \quad+\frac{b^{2}}{2}\int K(\bm{u})\bigg(\frac{\partial^{2}\rho(\bm{c}_{ij})}{\partial u_{x}^{2}}u_{x}^{2}+2\frac{\partial^{2}\rho(\bm{c}_{ij})}{\partial u_{x}\partial u_{y}}u_{x}u_{y}\nonumber \\
 & \qquad\qquad+\frac{\partial^{2}\rho(\bm{c}_{ij})}{\partial u_{y}^{2}}u_{y}^{2}\bigg)f(\bm{c}_{ij})du_{x}du_{y}+o(b^{2})\nonumber \\
 & =b^{2}C_{0}\Bigg\{\frac{\partial f(\bm{c}_{ij})}{\partial u_{x}}\frac{\partial\rho(\bm{c}_{ij})}{\partial u_{x}}+\frac{\partial f(\bm{c}_{ij})}{\partial u_{y}}\frac{\partial\rho(\bm{c}_{ij})}{\partial u_{y}}\nonumber \\
 & \qquad\qquad+\frac{1}{2}\frac{\partial^{2}\rho(\bm{c}_{ij})}{\partial u_{x}^{2}}f(\bm{c}_{ij})+\frac{1}{2}\frac{\partial^{2}\rho(\bm{c}_{ij})}{\partial u_{y}^{2}}f(\bm{c}_{ij})\Bigg\}+o(b^{2})\nonumber \\
 & =b^{2}C_{0}\big(\vartheta_{1}(\bm{c}_{ij})+\frac{1}{2}f(\bm{c}_{ij})\vartheta_{2}(\bm{c}_{ij})\big)+o(b^{2}).\label{eq:E ep1sol}
\end{align}

Thus, multiply (\ref{eq:E ep1sol}) with $b^{2}$, we obtain
\begin{equation}
\mathbb{E}\left\{ Y_{m}\right\} =b^{4}C_{0}\big(\vartheta_{1}(\bm{c}_{ij})+\frac{1}{2}f(\bm{c}_{ij})\vartheta_{2}(\bm{c}_{ij})\big)+o(b^{4}).\label{eq:E Ym}
\end{equation}

Similarly,
\begin{equation}
\begin{aligned} & \mathbb{V}\left\{ Y_{m}\right\} \\
 & =\mathbb{E}\left\{ Y_{m}^{2}\right\} -\mathbb{E}\left\{ Y_{m}\right\} ^{2}\\
 & =\int\Big(K\Big(\frac{\bm{x}-\bm{c}_{ij}}{b}\Big)(\rho(\bm{x})-\rho(\bm{c}_{ij}))\Big)^{2}f(\bm{x})d\bm{x}-\mathbb{E}\left\{ Y_{m}\right\} ^{2}\\
 & =b^{2}\int\big(K\left(\bm{u}\right)(\nabla\rho(\bm{c}_{ij})^{\text{T}}b\bm{u}+\frac{1}{2}b^{2}\bm{u}^{\text{T}}\bm{\Psi}_{ij}\bm{u}+R_{\rho}(b\bm{u}))\big)^{2}\\
 & \qquad\quad\times f(\bm{c}_{ij}+b\bm{u})d\bm{u}-\mathbb{E}\left\{ Y_{m}\right\} ^{2}\\
 & =b^{4}\int\big(K\left(\bm{u}\right)(\nabla\rho(\bm{c}_{ij})^{\text{T}}\bm{u}+\frac{1}{2}b\bm{u}^{\text{T}}\bm{\Psi}_{ij}\bm{u}+R_{\rho}(b\bm{u}))/b\big)^{2}\\
 & \qquad\quad\times f(\bm{c}_{ij}+b\bm{u})d\bm{u}-\mathbb{E}\left\{ Y_{m}\right\} ^{2}
\end{aligned}
\label{eq:v delta}
\end{equation}
and thus, $\mathbb{V}\left\{ Y_{m}\right\} /b^{3}\to0$, as $b\to0.$
\end{proof}
Next, it is observed that $\text{Cov}(X_{m},Y_{m})=0$. This is because
that $\mathbb{E}\{X_{m}Y_{m}\}=\mathbb{E}\{w_{m}^{2}(\bm{c}_{ij})(\rho(\bm{z}_{m})-\rho(\bm{c}_{ij}))\}\mathbb{E}\{\text{\ensuremath{\epsilon_{m}}}\}=0$
due to the independence between the zero mean noise $\epsilon_{m}$
and $w_{m}^{2}(\bm{c}_{ij})(\rho(\bm{z}_{m})-\rho(\bm{c}_{ij}))$.
As a result, $\text{Cov}(X_{m},Y_{m})=\mathbb{E}\{X_{m}Y_{m}\}-\mathbb{E}\{X_{m}\}\mathbb{E}\{Y_{m}\}=0$
due to $\mathbb{E}\{X_{m}\}=0$ from Lemma \ref{lem:Xm}. Therefore,
$\mathbb{V}\{X_{m}+Y_{m}\}=\mathbb{V}\{X_{m}\}+\mathbb{V}\{Y_{m}\}$.

In addition, from (\ref{eq:epsilon1}) and (\ref{eq:epsilon 2}),
it is observed that $\varepsilon_{1}(\bm{c}_{ij})+\varepsilon_{2}(\bm{c}_{ij})=\frac{1}{b^{2}}\frac{1}{M}\sum_{m=1}^{M}(X_{m}+Y_{m})$
where the variable $Z_{m}\triangleq X_{m}+Y_{m}$ is i.i.d., since
both $\bm{z}_{m}$ and $\epsilon_{m}$ are i.i.d.. As a result, the
law of large number implies $b^{2}(\varepsilon_{1}(\bm{c}_{ij})+\varepsilon_{2}(\bm{c}_{ij}))\overset{p}{\to}\mathbb{E}\{Z_{m}\}$
as $M\to\infty.$

In addition, the central limit theorem yields 
\begin{align*}
\sqrt{M}\Big(b^{2}(\varepsilon_{1}(\bm{c}_{ij})+\varepsilon_{2}(\bm{c}_{ij}))-\mathbb{E}\{Z_{m}\}\Big) & \overset{d}{\rightarrow}\mathcal{N}(0,\mathbb{V}\{Z_{m}\})
\end{align*}
as $M\to\infty$, where $\mathbb{E}\{Z_{m}\}=\mathbb{E}\{X_{m}\}+\mathbb{E}\{Y_{m}\}$
and $\mathbb{V}\{Z_{m}\}=\mathbb{V}\{X_{m}+Y_{m}\}=\mathbb{V}\{X_{m}\}+\mathbb{V}\{Y_{m}\}$
as shown before.

Finally, it is known that $\hat{f}(\bm{z})\overset{p}{\to}f(\bm{z})$
for every $\bm{z}$ \cite{Parzen:J62}. Then, using Slutsky's theorem
\cite{dasgupta2008asymptotic}, the mean 
\begin{equation}
\frac{b^{2}(\varepsilon_{1}(\bm{c}_{ij})+\varepsilon_{2}(\bm{c}_{ij}))}{\hat{f}(\bm{c}_{ij})}\overset{p}{\to}\frac{\mathbb{E}\{Z_{m}\}}{f(\bm{c}_{ij})}\label{eq:ep1+ep2}
\end{equation}
 for every $\bm{c}_{ij}$. Then, substituting $(\varepsilon_{1}(\bm{c}_{ij})+\varepsilon_{2}(\bm{c}_{ij}))/\hat{f}(\bm{c}_{ij})$
with $\xi_{ij}$ as in (\ref{eq:zeroth aysmp model}) and $\mathbb{E}\{Z_{m}\}$
with the mean of $X_{m}$ and $Y_{m}$ in Lemmas \ref{lem:Xm} and
\ref{lem:Ym}, the result (\ref{eq:asym 0 mean}) in Theorem \ref{thm:asymp 0th}
is obtained.

In addition, defining $\bar{\xi}_{ij}=\xi_{ij}-\mathbb{E}\{\xi_{ij}\}$,
it follows that 
\begin{equation}
\sqrt{Mb^{2}}\bar{\xi}_{ij}\stackrel{d}{\longrightarrow}\frac{\mathcal{N}\Big(0,\mathbb{V}\{Z_{m}\}\Big)}{f(\bm{c}_{ij})}\label{eq:variance xi}
\end{equation}
for every $\bm{c}_{ij}.$ Substituting the variance of $X_{m}$ and
$Y_{m}$ in Lemmas \ref{lem:Xm} and \ref{lem:Ym} to $\mathbb{V}\{Z_{m}\}$
in (\ref{eq:variance xi}), the result (\ref{eq:asym 0 distri}) in
Theorem~\ref{thm:asymp 0th} is obtained.

\bibliographystyle{IEEEtran}
\bibliography{IEEEabrv,StringDefinitions,JCgroup,ChenBibCV}
\begin{IEEEbiography}[{\includegraphics[width=1in,height=1.25in,clip,keepaspectratio]{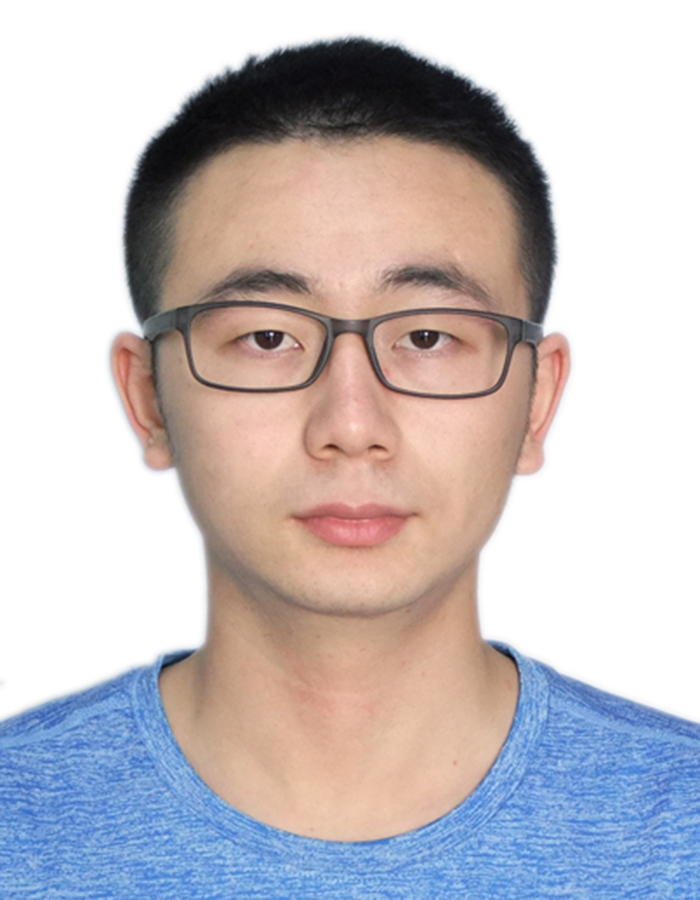}}]{Hao Sun} received the B.S.\ degree in biomedical engineering from the University of Electronic Science and Technology of China (UESTC), Chengdu, China, in 2018. He received two national scholarships during his undergraduate study. He is currently working toward the Ph.D.\ degree with the School of Science and Engineering and the Future Network of Intelligence Institute (FNii) at the Chinese University of Hong Kong, Shenzhen (CUHK-Shenzhen), Guangdong, China. He works on matrix completion and tensor decomposition with application to radio map reconstruction and source localization.

\end{IEEEbiography} 
% ---------------------------------------------------------
\begin{IEEEbiography}[{\includegraphics[width=1in,height=1.25in,clip,keepaspectratio]{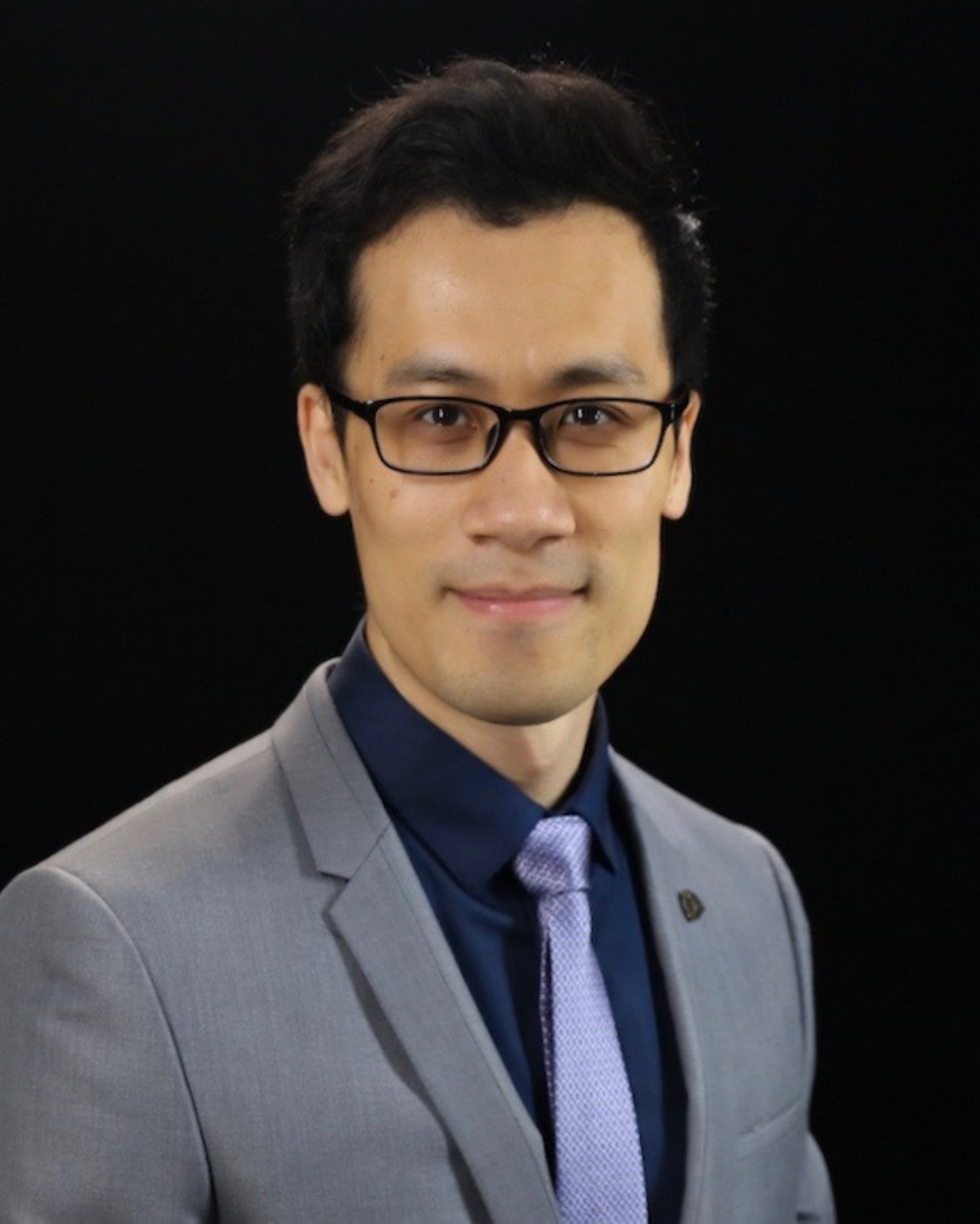}}]{Junting Chen} (S'11--M'16) received the Ph.D.\ degree in Electronic and Computer Engineering from the Hong Kong University of Science and Technology (HKUST), Hong Kong SAR China, in 2015, and the B.Sc.\ degree in Electronic Engineering from Nanjing University, Nanjing, China, in 2009. From 2014--2015, he was a visiting student with the Wireless Information and Network Sciences Laboratory at MIT, Cambridge, MA, USA.  

He is an Assistant Professor with the School of Science and Engineering and the Future Network of Intelligence Institute (FNii) at the Chinese University of Hong Kong, Shenzhen (CUHK-Shenzhen), Guangdong, China. Prior to joining CUHK-Shenzhen, he was a Postdoctoral Research Associate with the Ming Hsieh Department of Electrical Engineering, University of Southern California (USC), Los Angeles, CA, USA, from 2016--2018, and with the Communication Systems Department of EURECOM, Sophia-Antipolis, France, from 2015--2016. He was a recipient of the HKTIIT Post-Graduate Excellence Scholarships in 2012 from HKUST. He works on unimodal signal processing, radio map sensing, UAV assisted communications, and, more generally, machine learning and optimization for wireless communications and localization.

\end{IEEEbiography} 
% ---------------------------------------------------------

\end{document}